\newtheorem{theorem}{\textbf{Theorem}}
\newtheorem{lemma}{\textbf{Lemma}}
\newtheorem{corollary}{\textbf{Corollary}}
\theoremstyle{definition}
\newtheorem{example}{\textbf{Example}}
\newtheorem{problem}{\textbf{Problem}}
\newtheorem{remark}{\textbf{Remark}}
\newtheorem{definition}{\textbf{Definition}}
\let\P\relax
\DeclareMathOperator{\C}{\mathbb{C}}
\DeclareMathOperator{\M}{\mathbb{M}}
\DeclareMathOperator{\P}{\mathbb{P}}
\DeclareMathOperator{\G}{\mathcal{G}}
\DeclareMathOperator{\g}{\text{\c{g}}}
\DeclareMathOperator*{\pri}{F}
\DeclareMathOperator{\onm}{\overline{{\M}}}
\DeclareMathOperator*{\dis}{dis}
\DeclareMathOperator*{\argmax}{arg\,max}
\DeclareMathOperator*{\polylog}{polylog}
\title{On Misinformation Containment in \\Online Social Networks}
\author{
  Guangmo (Amo) Tong\\
  Department of Computer and Information Sciences\\
  University of Delaware\\
  \texttt{amotong@udel.edu} \\
  \And
  Weili Wu \\
  Department of Computer Science \\
  University of Texas at Dallas \\
  \texttt{weiliwu@utdallas.edu} \\
  \And
    Ding-Zhu Du \\
    Department of Computer Science \\
    University of Texas at Dallas \\
    \texttt{dzdu@utdallas.edu} \\
}
\begin{document}

\maketitle

\begin{abstract}
The widespread online misinformation could cause public panic and serious economic damages. The misinformation containment problem aims at limiting the spread of misinformation in online social networks by launching competing campaigns. Motivated by realistic scenarios, we present the first analysis of the misinformation containment problem for the case when an arbitrary number of cascades are allowed. This paper makes four contributions. First, we provide a formal model for multi-cascade diffusion and introduce an important concept called as cascade priority. Second, we show that the misinformation containment problem cannot be approximated within a factor of $\Omega(2^{\log^{1-\epsilon}n^4})$ in polynomial time unless $NP \subseteq DTIME(n^{\polylog{n}})$. Third, we introduce several types of cascade priority that are frequently seen in real social networks. Finally, we design novel algorithms for solving the misinformation containment problem. The effectiveness of the proposed algorithm is supported by encouraging experimental results.
\end{abstract}

\section{Introduction}
\label{sec: intro}
The past years have witnessed a drastic increase in the usage of online social networks. By the end of April 2018, there are totally 3.03 billion active social media users and each Internet user has an average of 7.6 social media accounts [24]. Despite allowing efficient exchange of information, online social networks have provided platforms for misinformation. Misinformation may lead to serious economic consequences and even cause panics. For example, it was reported by NDTV that the misinformation on social media led to Pune violence in January 2018.\footnote{https://www.ndtv.com/mumbai-news/misinformation-on-social-media-led-to-pune-violence-minister-1795562} Recently, the rapid spread of misinformation has been on the list of top global risks according to World Economic Forum \footnote{http://reports.weforum.org/global-risks-2018/digital-wildfires/}. Therefore, effective strategies on misinformation control are imperative.

Information propagates through social networks via cascades and each cascade starts to spread from certain seed users. When misinformation is detected, a feasible strategy is to launch counter campaigns competing against the misinformation [1]. Such counter campaigns are usually called as positive cascades. The misinformation containment (MC) problem aims at selecting seed users for positive cascades such that the misinformation can be effectively restrained. The existing works have considered this problem for the case when there is one misinformation cascade and one positive cascade [2, 3, 4]. In this paper, we address this problem for the general case when there are multiple misinformation cascades and positive cascades. The scenario considered in this paper is more realistic because there always exists multiple cascades concerning one issue or news in a real social network.
\begin{example}
\label{example: clinton}
In the 2016 US presidential election, the fake news that Hillary Clinton sold weapons to ISIS has been widely shared in online social networks. More than 20 articles spreading this fake news were discovered on Facebook in October 2016 [5]. While these articles all supported the fake news, they were spreading on Facebook as different information cascades because they had different sources and exhibited different levels of reliability. On the other hand, multiple articles aiming at correcting this fake news were being shared by the users standing for Hillary Clinton. These articles can be taken as the positive cascades and, again, they spread as individual cascades. The model proposed in this paper applies to such a scenario.
\end{example}

We introduce an important concept, called as \textit{cascade priority}, which defines how the users make selections when more than one cascades arrive at the same time. As shown later, the cascade priority is a necessary and critical setting when multiple cascades exist. The model proposed in this paper is a natural extension of the existing models, but the MC problem becomes very challenging under the new setting. For example, as shown later in Sec. \ref{sec: algortihm}, adding more seed nodes for the positive cascade may surprisingly cause a wider spread of misinformation, i.e., the objective function is not monotone nondecreasing. Our goal in this paper is to offer a systematic study, including formal model formulation, hardness analysis, and algorithm design. The contributions of this paper are summarized as follows. 

\begin{itemize}[leftmargin=0.75cm, labelwidth = 1em, labelsep=0.cm, align=left, itemsep=-0.0cm, font=\upshape]
\item We provide a formal model supporting multi-cascade influence diffusion in online social networks. To the best of our knowledge, we are the first to consider the issue on cascade priority. Based on the proposed model, we study the MC problem by formulating it as a combinatorial optimization problem.

\item We prove that the MC problem under the general model cannot be approximated within a factor of $\Omega(2^{\log^{1-\epsilon}n^4})$ in polynomial time unless $NP \subseteq DTIME(n^{\polylog n})$.\footnote{When there is only one misinformation cascade and one positive cascade, this problem can be approximated within a factor of $1-1/e$ [2, 3, 4]. Informally, the complexity class $DTIME(f(n))$ consists of the decision problems that can be solved in $O(f(n))$.}

\item We propose and study three types of cascade priorities, \textit{homogeneous cascade priority},  \textit{M-dominant cascade priority}, and \textit{P-dominant cascade priority}. These special cascade priorities are commonly seen in real social networks, and the MC problem enjoys desirable combinatorial properties under these settings.

\item We design a novel algorithm for the MC problem by using nontrivial upper bound and lower bound. As shown in the experiments, the proposed algorithm outperforms other methods and it admits a near-constant data-dependent approximation ratio on all the considered datasets.
\end{itemize}

\section{Related work.} 

\textbf{Influence maximization (IM).} The influence maximization (IM) problem is proposed by Kempe, Kleinberg, and Tardos in [6] where the authors also develop two basic diffusion models, independent cascade (IC) model and linear threshold (LT) model. It is shown in [6] that the IM problem is actually a submodular maximization problem and therefore the greedy scheme provides a $(1-1/e)$-approximation. However, Chen \textit{et al.} in [7] prove that it is \#P-hard to compute the influence and the naive greedy algorithm is not scalable to large datasets. One breakthrough is made by C. Borgs \textit{et al.} [8] who invent the reverse sampling technique and design an efficient algorithm. This technique is later improved by Tang \textit{et al.} [9] and Nguyen \textit{et al.} [10]. Recently, Li \textit{et al.} [18] study the IM problem under non-submodular threshold functions and Lynn \textit{et al.} [19] consider the IM problem under the Ising network. For the continuous-time generative model, N. Du \textit{et al.} [30] propose a scalable influence estimation method and then study the IM problem under the continuous setting.

\textbf{Misinformation containment (MC)}. Based on the IC and LT model or their variants, the MC problem is then proposed and extensively studied. Budak \textit{et al.} [2] consider the independent cascade model and show that the MC problem is again a submodular maximization problem when there are two cascades. Tong \textit{et al.} [4] design an efficient algorithm by utilizing the reverse sampling technique. He \textit{et al.} [3], Fan \textit{et al.} [11] and Zhang \textit{et al.} [12] study the MC problem under competitive linear threshold model. Nguyen \textit{et al.} [13] propose the IT-Node Protector problem which limits the spread of misinformation by blocking the high influential nodes. Different from the existing works, we focus on the general case when more than two cascades are allowed. In other contexts, He \textit{et al.} [20] study the MC problem in mobile social networks and Wang \textit{et al.} [21] study the MC problem with the consideration of user experience. Mehrdad \textit{et al.} [28] consider a point process network activity model and study the fake news mitigation problem by reinforcement learning. Recently, a comprehensive survey [29] regarding false information is provided by Srijan \textit{et al}.

\section{Model and problem formulation}
In this section, we formally formulate the diffusion model and the MC problem.

\subsection{Model}
A social network is given by a directed graph $G=(V, E)$. For each edge $(u,v)$, we say $v$ is an out-neighbor of $u$, and $u$ is an in-neighbor of $v$. Information is assumed to spread via cascades and each cascade spreads from seed users. Let $\C$ be the set of all the cascades, and we use $\tau(C) \subseteq V$ to denote the seed set of a cascade $C \in \C$.  We say a user is $C$-active if they are activated by cascade $C$. All users are initially defined as $\emptyset$-active.  Associated with each edge $(u,v)$, there is a real number $p_{(u,v)} \in  [0,1]$ denoting the propagation probability from $u$ to $v$. We assume that $p_{(u,v)}=0$ iff  $(u,v) \notin E$. When $u$ becomes $C$-active for a certain cascade $C \in \C$, they attempt once to activate an $\emptyset$-active out-neighbor $v$ with the success probability of $p_{(u,v)}$. We assume that a user will be activated by the cascade arriving first and will not be activated later for another time. Associated with each user $v$, each cascade $C$ is given a unique priority, denoted by $\pri_v(C)$, which gives a linear order over the cascades. ${\pri}_v$ can be represented as a bijection between $\C$ and $\{1,2,...,|\C|\}$, and, for each $C_1, C_2 \in C$, ${\pri}_v(C_1) > {\pri}_v(C_2)$ iff $C_1$ has a higher priority than that of $C_2$ at $v$. If two or more cascades reach $v$ at the same time, $v$ will be activated by the cascade with the highest priority. The cascade priority at each node is affected by many factors such as the reputation of the source, the reliability of the message and the user's personal opinion. 

For a time step $t \in \{0, 1, 2, ...\}$, we use $\pi_t(v) \in \C \cup \{\emptyset\}$ to denote the activation state of a user $v$ after time step $t$, where $\pi_t(v)=C$ (resp. $\pi_t(v)=\emptyset$) if $v$ is $C$-active (resp. $\emptyset$-active). Let $\pi_{\infty}(v)$ be the activation state of $v$ when the diffusion process terminates. The diffusion process unfolds stochastically in discrete, described as follows:

\begin{itemize}[leftmargin=0.75cm, labelwidth = 1em, labelsep=0.cm, align=left, itemsep=-0.0cm, font=\upshape, topsep=0pt]
\item{Time step 0.} If a node $v$ is selected as a seed node by one or more cascades, $v$ becomes $C^*$-active where $C^*=\argmax_{C \in \{C|C\in \C, v \in \tau(C)\}} {\pri}_v(C)$ . 
\item{Time step t.} Each node $u$ activated at time step $t-1$ attempts to activate each of $u$'s $\emptyset$-active out-neighbor $v$ with a success probability of $p_{(u,v)}$. If a node $v$ is successfully activated by one or more in-neighbors, $v$ becomes $\pi_{t-1}(u^*)$-active where $u^*=\argmax_{u \in A \subseteq V} {\pri}_v(\pi_{t-1}(u))$ where $A$ is the set of the in-neighbors who successfully activate $v$ at time step $t$.\footnote{Note that here $\pi_{t-1}(u)$ cannot be $\emptyset$ so ${\pri}_v(\pi_{t-1}(u))$ is well-defined.}
\end{itemize}

\begin{figure}[t]
\begin{center}
\includegraphics[width=4.6in]{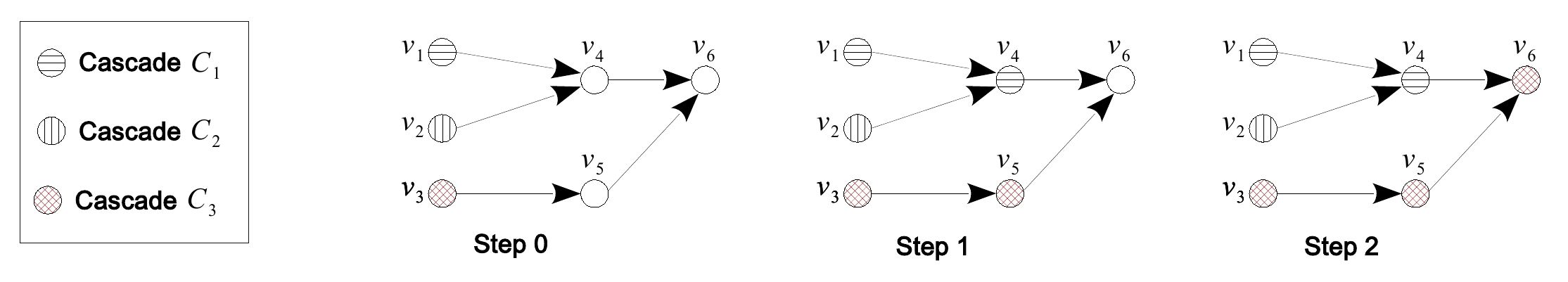} 
\end{center} 
\vspace{-0mm}
\caption{An illustrative example of diffusion process.}
\label{fig: process}
\vspace{-3mm}
\end{figure}

\begin{example}
\label{example: 1}
Consider the network shown in Fig. \ref{fig: process} where there are three cascades $C_1$, $C_2$ and $C_3$, of which the seed sets are $\{v_1\}$, $\{v_2\}$ and $\{v_3\}$, respectively. Suppose that $p_{e}=1$ for each edge $e$, ${\pri}_{v_4}(C_1) > {\pri}_{v_4}(C_2)$, and, ${\pri}_{v_6}(C_2)> {\pri}_{v_6}(C_3) > {\pri}_{v_6}(C_1)$. At time step 1, $v_4$ becomes $C_1$-active due to that ${\pri}_{v_4}(C_1) > {\pri}_{v_4}(C_2)$. Because ${\pri}_{v_6}(C_3) > {\pri}_{v_6}(C_1)$, $v_6$ is finally $C_3$-active. One can see that $v_6$ would be $C_2$-active if the cascade priority at $v_4$ was ${\pri}_{v_4}(C_2) > {\pri}_{v_4}(C_1)$.
\end{example}

\subsection{Problem formulation}

We assume that, regarding one issue or topic, there are two groups of cascades: misinformation cascades and positive cascades. Suppose there are already some cascades in the network and their seed sets are known to us. For the purpose of misinformation containment, we launch a new positive cascade with a certain seed set. We use $\M$ and $\P$ to denote the sets of the existing misinformation cascades and positive cascades, respectively, and use $P_{*}$ to denote the newly introduced positive cascade. Therefore, $\C=\M \cup \P \cup \{P_{*}\}$. We say a user is $\M$-active if they are $M$-active for some $M \in \M$, otherwise they are called as $\overline{\M}$-active.\footnote{Note that an $\emptyset$-active node is $\overline{\M}$-active.} For a seed set $\tau(P_{*})$ of cascade $P_*$, we use $f_{\M}(\tau(P_{*}))$ (resp. $f_{\overline{\M}}(\tau(P_{*}))$) to denote the expected number of the $\M$-active (resp. $\overline{\M}$-active) nodes when the diffusion process terminates. The problems considered in this paper are shown as follows. 

\begin{problem}[Min-$\M$ problem]
Given a budget $k \in \mathbb{Z}^+$ and a candidate set $V^* \subseteq V$, select a seed set $\tau(P_{*}) \subseteq V^*$ for $P_*$ with $|\tau(P_{*})| \leq k$ such that $f_{\M}(\tau(P_{*}))$ is minimized.
\end{problem}

Alternatively, we can maximize the number of the $\overline{\M}$-active users.

\begin{problem}[Max-$\overline{\M}$ problem]
\label{problem: max}
Given a budget $k \in \mathbb{Z}^+$ and a candidate $V^* \subseteq V$, select a seed set $\tau(P_{*}) \subseteq V^*$ for $P_*$ with $|\tau(P_{*})| \leq k$ such that $f_{\onm}(\tau(P_{*}))$ is maximized.
\end{problem}

An instance of the above problems is given by (1) $G=(V,E)$: a network structure; (2) $\{p_e | p_{e}\in[0,1], e \in E\}$: the probabilities on the edges; (3) $\C=\M \cup \P \cup\{P_*\}$: the set of the existing cascades together with $P_*$; (4) $\{\pri_v(C) | v\in V, C \in \C\}$: the cascade priority at each node; (5) $\{\tau(C) \subseteq V| C \in \M\cup \P\}$: the seed sets of the existing cascades; (6) $V^* \subseteq V$: a candidate set of the seed nodes of $P_*$. The propagation probability and the cascade priority can be inferred by mining historical data [25, 26, 27]. 

\begin{remark}
\label{remark: reduction}
When $|\M|=1$, it becomes the model considered in [14]. When $|\M|=1$, $|\P|=0$ and the cascade priority is homogeneous\footnote{The definition of homogeneous cascade priority is given later in Sec. \ref{sec: algortihm}.}, the problem considered in [2, 4] reduces to the Max-$\overline{\M}$ problem. 
\end{remark}

\section{Hardness result}
\label{sec: hardness}
In this section, we provide a hardness result for the Min-$\M$ problem. The result is obtained by a reduction from the positive-negative partial set cover ($\pm$PSC) problem. 

\begin{problem}[$\pm$PSC problem]
An instance of $\pm$PSC is a triplet $(X,Y,\Phi)$ where $X$ and $Y$ are two sets of elements with $X \cap Y=\emptyset$, and $\Phi=\{\phi_1,...,\phi_m\} \subseteq 2^{X \cup Y}$ is collection of subsets over $X \cup Y$. For each $\Phi^* \subseteq \Phi$, its cost is defined as $|X\setminus (\cup_{\phi \in \Phi^*} \phi)|+|Y\cap (\cup_{\phi \in \Phi^*} \phi)|.$ The $\pm$PSC problem seeks for a $\Phi^* \subseteq \Phi$ with the minimum cost.
\end{problem}

The following result is presented by Miettinen [15].
\begin{lemma} [{[15]}]
\label{lemma: [15]}
There exists no polynomial-time approximation algorithm for $\pm$PSC with an approximation factor of $\Omega(2^{\log^{1-\epsilon}|m|^4})$ for any $\epsilon>0$, unless $NP \subseteq DTIME(n^{\polylog n})$.
\end{lemma}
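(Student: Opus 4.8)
Since this lemma is quoted verbatim from Miettinen [15], the natural way to re-derive it is a gap-preserving reduction \emph{from} a problem whose $2^{\log^{1-\epsilon}n}$-inapproximability is already established through the label-cover / PCP machinery. The cleanest choice is the Red-Blue Set Cover (RBSC) problem: given red elements $R$, blue elements $B$, and a family $\mathcal{S}\subseteq 2^{R\cup B}$, cover every blue element while minimizing the number of covered red elements. RBSC is known to admit no polynomial-time $\Omega(2^{\log^{1-\epsilon}n})$-approximation unless $NP\subseteq DTIME(n^{\polylog n})$. The plan is to transport this hardness onto $\pm$PSC by identifying the positive elements $X$ with the blue elements, the negative elements $Y$ with the red elements, and $\Phi$ with $\mathcal{S}$.

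The key step is to reconcile the two cost structures: RBSC forbids leaving any blue element uncovered, whereas $\pm$PSC only charges a unit penalty $|X\setminus(\cup_{\phi\in\Phi^*}\phi)|$ for each uncovered positive. I would remove this discrepancy by \emph{amplification}: replace each blue element $b$ with $N$ identical copies, all inserted into exactly the sets of $\mathcal{S}$ that originally contained $b$, leaving the set family otherwise unchanged (so that $m=|\Phi|=|\mathcal{S}|$ is preserved). Failing to cover $b$ now costs $N$ rather than $1$, while the covered-negative term never exceeds $|R|$. Choosing $N$ to be a sufficiently large polynomial forces every near-optimal $\pm$PSC solution to cover all positives, at which point $|X\setminus\cup|=0$ and the $\pm$PSC cost collapses to exactly the number of covered red elements, i.e.\ the RBSC objective; hence $\mathrm{OPT}_{\pm\mathrm{PSC}}=\mathrm{OPT}_{\mathrm{RBSC}}$.

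It then remains to transfer the gap. Because the targeted factor $\alpha=2^{\log^{1-\epsilon}m^4}$ is sub-polynomial, $\alpha=n^{o(1)}$, taking $N>\alpha\,|R|$ is still polynomial, so the reduction runs in polynomial time. Any hypothetical $\alpha$-approximation for $\pm$PSC returns a solution of cost at most $\alpha\cdot\mathrm{OPT}_{\pm\mathrm{PSC}}<N$, which therefore must cover all blue elements and thus constitutes a genuine RBSC solution with covered-red count at most $\alpha\cdot\mathrm{OPT}_{\mathrm{RBSC}}$. Since the instance blows up only polynomially and $\log^{1-\epsilon}$ is stable under polynomial changes of the size variable (a constant factor in the exponent base is absorbed into a slightly smaller $\epsilon$), this contradicts RBSC hardness and hence $NP\not\subseteq DTIME(n^{\polylog n})$, giving the stated bound in terms of $m=|\Phi|$.

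The main obstacle I anticipate is controlling the interaction between the additive, symmetric cost of $\pm$PSC and the hard feasibility constraint of RBSC without diluting the multiplicative gap: the amplification parameter $N$ must be large enough to force full blue coverage for every solution inside the approximation window, yet small enough to keep the reduction polynomial. A related subtlety is the degenerate regime in which $\mathrm{OPT}_{\mathrm{RBSC}}$ can be $0$ (a blue cover touching no red), where a pure multiplicative ratio is vacuous. I would resolve this by invoking the RBSC hardness in its \emph{gap} form---distinguishing instances of small optimum from those of large optimum---rather than as a ratio statement, so that the copied-element penalty cleanly separates the two cases and the gap survives the reduction.
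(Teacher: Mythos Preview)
The paper does not prove this lemma; it simply quotes it from Miettinen~[15] as a black box. Your sketch therefore goes beyond what the paper does, and it is essentially a faithful reconstruction of Miettinen's argument: the hardness of $\pm$PSC is inherited from Red--Blue Set Cover via the identification $X\leftrightarrow$ blue, $Y\leftrightarrow$ red, $\Phi\leftrightarrow\mathcal{S}$.

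Your amplification device (cloning each blue element $N$ times to force full positive coverage inside any solution within the approximation window) is a clean and correct way to bridge the gap between RBSC's hard feasibility constraint and $\pm$PSC's soft unit-penalty. The bookkeeping you give --- $N>\alpha\,|R|$ is polynomial because $\alpha=n^{o(1)}$, $m=|\Phi|$ is unchanged, and a polynomial blow-up in the element universe is absorbed by shrinking $\epsilon$ --- is sound. Miettinen's own write-up handles the feasibility mismatch somewhat differently (he relates the two optima more directly rather than through duplication), but the two arguments are interchangeable for the purpose of this lemma. Your caveat about $\mathrm{OPT}_{\mathrm{RBSC}}=0$ and the need to state the source hardness as a gap rather than a ratio is exactly right and is how the literature handles it.
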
 

A core result is given in the next lemma.
\begin{lemma}
\label{lemma: hardness}
For any $\alpha(|V^*|)>1$, $\pm$PSC is approximable to within a factor of $4\cdot \alpha(m)-3$, if Min-${\M}$ is approximable to within a factor of $\alpha(|V^*|)$.
\end{lemma}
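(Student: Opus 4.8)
The plan is to build, from an arbitrary $\pm$PSC instance $(X,Y,\Phi)$, a Min-$\M$ instance whose optimal value is an affine function of the $\pm$PSC cost, so that an $\alpha(|V^*|)$-approximation for Min-$\M$ yields a $(4\alpha(m)-3)$-approximation for $\pm$PSC. First I would set up the network. I would create one node $u_\phi$ for each set $\phi \in \Phi$ and let $V^* = \{u_\phi : \phi \in \Phi\}$, so choosing a seed set $\tau(P_*)\subseteq V^*$ of $P_*$ corresponds to choosing a subcollection $\Phi^*\subseteq\Phi$; I would give $P_*$ budget $k = m$ so the budget is never the binding constraint. For each element $x\in X\cup Y$ I would create an ``element gadget'': a small subgraph whose final state will encode whether $x$ is covered by $\cup_{\phi\in\Phi^*}\phi$. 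I would put an edge (with probability $1$) from $u_\phi$ to the gadget of $x$ whenever $x\in\phi$, and I would seed a single misinformation cascade $M$ (so $|\M|=1$) at sources that feed every element gadget, arranging the arrival times/priorities so that: for $x\in X$, the gadget ends up $\M$-active unless some chosen $u_\phi$ with $x\in\phi$ reaches it and blocks $M$ (this charges the ``uncovered $X$'' term $|X\setminus\cup\phi|$); and for $y\in Y$, the gadget ends up $\M$-active \emph{only if} some chosen $u_\phi$ with $y\in\phi$ reaches it — i.e. covering $y$ turns it $\M$-active (this charges the ``covered $Y$'' term $|Y\cap\cup\phi|$). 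Using the freedom in $\pri_v$ across the multiple cascades, the two element-types get opposite ``polarities'': this is exactly the non-monotonicity the introduction advertises, and it is what lets a single objective $f_\M$ simultaneously penalize under-covering $X$ and over-covering $Y$.

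With such a gadget construction, the deterministic diffusion ($p_e\in\{0,1\}$, so I would actually take all relevant $p_e=1$ and the process is deterministic) gives, for every $\Phi^*\subseteq\Phi$ with seeds $\tau(P_*)=\{u_\phi:\phi\in\Phi^*\}$,
\begin{equation}
f_{\M}(\tau(P_*)) \;=\; c\cdot\big(|X\setminus(\cup_{\phi\in\Phi^*}\phi)|+|Y\cap(\cup_{\phi\in\Phi^*}\phi)|\big) \;+\; d,
\end{equation}
for suitable nonnegative constants $c,d$ depending only on the gadget sizes. The ideal situation is $d=0$ and $c=1$ (each uncovered $X$-element and each covered $Y$-element contributes exactly one extra $\M$-active node), in which case Min-$\M$ and $\pm$PSC have literally the same optimum and the same solution quality, and an $\alpha$-approximation transfers with factor $\alpha$ — better than claimed. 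The factor $4\alpha(m)-3$ in the statement tells me the honest construction cannot be that clean: there must be an unavoidable additive ``background'' of $\M$-active nodes (the $d$ term), or a multiplicative blow-up (the $c$ term), coming from the gadget machinery — e.g. each gadget itself contains a fixed number of nodes that are always $\M$-active, or each covered/uncovered element costs a bounded number of nodes rather than one. I would make $c$ and $d$ explicit from the gadget and then do the arithmetic: if $\mathrm{OPT}_{\pm\mathrm{PSC}}$ is the optimal cost and a Min-$\M$ solution has value $\le \alpha(|V^*|)\cdot(c\cdot\mathrm{OPT}_{\pm\mathrm{PSC}}+d)$, translating back gives a $\pm$PSC cost $\le \alpha(|V^*|)\cdot\mathrm{OPT}_{\pm\mathrm{PSC}} + \frac{(\alpha(|V^*|)-1)d}{c}$, and I would choose the gadget constants so that $d \le 3c\cdot\mathrm{OPT}_{\pm\mathrm{PSC}}$ in the worst case (or, more likely, $d$ and $c$ are chosen so the bound collapses to exactly $4\alpha-3$ after noting $\mathrm{OPT}_{\pm\mathrm{PSC}}\ge$ some explicit lower bound). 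Finally, since $|V^*|=m$, $\alpha(|V^*|)=\alpha(m)$, which matches the statement and lines up with Lemma~\ref{lemma: [15]} to give the advertised inapproximability of Min-$\M$.

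I expect the main obstacle to be designing the $Y$-gadget so that \emph{not} choosing any covering $u_\phi$ genuinely leaves it $\overline{\M}$-active while choosing one genuinely makes it $\M$-active, and simultaneously designing the $X$-gadget with the reverse behavior, \emph{using one and the same misinformation cascade and one seed placement}. The cascade-priority mechanism is the only lever here: I would need, at the $Y$-gadget, a ``helper'' positive cascade $P\in\P$ that normally wins at $y$ (keeping it $\overline{\M}$-active) but that a chosen $u_\phi$ preempts by arriving one step earlier on a path whose continuation is a higher-$M$-priority route — so paradoxically the positive action of selecting $u_\phi$ lets $M$ through. Getting the timing (path lengths) and the per-node priority orders mutually consistent across all $|X|+|Y|$ gadgets, while keeping the graph of polynomial size and keeping the arithmetic constants clean enough to yield precisely $4\alpha(m)-3$, is the delicate part; the rest is bookkeeping, and the reduction is clearly polynomial-time.
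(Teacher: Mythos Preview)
Your proposal is correct and follows essentially the same approach as the paper: the paper's construction is exactly the kind of gadget reduction you outline, with one helper positive cascade $P_1$ and the $Y$-gadget implemented via a two-node chain $y_i\to z_i$ where $P_*$ outranks $P_1$ at $y_i$ but $P_1>M_1>P_*$ at $z_i$, yielding $c=1$, $d=3$ in your notation. The final arithmetic is precisely your second option---use $g(\mathrm{OPT})\ge 1$ to turn the additive $3(\alpha-1)$ into the multiplicative $4\alpha-3$.
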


\begin{figure}[h]
\begin{center}
\includegraphics[width=4.6in]{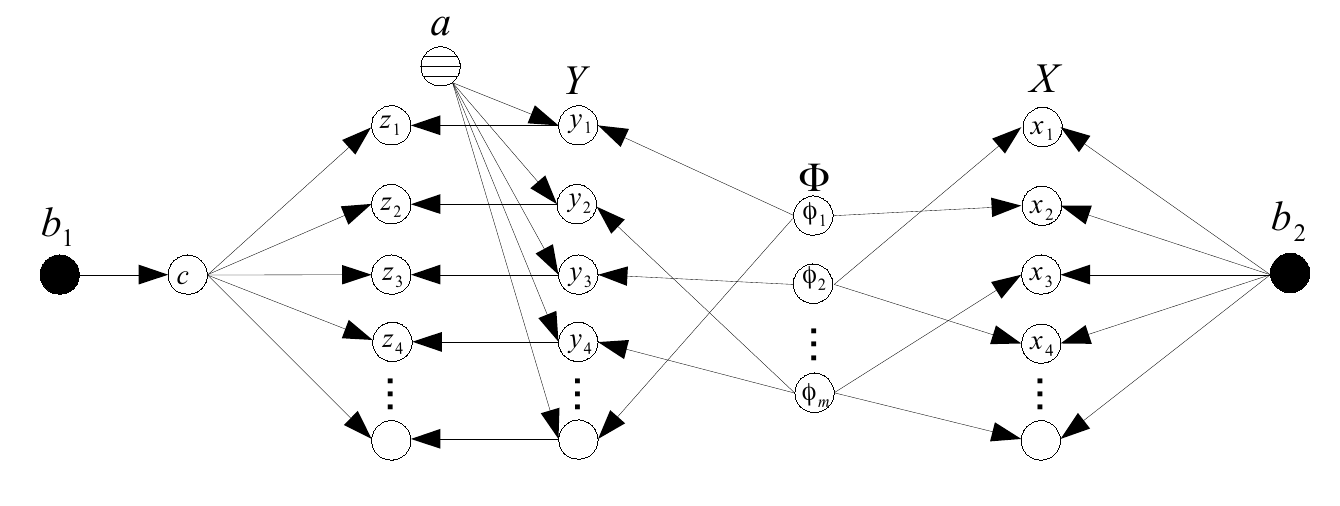} 
\end{center} 
\vspace{-5mm}
\caption{Reduction.}
\label{fig: reduction}
\vspace{-5mm}
\end{figure}

\begin{proof}
For an arbitrary instance $(X,Y,\Phi)$ of the $\pm$PSC problem, we construct an instance of the Min-$\M$ problem accordingly, as shown in Fig. \ref{fig: reduction}. 

\textbf{The graph.} Let us first construct the graph $G$. For each $x_i \in X$, we add a node $x_i$ to the graph, and for each $y_i \in Y$ we add two nodes $y_i$ and $z_i$ to the graph. For each $\phi_i \in \Phi$, we add a node $\phi_i$ to the graph. We further add four nodes $a$, $b_1$, $b_2$ and $c$, as shown in Fig. \ref{fig: reduction}. For each $\phi_i$ and $x_j$ (resp. $y_j$), we add an edge $(\phi_i, x_j)$ (resp. $(\phi_i, y_j)$) iff $x_j \in \phi_i$ (resp. $y_j \in \phi_i$). For each $z_i$, we add an edge $(y_i,z_i)$ and an edge $(c,z_i)$. We add an edge $(a,y_i)$ for each $y_i \in Y$ and an edge $(b_2, x_i)$ for each $x_i \in X$. Finally, we add an edge $(b_1, c)$. The probability of each edge is set as 1.

\textbf{Cascade setting.} We assume there is one misinformation cascade $M_1$ with the seed set $\{b_1,b_2\}$ and one positive cascade $P_1$ with the seed set $\{a\}$. We aim at introducing one positive cascade $P_*$ by selecting at most $k=m$ seed nodes from $V^*=\Phi=\{\phi_1,....,\phi_m\}$. For each $y_i \in \{y_1,...,y_m\}$, the cascade priority is set as $\pri_{y_i}(P_*)>\pri_{y_i}(P_1)$. For each node $z_i$ in $\{z_1,...,z_m\}$, the cascade priority is set as $\pri_{z_i}(P_1)>\pri_{z_i}(M_1)>\pri_{z_i}(P_*)$. The cascade priority at other nodes can be set arbitrarily.

\textbf{Analysis.} Each set $\Phi^*\subseteq \Phi$ corresponds to a solution to the Min-$\M$ problem. We use $g(\Phi^*)$ to denote the objective function of the $\pm$PSC problem, i.e., \[g(\Phi^*)=|X\setminus (\cup_{\phi \in \Phi^*} \phi)|+|Y\cap (\cup_{\phi \in \Phi^*} \phi)|.\] 
Now let us fix $\Phi^*$ and analyze the activation state of the nodes. Note that each node $y_i$ will be either $P_1$-active or $P_*$-active. In particular, $y_i$ is $P_*$ active iff $y_i$ is in some $\phi \in \Phi^*$. Furthermore, according the cascade priority at $z_i$, $z_i$ is $\overline{\M}$-active iff $y_i$ is $P_1$-active. Therefore, $z_i$ is $\overline{\M}$-active iff $y_i$ is not in $\cup_{\phi \in \Phi^*} \phi$. For each node $x_i \in X$, it is  $\overline{\M}$-active iff it is in some $\phi \in \Phi^*$. Finally, it can be easily checked that the nodes in $\Phi \cup Y \cup \{a\}$ will be $\overline{\M}$-active and the nodes in  $\{c,b_1,b_2\}$ will be ${\M}$-active, regardless of $\Phi^*$. As a result, 
\begin{eqnarray*}
f_{\M}(\Phi^*)= 3+|X \setminus \cup_{\phi \in \Phi^*} \phi|+|Y \cap \cup_{\phi \in \Phi^*} \phi|=3+ g(\Phi^*).
\end{eqnarray*}

Thus, $OPT \subseteq \Phi$ is an optimal solution to the MC instance iff $OPT$ is an optimal solution to the instance of the $\pm$PSC problem. Suppose that $\Phi^*$ is an $\alpha(|V^*|)$-approximation to the Min-$\M$ problem for some  $\alpha(|V^*|) > 1$. We have 
\begin{eqnarray*}
& &f_{\M}(\Phi^*) \leq \alpha(|V^*|)  \cdot f_{\M}(OPT)\iff3+ g(\Phi^*) \leq \alpha(|V^*|) \cdot  (3+ g(OPT))\\
&\iff& \dfrac{g(\Phi^*)}{g(OPT)} \leq \alpha(|V^*|) +\frac{3(\alpha(|V^*|)-1)}{g(OPT)}
\implies \dfrac{g(\Phi^*)}{g(OPT)} \leq 4 \alpha(|V^*|) -3.
\end{eqnarray*}
Since $|V^*|=|\Phi|=m$, $\Phi^*$ is a $(4 \cdot \alpha(m) -3)$-approximation to the instance of the $\pm$PSC problem.
\end{proof}

The following result follows immediately from Lemmas \ref{lemma: [15]} and \ref{lemma: hardness}.
\begin{theorem}
\label{theorem: hardness}
For any $\epsilon>0$, there is no polynomial-time approximation algorithm for the Min-${\M}$ problem with an approximation factor of $\Omega(2^{\log^{1-\epsilon}|V^*|^4})$ unless $NP \subseteq DTIME(n^{\polylog n})$. 
\end{theorem}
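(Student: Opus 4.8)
The plan is to obtain Theorem \ref{theorem: hardness} as an immediate contrapositive combination of the two preceding lemmas, so the argument is short. I would proceed by contradiction: suppose that for some $\epsilon>0$ there is a polynomial-time algorithm $\mathcal{A}$ for the Min-$\M$ problem whose approximation factor $\alpha(|V^*|)$ satisfies $\alpha(|V^*|)=O\!\left(2^{\log^{1-\epsilon}|V^*|^4}\right)$. In particular $\alpha(n)\to\infty$, so $\alpha(n)>1$ for all large $n$ and the hypothesis of Lemma \ref{lemma: hardness} is met.

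Next I would invoke Lemma \ref{lemma: hardness} with this $\alpha$. Given an arbitrary instance $(X,Y,\Phi)$ of $\pm$PSC with $|\Phi|=m$, the reduction in that lemma builds, in polynomial time (the graph has $O(|X|+|Y|+m)$ nodes and all edge probabilities are $1$, so no $\#$P-hard influence evaluation is needed), a Min-$\M$ instance whose candidate set $V^*$ has size exactly $m$; running $\mathcal{A}$ on it and returning the corresponding subfamily $\Phi^*\subseteq\Phi$ yields, via the algebra in the proof of Lemma \ref{lemma: hardness}, a solution with $g(\Phi^*)/g(OPT)\le 4\alpha(m)-3$ (taking $g(OPT)\ge 1$, as is standard for this hardness scale). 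Hence $\pm$PSC admits a polynomial-time approximation algorithm with ratio $4\alpha(m)-3$.

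It then remains to check that the affine blow-up $\alpha\mapsto 4\alpha-3$ does not escape Miettinen's bound. Since $\alpha(m)\to\infty$ we have $4\alpha(m)-3=\Theta(\alpha(m))=O\!\left(2^{\log^{1-\epsilon}m^4}\right)$, and absorbing the constant factor into the exponent, $4\cdot 2^{\log^{1-\epsilon}m^4}=2^{\,2+\log^{1-\epsilon}m^4}$, one gets $4\alpha(m)-3=O\!\left(2^{\log^{1-\epsilon'}m^4}\right)$ for any fixed $\epsilon'$ with $0<\epsilon'<\epsilon$, because $(\log m^4)^{1-\epsilon'}$ eventually dominates $2+(\log m^4)^{1-\epsilon}$. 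This contradicts Lemma \ref{lemma: [15]} applied with parameter $\epsilon'$, so no such $\mathcal{A}$ exists, proving the theorem. The only genuinely non-mechanical point is this last step — confirming that the constant-factor-plus-constant perturbation introduced by the reduction is harmless on the $2^{\log^{1-\epsilon}(\cdot)}$ inapproximability scale; everything else is a direct consequence of the cited results.
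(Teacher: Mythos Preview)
Your proposal is correct and follows the same approach as the paper: the paper's entire proof is the single sentence ``The following result follows immediately from Lemmas \ref{lemma: [15]} and \ref{lemma: hardness},'' so the combination of the two lemmas via contrapositive is exactly what is intended. Your additional bookkeeping verifying that the affine factor $4\alpha(m)-3$ stays within the $2^{\log^{1-\epsilon'}m^4}$ regime (by passing to a slightly smaller $\epsilon'$) is more care than the paper itself supplies, but it is the right observation and does not change the route.
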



\section{Algorithms}
\label{sec: algortihm}
In this section, we present algorithms for the Max-$\overline{\M}$ problem. Throughout this section, we denote the objective function $f_{\onm}$ as $f$. The technique of submodular maximization has been extensively used in the existing works. For a set function $h()$ over a ground set $U$, the properties of monotone nondecreasing and submodular are defined as follows:

\begin{definition}[Monotone nondecreasing]
$h(A) \leq h(B)$, for each $A\subseteq B \subseteq U$.
\end{definition} 

\begin{definition}[Submodular]
$h(A) + h(B) \geq h(A\cup B)+h(A \cap B)$, for each $A,B \subseteq U$.
\end{definition} 

\begin{figure}[t]
\begin{center}
\includegraphics[width=4.6in]{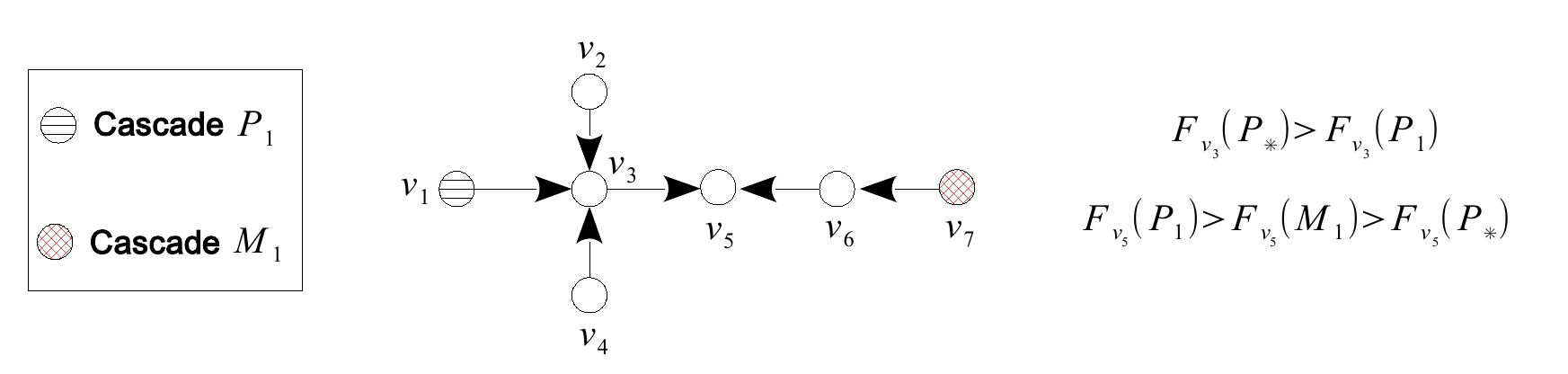} 
\end{center} 
\vspace{-3mm}
\caption{An illustrative example of non-submodularity.}
\label{fig: property}
\vspace{-3mm}
\end{figure}

As mentioned in Remark \ref{remark: reduction}, the Max-$\overline{\M}$ problem is a natural extension of the problem considered in [2, 4], but it is not submodular and even not monotone nondecreasing.

\begin{example}
\label{example: property}
Consider the network shown in Fig. \ref{fig: property}, where there exists one positive cascade $P_1$ and one misinformation cascade $M_1$. Now we deploy a new positive cascade $P_{*}$ and assume the candidate seed set $V^*$ is equal to $V$. Suppose that the probability on each edge is equal to 1, $\tau(P_1)=\{v_1\}$ and $\tau(M_1)=\{v_7\}$, and the cascade priority at $v_3$ and $v_5$ is given as shown in the figure. We can observe that $f(\{\emptyset\})=5$, $f(\{v_2\})=f(\{v_4\})=f(\{v_2,v_4\})=4$. Therefore, 
$f(\{v_2\}) < f(\emptyset),$ and $f(\{v_2\}) +f(\{v_4\}) < f(\{v_2\} \cap \{v_4\}) +f(\{v_2\} \cup \{v_4\}).$ This illustrates that inappropriately selecting positive seed nodes may lead to a wider spread of misinformation.
\end{example}
In the rest of this section, we first study three special cascade priorities and then design an algorithm for the general setting.

\subsection{Special cases: homogeneous, M-dominant and P-dominant cascade priority}
We introduce the following types of cascade priority that frequently appear in real social networks.

\begin{definition}[Homogeneous cascade priority]
The cascade priority is said to be homogeneous if ${\pri}_v(C)={\pri}_u(C)$ for each $u, v \in V$ and $C \in \C$. That is, each cascade has the same priority at each node. 
\end{definition}

\begin{definition}[M-dominant cascade priority]
The cascade priority is said to be M-dominant if ${\pri}_v(M)>{\pri}_v(P)$ for each $M \in \M$, $P \in \P\cup\{P_*\}$ and $v \in V$. Informally speaking, at each node, the priority of each misinformation cascade is higher than that of any positive cascade. 
\end{definition}

Similarly, we have the P-dominant cascade priority.

\begin{definition}[P-dominant cascade priority]
The cascade priority is said to be P-dominant if ${\pri}_v(P)>{\pri}_v(M)$ for each $M \in \M$, $P \in \P\cup\{P_*\}$ and $v \in V$. 
\end{definition}

\begin{remark}
The homogeneous cascade priority is capable of representing the case when the priority of cascade is determined by the source or the initiator of the cascade. For example, when there are two opposite cascades $C_1$ and $C_2$ regarding NBA on Twitter, where $C_1$ is posted by ESPN while $C_2$ comes from an unknown source, the users will all tend to believe $C_1$ and therefore $\pri_v(C_1) > \pri_v(C_2)$ for each $v \in V$.  The M-dominant or P-dominant cascade priority describes the scenario when one group of the cascades are well polished and very convincing. For example, the fake news in Example \ref{example: clinton} was believed to be true by many online users because it was claimed to be released by WikiLeaks. As a result, the fake news always had a higher cascade priority and $\pri_v(M) > \pri_v(P)$ for each $M \in \M$, $P \in \P\cup\{P_*\}$ and $v \in V$.
\end{remark}

While the Max-$\overline{\M}$ problem does not exhibit any good property in general, it is indeed monotone nondecreasing and submodular under special cascade priority settings. For the above types of cascade priority, we have the following results.
\begin{theorem}
\label{theorem: M-P-dominant}
$f$ is monotone nondecreasing and submodular if the cascade priority is M-dominant or P-dominant.
\end{theorem}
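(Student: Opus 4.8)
The plan is to reduce to a fixed ``live-edge'' realization of the diffusion and then give an explicit shortest-path description of the terminal activation states, which turns $f$ (recall $f=f_{\onm}$) into an average of coverage functions. First, for each edge $(u,v)\in E$ declare it \emph{live} with probability $p_{(u,v)}$ and \emph{dead} otherwise, independently, and let $G_R$ be the subgraph of live edges; call this realization $R$. Since in the model a node attempts to activate each out-neighbour exactly once, the diffusion conditioned on $R$ is deterministic, and $f(S)=\E_R[f_R(S)]$ where $f_R(S)$ is the number of $\overline{\M}$-active nodes when $\tau(P_*)=S$ in realization $R$. A nonnegative combination of monotone nondecreasing submodular functions is again monotone nondecreasing and submodular, so it suffices to prove the statement for each fixed $f_R$.

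Next, the crucial step: a shortest-path characterization of the terminal states. Let $d_M(v)$ be the length of a shortest directed path in $G_R$ from $\bigcup_{M\in\M}\tau(M)$ to $v$ (with $d_M(v)=\infty$ if none exists), and let $d_P^S(v)$ be the length of a shortest directed path in $G_R$ from $\big(\bigcup_{P\in\P}\tau(P)\big)\cup S$ to $v$. I claim that every node $v$ is activated at time exactly $\min(d_M(v),d_P^S(v))$ (and stays $\emptyset$-active when both are $\infty$), and that it ends up $\M$-active iff $d_M(v)\le d_P^S(v)$ and $d_M(v)<\infty$ under M-dominance, respectively iff $d_M(v)<d_P^S(v)$ under P-dominance. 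I would prove this by induction on the activation time: a node that becomes $\M$-active at time $t$ satisfies $d_M(v)\le t$ (trace its activation back along live edges to a misinformation seed), and symmetrically for positive cascades, so nothing can reach $v$ before time $\min(d_M(v),d_P^S(v))$; conversely, following a shortest path realizing that minimum, the strict inequalities $d_M(u)\ge d_M(v)-1$ and $d_P^S(u)\ge d_P^S(v)-1$ for the predecessor $u$ force $u$ to be in the expected state at the expected time, and when the two frontiers meet $v$ simultaneously the tie breaks toward a misinformation (resp. positive) cascade exactly as M-dominance (resp. P-dominance) dictates. The hard part will be exactly this lemma: M-/P-dominance is essential because it makes the misinformation frontier the clean threshold set $\{v:d_M(v)\le d_P^S(v)\}$, whereas for a general priority an intermediate-priority cascade can block or reroute a shortest path and the distance formula fails — which is precisely the source of the non-monotonicity and non-submodularity exhibited in Example~\ref{example: property}.

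Finally, I would conclude as follows. Every node with $d_M(v)=\infty$ is $\overline{\M}$-active regardless of $S$, so $f_R(S)=c_R+g_R(S)$ where $c_R=|\{v:d_M(v)=\infty\}|$ is a constant and $g_R(S)=|\{v:d_M(v)<\infty \text{ and } v \text{ is } \overline{\M}\text{-active}\}|$. By the characterization, a node $v$ with $d_M(v)<\infty$ contributes to $g_R(S)$ iff some positive seed $w\in\big(\bigcup_{P\in\P}\tau(P)\big)\cup S$ has $d_{G_R}(w,v)<d_M(v)$ in the M-dominant case, or $d_{G_R}(w,v)\le d_M(v)$ in the P-dominant case. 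In either case $g_R$ is a maximum-coverage function of $S$ over the universe $\{v:d_M(v)<\infty\}$: adding the seed $w$ covers the fixed set $\{v:d_{G_R}(w,v)<d_M(v)\}$ (resp. $\le$), with the seeds of $\P$ always present; such functions are monotone nondecreasing and submodular. Equivalently, for $S\subseteq T$ and $w\notin T$, since $d_P^T\le d_P^S$ pointwise, any node newly counted upon adding $w$ to $T$ is also newly counted upon adding $w$ to $S$, which is the diminishing-returns inequality. Hence each $f_R$ is monotone nondecreasing and submodular, and therefore so is $f=\E_R[f_R]$.
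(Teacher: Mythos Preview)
Your proposal is correct and follows essentially the same route as the paper: a live-edge (realization) decomposition, then the shortest-path characterization $\pi_\infty(v)\in\M \iff d_M(v)\le d_P^S(v)$ under M-dominance (respectively $<$ under P-dominance), proved by induction along a shortest path, and finally monotonicity and submodularity of the per-realization count. The only cosmetic difference is that the paper further splits $f_R$ into per-node $\{0,1\}$ indicators and verifies diminishing returns directly on each, whereas you package the same computation as a coverage function; these are equivalent formulations of the same argument.
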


\begin{theorem}
\label{theorem: homo}
$f$ is monotone nondecreasing and submodular if the cascade priority is homogeneous. 
\end{theorem}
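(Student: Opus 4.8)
\textbf{Proof plan for Theorem~\ref{theorem: homo}.}
The plan is to use the standard possible-worlds (live-edge) reformulation of the independent cascade model and then exploit a clean combinatorial description of the final states that holds \emph{only} when the priority is homogeneous. First I would invoke the usual deferred-decisions argument: independently declare each edge $e$ \emph{live} with probability $p_e$; in the resulting deterministic live-edge graph $g$ the entire diffusion --- including all priority-based tie-breaking --- is a deterministic function of $\tau(P_*)$. Writing $f_g(\tau(P_*))$ for the number of $\overline{\M}$-active nodes in world $g$, we have $f=\E_g[f_g]$, and since a nonnegative combination of monotone nondecreasing submodular functions is again monotone nondecreasing and submodular, it suffices to prove that every $f_g$ has these two properties.

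The heart of the argument is the following characterization, proved by induction on the activation time $d(v)$ of a node $v$ in a fixed live-edge graph $g$ with homogeneous priority: $\pi_\infty(v)$ is the cascade $C$ minimizing the graph distance $d_C(v)$ from $\tau(C)$ to $v$ in $g$, ties broken in favor of the (globally) highest-priority such cascade. The base case $d(v)=0$ is the seed rule. For the inductive step one uses that $d(v)=\min_C d_C(v)$ (the activation time is just the distance from the union of all seeds) and that the set $A$ of in-neighbors that actually activate $v$ is exactly $\{u:(u,v)\text{ live},\ d(u)=d(v)-1\}$, so $\pi_\infty(v)=\argmax_{u\in A}\pri(\pi_\infty(u))$ --- here it is essential that $\pri_v=\pri$ does not depend on $v$. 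Feeding the inductive hypothesis into this and comparing distances along shortest paths yields both that some $u\in A$ carries a cascade of priority at least that of the claimed winner, and that no $u\in A$ carries one of strictly higher priority. This homogeneity step is precisely where the argument breaks for general priorities --- consistently with the non-monotonicity in Example~\ref{example: property} --- so I expect the careful bookkeeping of which shortest paths ``survive'' to be the main obstacle to making this fully rigorous.

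Given the characterization, fix $g$ and classify each node $v$ by its ``base state'' $B(v)\in\{\M,\overline{\M}\}$, the winner at $v$ among $\M\cup\P$ (i.e.\ the outcome when $P_*$ does not reach $v$). If $B(v)=\overline{\M}$, then adding $P_*$ can only replace a positive winner by $P_*$ or reach a previously unreached node with $P_*$, so $v$ stays $\overline{\M}$-active for every $\tau(P_*)$. If $B(v)=\M$, then $v$ turns $\overline{\M}$-active exactly when $P_*$'s arrival beats that of the winning misinformation cascade at $v$; since the latter's arrival time and priority are fixed, there is a threshold $\theta_g(v)\in\{-1,0,1,\ldots\}$ such that, under $\tau(P_*)=S$, $v$ is $\overline{\M}$-active iff $d_g(s,v)\le\theta_g(v)$ for some $s\in S$, i.e.\ iff $S$ meets the in-ball $R_g(v)=\{s\in V: d_g(s,v)\le\theta_g(v)\}$. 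Hence $f_g(S)=c_g+\sum_{v:\,B(v)=\M}\mathbf{1}[\,S\cap R_g(v)\neq\emptyset\,]$ for a constant $c_g=|\{v:B(v)=\overline{\M}\}|$. Each indicator is a coverage function, hence monotone nondecreasing and submodular, so $f_g$ is, and taking $\E_g[\cdot]$ finishes the proof (and, via Remark~\ref{remark: reduction}, recovers the submodularity of [2, 4] as a special case). The leftover items --- overlapping seed sets, nodes no cascade ever reaches, and the exact value of $\theta_g(v)$, namely $d_{\M}(v)$ or $d_{\M}(v)-1$ according to whether $\pri(P_*)$ exceeds the priority of the winning misinformation cascade at $v$ --- are routine.
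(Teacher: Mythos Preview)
Your proposal is correct and follows essentially the same route as the paper: both use the live-edge decomposition to reduce to a single realization, and both prove the identical key characterization (in a fixed live-edge graph under homogeneous priority, each node's final cascade is the one at minimum seed-distance, ties broken by the global priority --- this is the paper's Lemma~\ref{lemma: key_homo}) by the same induction along shortest paths. The only difference is presentational: the paper verifies monotonicity and submodularity of each per-node indicator $f^{\g}(\cdot,u)$ by a direct marginal-gain check (Lemmas~\ref{lemma: homo_mono} and the one following it), whereas you repackage the same computation as ``constant plus coverage function'' via the in-balls $R_g(v)$.
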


Please see the supplementary material for the proofs of Theorems \ref{theorem: M-P-dominant} and \ref{theorem: homo}. Note that the greedy algorithm yields a $(1-1/e)$-approximation when the objective function is monotone nondecreasing and submodular [16]. Theorems \ref{theorem: M-P-dominant} and \ref{theorem: homo} evince that special cascade priorities may admit desirable combinatorial properties. In the next subsection, we will utilize these results to design an effective algorithm for the Max-$\overline{\M}$ problem for the general case.

\begin{algorithm}[t]
\caption{Greedy scheme}\label{alg: greedy}
\begin{algorithmic}[1]
\State \textbf{Input:} a function $h$ over a ground set $U$ and a budget $k$;
\State $U_0 \leftarrow \emptyset$;
\For {$i=1: k$}
\State $u \leftarrow \argmax_{u \in U} h(U_{i-1}\cup \{u\})-h(U_{i-1})$;
\State $U_i \leftarrow U_{i-1}\cup \{u\}$;
\EndFor
\State return $U \leftarrow \argmax_{U_i} h(U_i)$;
\end{algorithmic}
\end{algorithm}

\begin{algorithm}[t]
\caption{Sandwich approximation strategy}\label{alg: alg}
\begin{algorithmic}[1]
\State \textbf{Input:} $f, \overline{f}, \underline{f}, V^*, k$;
\State $\overline{S}_* \leftarrow$ ALG. \ref{alg: greedy}($\overline{f},V^*, k$); $\underline{S}_* \leftarrow$ ALG. \ref{alg: greedy}($\underline{f},V^*, k$); ${S}_* \leftarrow$ ALG. \ref{alg: greedy}(${f},V^*, k$);
\State return $S^{'}=\argmax_{S \in \{\overline{S}_*, \underline{S}_*, S_*\}}f(S)$;
\end{algorithmic}
\vspace{-1mm}
\end{algorithm}

\subsection{General case}
For the general cascade priority, we present a data-dependent approximation algorithm based on the upper-lower-bound technique [22]. Each cascade priority ${\pri}_v$ induces another two cascade priorities, defined as follows:
\begin{definition}[$\overline{{\pri}}_v$]
\label{def: upper_priority}
$\overline{{\pri}}_v$ is a cascade priority at node $v$ induced by ${\pri}_v$, satisfying,
\begin{enumerate}[leftmargin=0.75cm, labelwidth = 1em, labelsep=0.1cm, align=left, itemsep=-1mm, font=\upshape, label=(\alph*), topsep=0mm]
\item for each $P_1$, $P_2 \in \P \cup \{P_{*}\}$, $ \overline{{\pri}}_v(P_1)<\overline{{\pri}}_v(P_2) \iff {\pri}_v(P_1)<{\pri}_v(P_2),$
\item for each $M_1$, $M_2 \in \M$, $\overline{{\pri}}_v(M_1)<\overline{{\pri}}_v(M_2) \iff {\pri}_v(M_1)<{\pri}_v(M_2),$ and,
\item for each $P \in \P \cup \{P_{*}\}$ and $M \in \M$, $\overline{{\pri}}_v(M)<\overline{{\pri}}_v(P)$.
\vspace{1mm}
\end{enumerate}
\end{definition}

\begin{definition}[$\underline{{\pri}}_v$]
$\underline{{\pri}}_v$ is a cascade priority at node $v$ induced by ${\pri}_v()$, satisfying (a) and (b) in Def. \ref{def: upper_priority}, and, for each $P \in \P \cup \{P_{*}\}$ and $M \in \M$, $\underline{{\pri}}_v(P)<\underline{{\pri}}_v(M)$.
\end{definition}

$\overline{{\pri}}_v$ and $\underline{{\pri}}_v$ keep the relative priority of the cascades within the same group and adjust the relative priority of the cascades between groups. We can easily check that $\overline{{\pri}}_v$ and $\underline{{\pri}}_v$ are uniquely determined by ${\pri}_v$.

\begin{example}
\label{example: unique}
Suppose there are three positive cascades, $P_1$, $P_2$ and $P_3$, and two misinformation cascades, $M_1$ and $M_2$. If ${\pri}_v(P_3)<{\pri}_v(P_1)<{\pri}_v(M_2)<{\pri}_v(P_2)<{\pri}_v(M_1)$, then we have $\overline{{\pri}}_v(M_2)<\overline{{\pri}}_v(M_1)<\overline{{\pri}}_v(P_3)<\overline{{\pri}}_v(P_1)<\overline{{\pri}}_v(P_2)$ and $\underline{{\pri}}_v(P_3)<\underline{{\pri}}_v(P_1)<\underline{{\pri}}_v(P_2)<\underline{{\pri}}_v(M_2)<\underline{{\pri}}_v(M_1)$.
\end{example}

For a seed set $\tau(P_*) \subseteq V^*$ of cascade $P_*$, we use $\overline{f}(\tau(P_*))$ (resp. $\underline{f}(\tau(P_*))$) to denote the expected number of the $\overline{\M}$-active nodes when each node $v$ replaces its cascade priority ${\pri}_v$ by $\overline{{\pri}}_v$ (resp. $\underline{{\pri}}_v$). Because $\overline{{\pri}}_v$ is P-dominant and $\underline{{\pri}}_v$ is M-dominant, the following result immediately follows from Theorem \ref{theorem: M-P-dominant}.
\begin{corollary}
$\overline{f}$ and $\underline{f}$ are both monotone nondecreasing and submodular.
\end{corollary}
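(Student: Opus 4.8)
The plan is to observe that $\overline{f}$ and $\underline{f}$ are not new objects but the ordinary objective $f_{\overline{\M}}$ evaluated on two modified problem instances, and then invoke Theorem~\ref{theorem: M-P-dominant}. Concretely, fix the original instance $(G,\{p_e\},\C,\{\pri_v\},\{\tau(C)\},V^*)$ and form a new instance $\overline{I}$ identical to it except that every node $v$ uses $\overline{\pri}_v$ in place of $\pri_v$; by construction $\overline{f}(\tau(P_*))$ is exactly $f_{\overline{\M}}$ for $\overline{I}$. Symmetrically, let $\underline{I}$ be the instance obtained by replacing each $\pri_v$ with $\underline{\pri}_v$, whose objective function is $\underline{f}$.

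First I would check that $\overline{\pri}_v$ and $\underline{\pri}_v$ are legitimate cascade priorities, i.e. bijections from $\C$ to $\{1,\dots,|\C|\}$ inducing linear orders. This is the content of the remark preceding the corollary: conditions (a), (b), and (c) of Definition~\ref{def: upper_priority} together decide, for any two cascades, which is larger (within $\P\cup\{P_*\}$ the order of $\pri_v$ is inherited, within $\M$ likewise, and any $M\in\M$ is placed below any $P\in\P\cup\{P_*\}$), so the induced order is total and corresponds to a unique bijection. The same reasoning applies to $\underline{\pri}_v$ with the last clause reversed.

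Next I would verify the key structural point. Clause (c) of Definition~\ref{def: upper_priority} reads $\overline{\pri}_v(M)<\overline{\pri}_v(P)$ for every $M\in\M$, $P\in\P\cup\{P_*\}$, and $v\in V$, which is exactly the definition of a P-dominant cascade priority; hence $\overline{I}$ has P-dominant priority. Symmetrically, the defining clause $\underline{\pri}_v(P)<\underline{\pri}_v(M)$ of $\underline{\pri}_v$ is exactly M-dominance, so $\underline{I}$ has M-dominant priority.

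Finally, applying Theorem~\ref{theorem: M-P-dominant} to $\overline{I}$ shows its objective $f_{\overline{\M}}=\overline{f}$ is monotone nondecreasing and submodular, and applying it to $\underline{I}$ yields the same for $\underline{f}$. There is essentially no obstacle: the one thing to be careful about is that Theorem~\ref{theorem: M-P-dominant} is a statement about \emph{every} instance whose priority is M- or P-dominant, so it applies verbatim to $\overline{I}$ and $\underline{I}$ once the two induced priorities are seen to be valid and of the correct dominance type; the corollary is then immediate.
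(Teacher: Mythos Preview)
Your proposal is correct and matches the paper's own justification essentially verbatim: the paper states that the corollary ``immediately follows from Theorem~\ref{theorem: M-P-dominant}'' because $\overline{\pri}_v$ is P-dominant and $\underline{\pri}_v$ is M-dominant, which is exactly your argument. If anything, you are more careful than the paper in explicitly checking that $\overline{\pri}_v$ and $\underline{\pri}_v$ are well-defined linear orders before invoking the theorem.
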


Furthermore, $\overline{f}$ is an upper bound of $f$ and $\underline{f}$ is a lower bound of $f$.

\begin{theorem}
\label{theorem: upper-lower}
For each $\tau(P_*) \subseteq V^*$, $\overline{f}(\tau(P_*)) \geq {f}(\tau(P_*)) \geq \underline{f}(\tau(P_*))$.
\end{theorem}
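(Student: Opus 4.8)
\textbf{Proof proposal for Theorem~\ref{theorem: upper-lower}.}

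The plan is to show the two inequalities $\overline{f}(\tau(P_*)) \geq f(\tau(P_*))$ and $f(\tau(P_*)) \geq \underline{f}(\tau(P_*))$ by a coupling argument: fix a realization of the random edge outcomes (i.e.\ condition on which of the ``live'' edges succeed when their tail first becomes active — or, more cleanly, work in the standard live-edge formulation where each edge $(u,v)$ is independently declared live with probability $p_{(u,v)}$), and then argue that, on this common sample space, the set of $\overline{\M}$-active nodes under priority profile $\overline{\pri}$ contains the set of $\overline{\M}$-active nodes under $\pri$, which in turn contains the set of $\overline{\M}$-active nodes under $\underline{\pri}$. Taking expectations over the shared randomness then yields the claimed ordering. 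The key observation justifying the coupling is that $\overline{\pri}_v$, $\underline{\pri}_v$, and $\pri_v$ all induce \emph{the same} diffusion ``timing'': the time at which a node $v$ first receives an activation attempt, and the set of candidate activators at that time, depend only on the graph, the live edges, and the seed sets — not on the tie-breaking priorities. So in all three processes every node becomes active (by \emph{something}) at exactly the same time step; only the \emph{label} it ends up with can differ, and only when a tie occurs among cascades of different groups.

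First I would set up the deterministic (fixed live-edge graph) version precisely: for a node $v$ activated at time $t$, let $A_v \subseteq V$ be the set of in-neighbors that are active at time $t-1$ and connected to $v$ by a live edge (for seeds, the analogous set of cascades $C$ with $v \in \tau(C)$). Then $v$'s label is the cascade in $\{\pi_{t-1}(u): u \in A_v\}$ of highest $\pri_v$-priority. I would prove by induction on $t$ the following invariant: if a node $v$ is $\overline{\M}$-active at time $t$ under $\pri$, then it is $\overline{\M}$-active at time $t$ under $\overline{\pri}$; and if $v$ is $\overline{\M}$-active at time $t$ under $\underline{\pri}$, then it is $\overline{\M}$-active at time $t$ under $\pri$. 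The base case $t=0$ is immediate because at a seed node the candidate set of cascades is the same in all processes, and the inductive comparisons below handle it identically. For the inductive step at a non-seed node $v$ activated at time $t$: the candidate activator \emph{set} $A_v$ may differ between processes (since it depends on the labels of in-neighbors at time $t-1$, which we only know agree \emph{group-wise} by induction, not exactly), so the argument needs care — this is where I would be most careful. The clean way is to strengthen the invariant to: the set of in-neighbors of $v$ that are $\overline{\M}$-active at time $t-1$ is (weakly) larger under $\overline{\pri}$ than under $\pri$, and (weakly) smaller under $\underline{\pri}$ than under $\pri$; equivalently the $\M$-active-in-neighbor set shrinks as we pass $\underline{\pri} \to \pri \to \overline{\pri}$. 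Given that, at $v$: under $\overline{\pri}$ the only way $v$ becomes $\M$-active is if \emph{every} live in-neighbor active at $t-1$ is $\M$-active (by part (c), any positive cascade beats any misinformation cascade); but then the same is true under $\pri$ (its $\M$-active-in-neighbor set is a superset and its $\overline{\M}$-active-in-neighbor set is a subset, so if the latter is empty for $\overline{\pri}$ it is empty for $\pri$ too), so $v$ is $\M$-active under $\pri$ as well — contrapositively, $\overline{\M}$-active under $\pri$ implies $\overline{\M}$-active under $\overline{\pri}$. The $\underline{\pri}$ side is symmetric: under $\underline{\pri}$, $v$ is $\overline{\M}$-active only if every live in-neighbor active at $t-1$ is $\overline{\M}$-active, and by the monotonicity of the in-neighbor sets the same then holds under $\pri$.

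The main obstacle, as flagged, is that one cannot directly compare the labels node-by-node without the strengthened set-inclusion invariant, because the \emph{timing} is shared but the \emph{identity} of which in-neighbors are active-and-which-group at the previous step is exactly what the induction must supply; I would therefore phrase the induction hypothesis at time $t$ as the conjunction ``(i) the set of $\overline{\M}$-active nodes at time $t$ under $\underline{\pri}$ $\subseteq$ that under $\pri$ $\subseteq$ that under $\overline{\pri}$, and (ii) all three processes have activated exactly the same set of nodes (ignoring labels) by time $t$.'' Part (ii) is what lets me say the candidate activator sets are the same \emph{as vertex sets} across processes, so the only thing that varies is the priority-induced tie-break, and parts (a)--(b) of Definitions~\ref{def: upper_priority} (which preserve intra-group order) guarantee that if the winning cascade is $\M$-type under one profile it is $\M$-type under any profile \emph{with the same group membership of candidates}, while part (c) of $\overline{\pri}$ (resp.\ the defining inequality of $\underline{\pri}$) pushes the tie-break toward $\overline{\M}$ (resp.\ toward $\M$) whenever both groups are present among the candidates. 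Chaining the inclusions at $t=\infty$ and taking expectation over the live-edge randomness gives $\overline{f}(\tau(P_*)) \geq f(\tau(P_*)) \geq \underline{f}(\tau(P_*))$, as required.
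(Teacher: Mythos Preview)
Your proposal is correct. Both you and the paper work on a fixed live-edge graph and exploit the fact that the \emph{timing} of activations (equivalently, the shortest-path distance from the seed set in the live graph) is independent of the priority profile; only the group label a node receives can change. Where the two arguments diverge is in how that observation is cashed out. The paper reuses the distance characterizations it has already established for Theorem~\ref{theorem: M-P-dominant}: under an M-dominant profile a node $u$ is $\overline{\M}$-active iff $\dis_{\g}(\tau(\overline{\M}),u)<\dis_{\g}(\tau(\M),u)$, and under a P-dominant profile iff $\dis_{\g}(\tau(\overline{\M}),u)\le\dis_{\g}(\tau(\M),u)$; combined with the two ``under any priority'' sufficiency lemmas (strict inequality either way forces the corresponding group), the comparison $\underline{f}^{\g}\le f^{\g}\le \overline{f}^{\g}$ drops out in two lines. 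Your route is a direct joint induction on time steps across the three coupled processes, showing the nested inclusion of $\overline{\M}$-active sets without ever naming shortest paths. The paper's version is shorter given the lemmas already in hand; yours is self-contained and makes explicit the monotonicity of the group labels step by step. Two minor remarks: the detour you flag about $A_v$ possibly differing is resolved exactly by your part~(ii), so the worry is moot; and parts~(a)--(b) of Definition~\ref{def: upper_priority} (intra-group order preservation) are not actually needed for this theorem---only the P-dominance of $\overline{\pri}$ and M-dominance of $\underline{\pri}$ matter, which the paper's proof also reflects.
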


Please see the supplementary material for the proof of Theorem \ref{theorem: upper-lower}. We now present an algorithm to solve the Max-$\onm$ problem by approximating $\overline{f}$ and $\underline{f}$. First, we run the greedy algorithm, ALG. \ref{alg: greedy}, on all three functions, $f, \overline{f}$ and $\underline{f}$, to obtain three solutions $S_*, \overline{S}_*$ and $\underline{S}_*$, respectively. The final solution is selected as $S^{'}=\argmax_{S \in \{S_*, \overline{S}_*, \underline{S}_*\}}f(S).$ The process is formally shown in ALG. \ref{alg: alg}. According to [22], it has the following performance bound.

\begin{theorem}
\label{theorem: sandwich}
$f(S^{'}) \geq \max \{\frac{{f}(\overline{S}_*)}{\overline{f}(\overline{S}_*)}, \frac{\underline{f}(OPT)}{{f}(OPT)}\}\cdot (1-1/e)\cdot {f}(OPT)$, where $OPT$ is the optimal solution.
\end{theorem}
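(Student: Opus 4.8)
The plan is to prove the bound branch-by-branch, exploiting that the two induced functions $\overline{f}$ and $\underline{f}$ are monotone nondecreasing and submodular (the Corollary, via Theorem \ref{theorem: M-P-dominant}) together with the pointwise sandwich relation $\overline{f}(\cdot) \geq f(\cdot) \geq \underline{f}(\cdot)$ of Theorem \ref{theorem: upper-lower}, and then observing that $S'$ is chosen to be the best of the three greedy solutions under $f$.

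First I would treat the upper-bound branch. Let $\overline{OPT}$ denote an optimal solution for maximizing $\overline{f}$ subject to $|\tau(P_*)| \leq k$ and $\tau(P_*) \subseteq V^*$. Since $\overline{f}$ is monotone nondecreasing and submodular, the greedy guarantee [16] applied inside ALG. \ref{alg: alg} gives $\overline{f}(\overline{S}_*) \geq (1-1/e)\,\overline{f}(\overline{OPT})$. Because $OPT$ is a feasible set for the $\overline{f}$-maximization problem, optimality of $\overline{OPT}$ yields $\overline{f}(\overline{OPT}) \geq \overline{f}(OPT)$, and Theorem \ref{theorem: upper-lower} gives $\overline{f}(OPT) \geq f(OPT)$; chaining these, $\overline{f}(\overline{S}_*) \geq (1-1/e)\,f(OPT)$. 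Multiplying and dividing by $\overline{f}(\overline{S}_*)$ gives $f(\overline{S}_*) = \frac{f(\overline{S}_*)}{\overline{f}(\overline{S}_*)}\cdot \overline{f}(\overline{S}_*) \geq \frac{f(\overline{S}_*)}{\overline{f}(\overline{S}_*)}\cdot(1-1/e)\cdot f(OPT)$.

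Next I would treat the lower-bound branch. Let $\underline{OPT}$ be an optimal solution for maximizing $\underline{f}$ under the same constraint. By the greedy guarantee applied to the monotone nondecreasing submodular $\underline{f}$, $\underline{f}(\underline{S}_*) \geq (1-1/e)\,\underline{f}(\underline{OPT}) \geq (1-1/e)\,\underline{f}(OPT)$, where the last inequality again uses feasibility of $OPT$ and optimality of $\underline{OPT}$. Theorem \ref{theorem: upper-lower} gives $f(\underline{S}_*) \geq \underline{f}(\underline{S}_*)$, and writing $\underline{f}(OPT) = \frac{\underline{f}(OPT)}{f(OPT)}\cdot f(OPT)$ we obtain $f(\underline{S}_*) \geq \frac{\underline{f}(OPT)}{f(OPT)}\cdot(1-1/e)\cdot f(OPT)$. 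Finally, since $S' = \argmax_{S \in \{\overline{S}_*, \underline{S}_*, S_*\}} f(S)$, we have $f(S') \geq \max\{f(\overline{S}_*), f(\underline{S}_*)\}$, which is at least the larger of the two lower bounds just derived; that is exactly the claimed inequality.

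I do not anticipate any serious obstacle here; the argument is essentially the sandwich-approximation scheme of [22] specialized to our $\overline{f}, \underline{f}$. The only points requiring care are (i) that the $(1-1/e)$ greedy ratio is legitimately available for $\overline{f}$ and $\underline{f}$, which is precisely what the Corollary licenses, and (ii) keeping straight which optimum is used in each branch — $\overline{f}(\overline{OPT}) \geq \overline{f}(OPT) \geq f(OPT)$ on the upper side and $\underline{f}(\underline{OPT}) \geq \underline{f}(OPT)$ on the lower side — so that the data-dependent ratios $f(\overline{S}_*)/\overline{f}(\overline{S}_*)$ and $\underline{f}(OPT)/f(OPT)$ appear with the correct arguments. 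The greedy solution $S_*$ for $f$ itself plays no role in the proof of the bound; it only enlarges the candidate set over which $S'$ is maximized and hence can only improve the guarantee in practice.
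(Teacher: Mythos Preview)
Your proposal is correct and is exactly the standard sandwich-approximation argument of [22] that the paper invokes; the paper itself does not spell out a proof but simply cites [22], and your two branches (upper via $\overline{f}(\overline{S}_*)\geq(1-1/e)\,\overline{f}(\overline{OPT})\geq(1-1/e)\,\overline{f}(OPT)\geq(1-1/e)\,f(OPT)$, lower via $f(\underline{S}_*)\geq\underline{f}(\underline{S}_*)\geq(1-1/e)\,\underline{f}(OPT)$) are precisely that argument. Your observation that $S_*$ plays no role in the bound is also accurate.
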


\begin{remark}
The performance bound of ALG. \ref{alg: alg} depends on the closeness of the upper bound and the lower bound. We will experimentally examine this gap in Sec. \ref{sec: exp}. 
\end{remark}

\section{Experiments}
\label{sec: exp}
In this section, we evaluate the proposed algorithm by experiments. Our goal is to examine the performance of ALG. \ref{alg: alg} by (a) comparing it to baseline methods and (b) measuring the data-dependent approximation ratio given in Theorem \ref{theorem: sandwich}. Our experiments are performed on a server with a 2.2 GHz eight-core processor.

\subsection{Setup}
\textbf{Dataset.} The first dataset, collected from Twitter, is built after monitoring the spreading process of the messages posted between 1st and 7th July 2012 regarding the discovery of a new particle with the features of the elusive Higgs boson [17]. It consists of a collection of activities between users, including re-tweeting action, replying action, and mentioning action. We extract two subgraphs from this dataset, where the first one has 10,000 nodes and the second one has 100,000 nodes. We denote these two graphs by Higgs-10K and Higgs-100K, respectively. The second dataset, denoted by HepPh, is a citation graph from the e-print arXiv with 34,546 papers [23]. HepPh has been widely used in the study on influence diffusion in social networks. The statistics of the datasets can be found in the supplementary material.

\textbf{Propagation Probability.} On Higss-10K, the probability of edge $(u,v)$ is set to be proportional to the frequency of the activities between $u$ and $v$. In particular, we set $p_{(u,v)}$ as $\frac{a_i}{a_{max}}\cdot p_{max}+p_{base}$, where $a_i$ is the number of activities from $u$ to $v$, $a_{max}$ is the maximum number of the activities among all the edges, and, $p_{max}=0.2$ and $p_{base}=0.4$ are two constants. On Higgs-100K, we adopt the uniform setting where the propagation probability on each edge is set as 0.1.  On HepPh, we adopt the wighted cascade setting and set $p_{(u,v)}$ as $1/deg(v)$ where $deg(v)$ is the number of in-neighbors of $v$. The uniform setting and the weighted cascade are two classic settings and they have been widely used in the existing works [2, 4, 6, 7, 9, 10, 18].

\textbf{Cascade setting.} We consider three cases where there are three cascades, five cascades and ten cascades, respectively. For the case of three cascades, we deploy one existing misinformation cascade and one existing positive cascade, and we launch a new positive cascade $P_*$. For each existing cascade, the size of the seed set is set as 20 and the seed nodes are selected from the node with the highest single-node influence. The seed sets of different cascades do not overlap with each other. The budget of $P_*$ is enumerated from $\{1,2,...,20\}$ and the candidate set $V^*$ is equal to $V$. The cascade priority at each node is assigned randomly by generating a random permutation over $\{1, 2, 3\}$. We process the cases with five and ten cascades in the same way as the three cascades case. The details can be found in the supplementary material.

\textbf{Baseline methods.} Since there is no algorithm explicitly addressing the model considered in this paper, we consider three baseline methods, \textbf{HighWeight}, \textbf{Proximity} and \textbf{Random}. The weight of a node $v$ is defined as the sum of the probabilities of its out-edges (i.e., $\sum_{v}p_{(u,v)}$). HighWeight outputs the seed set according to the decreasing order of the node weight. Proximity selects the seed nodes of $P_*$ from the out-neighbors of the seed nodes of the misinformation cascades, where the preference is given to the node with a large weight. Random is a baseline method which selects the seed nodes randomly. The performance of Random is evaluated by the mean over 1,000 executions.

\textbf{Estimating influence.} The feasibility of ALG. \ref{alg: alg} relies on the assumption that there is an efficient oracle of $f_{\onm}$. Unfortunately, it has been shown in [7] that computing the influence is a \#P-hard problem, and in fact, it is also hard to compute $f_{\onm}$. In our experiments, the function value is estimated by 5,000 Monte Carlo simulations whenever $f_{\onm}$ is called, and the final solution of each algorithm is evaluated by 10,000 simulations. We note that the techniques proposed in [4, 8, 9, 10] are potentially applicable to the MC problem, but improving the efficiency of the algorithm is beyond the scope of this paper.

\subsection{Result and discussion} 
\label{subsec: results}
The experimental results are shown in Figs. \ref{fig: higgs_10K}, \ref{fig: higgs_100K} and \ref{fig: hepph}. In each figure, the first three subfigures show the performance under the settings of three, five and ten cascades, respectively. Each subfigure gives four curves plotting the number of $\M$-active nodes under Sandwich (ALG. \ref{alg: alg}), HighWeight, Proximity and Random, respectively. The last subfigure shows the value of ${f}(\overline{S}_*)/\overline{f}(\overline{S}_*)$ in each experiment. 

\begin{figure}[t]
\centering
\vspace{-2mm}
\subfloat[Three cascades]{\label{fig: H_10k_3}\includegraphics[width=0.24\textwidth]{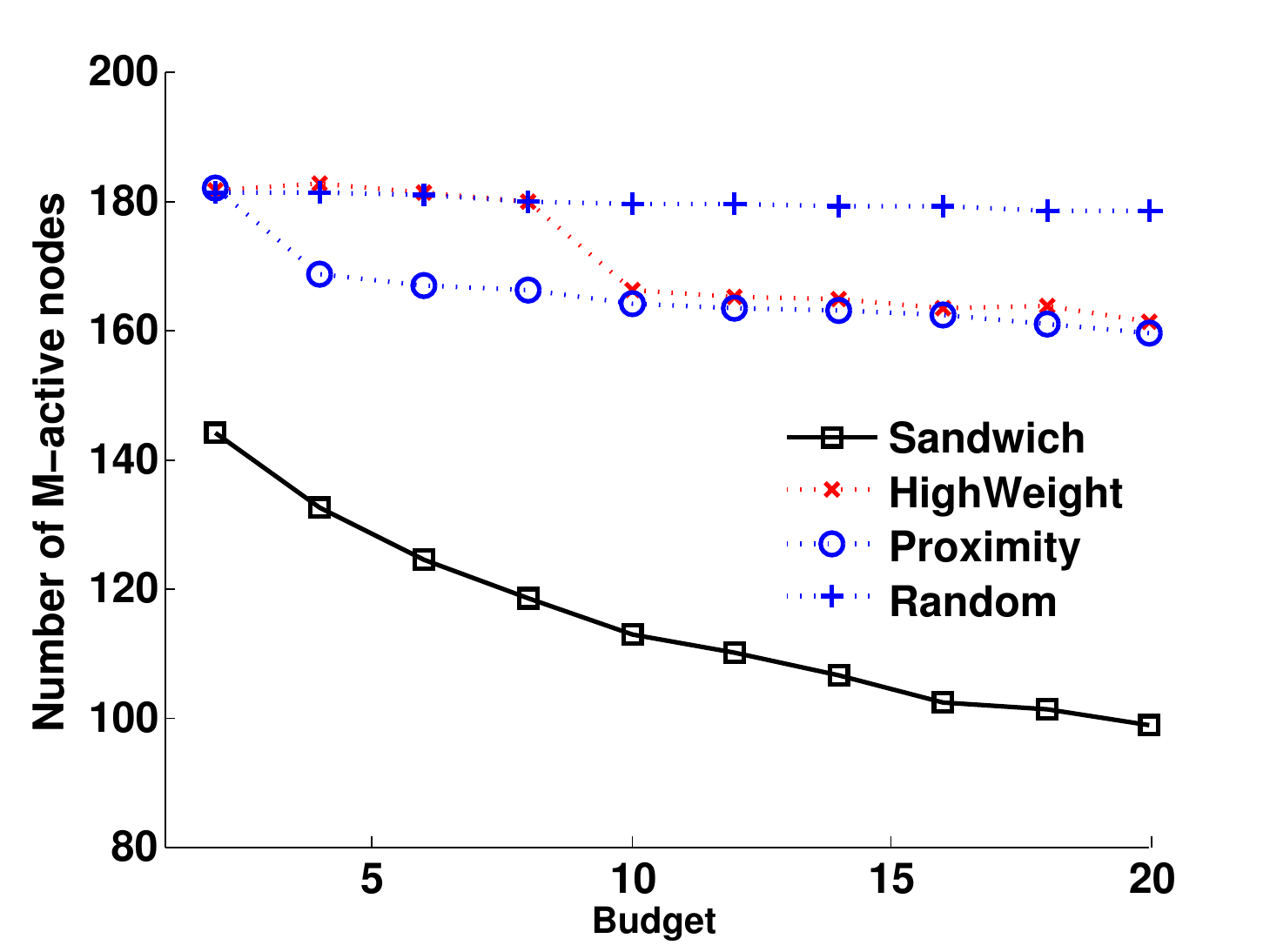}} \hspace{1mm}  
\subfloat[Five cascades ]{\label{fig: H_10k_5}\includegraphics[width=0.24\textwidth]{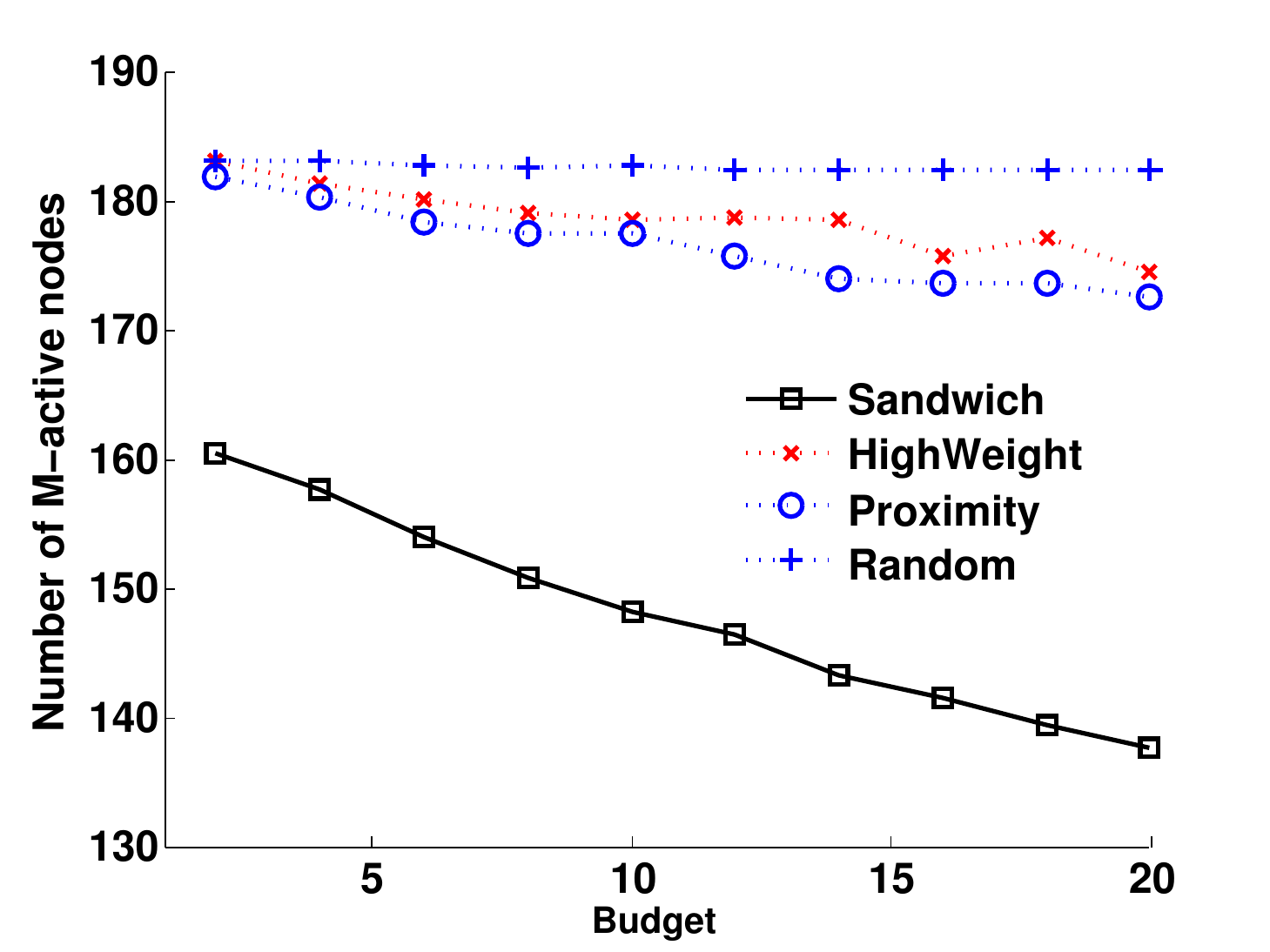}} \hspace{1mm}  
\subfloat[Ten cascades ]{\label{fig: H_10k_10}\includegraphics[width=0.24\textwidth]{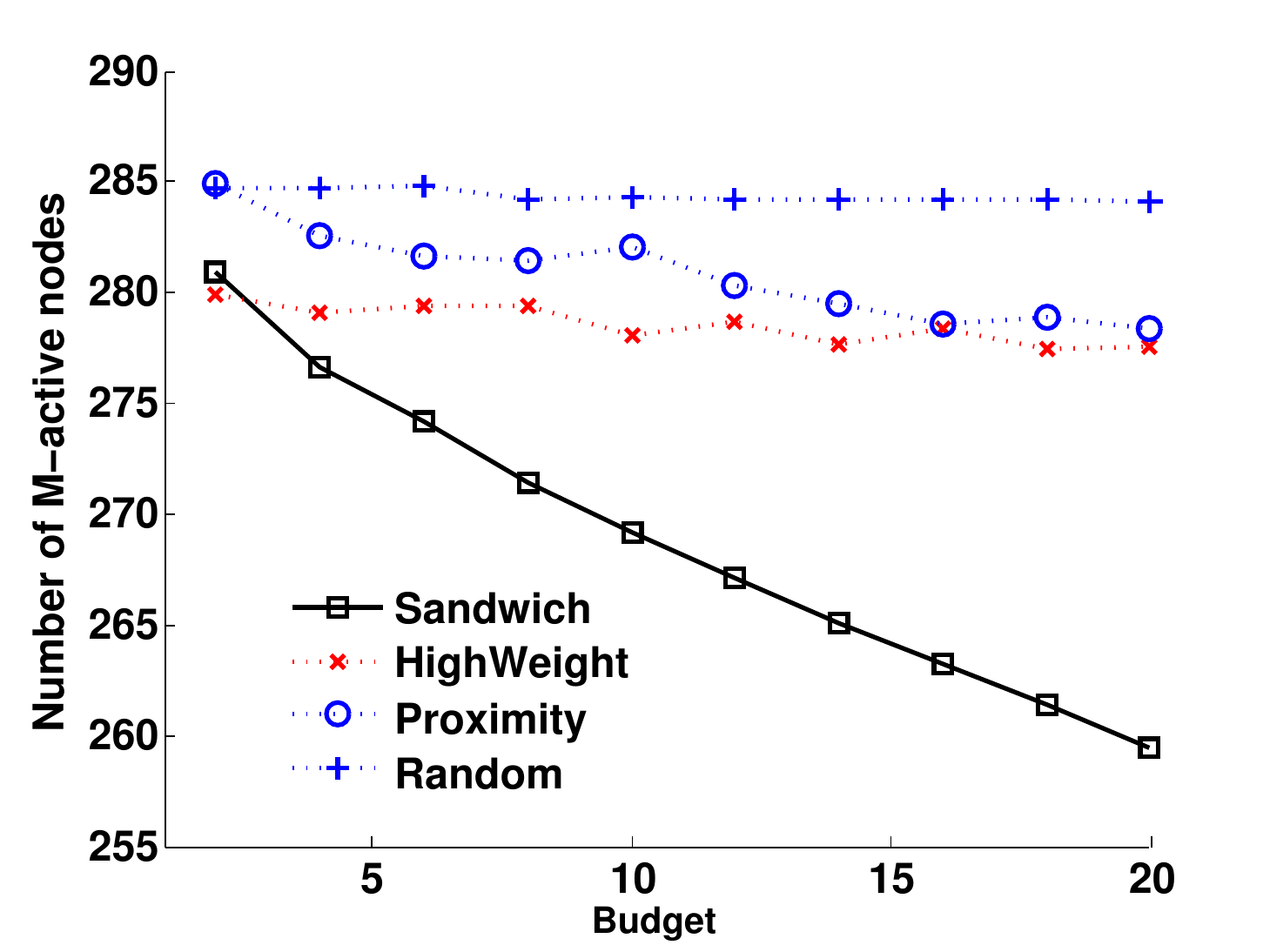}} \hspace{1mm}  
\subfloat[Performance bound ]{\label{fig: H_10k_raio}\includegraphics[width=0.24\textwidth]{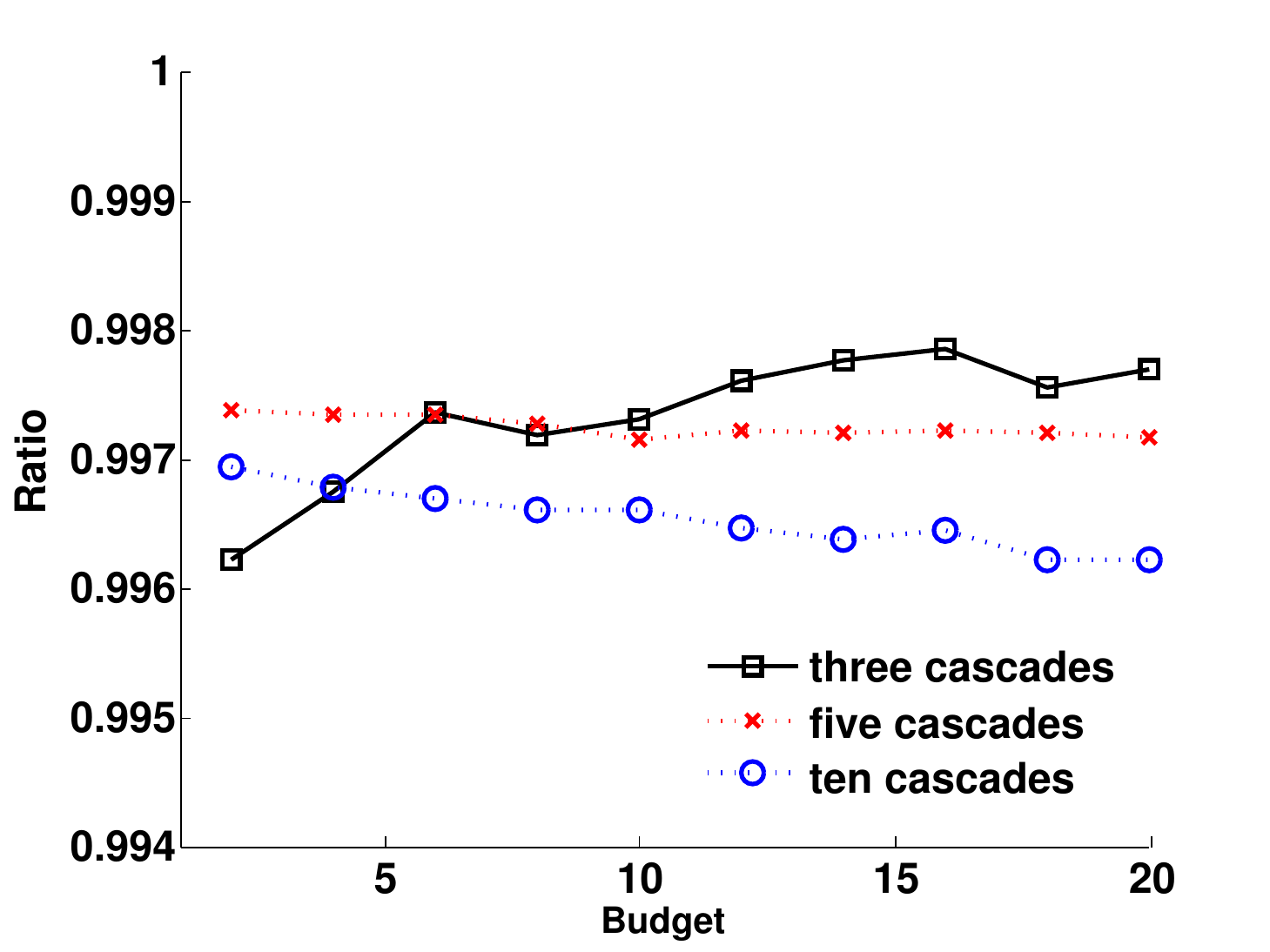}} \hspace{0mm}  
\vspace{-2mm}
\caption{Results on Higgs-10K.} 
\label{fig: higgs_10K}
\vspace{-8mm}
\end{figure}

\begin{figure}[t]
\centering
\subfloat[Three cascades]{\label{fig: H_100k_3}\includegraphics[width=0.24\textwidth]{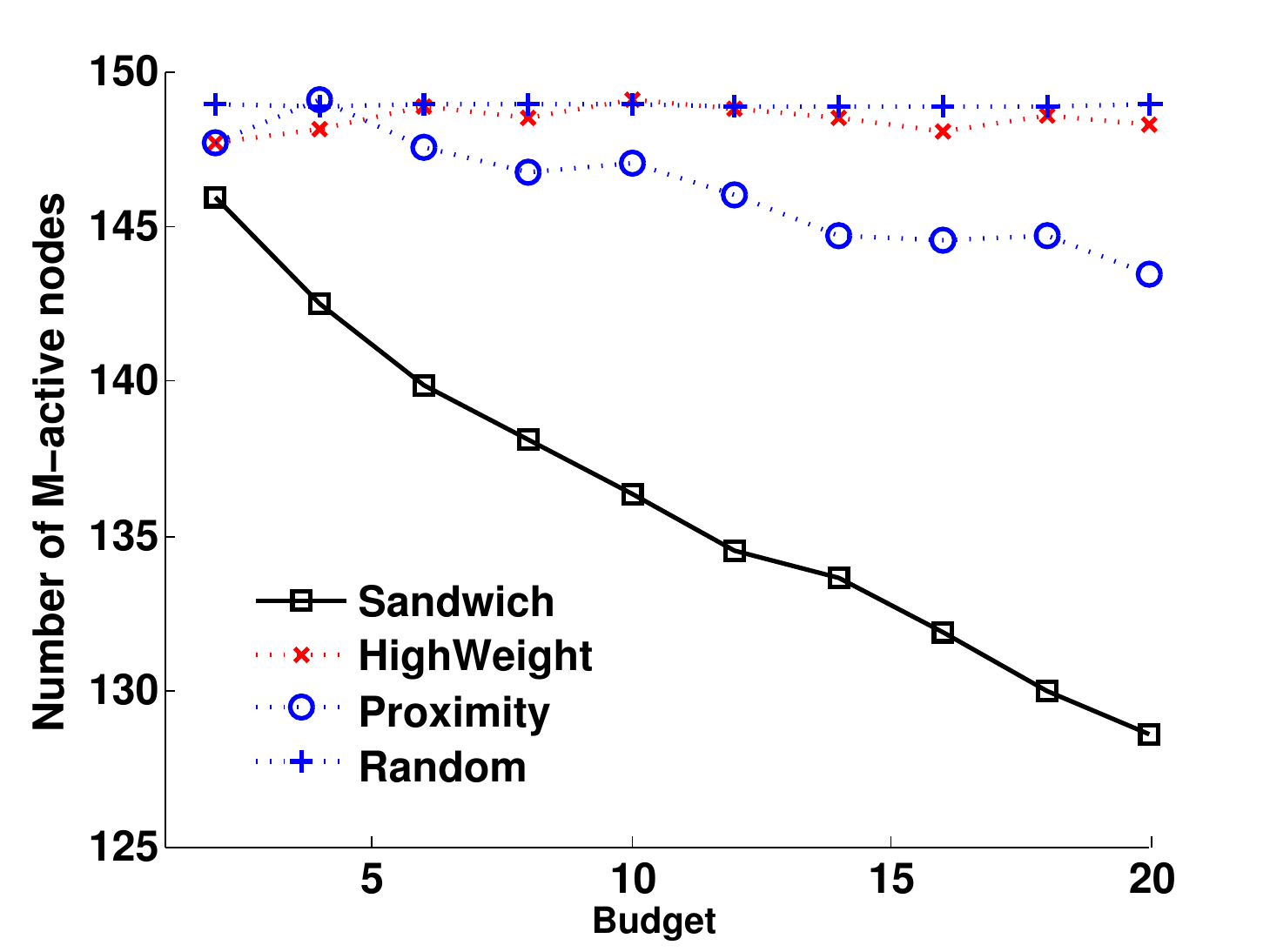}} \hspace{1mm}  
\subfloat[Five cascades ]{\label{fig: H_100k_5}\includegraphics[width=0.24\textwidth]{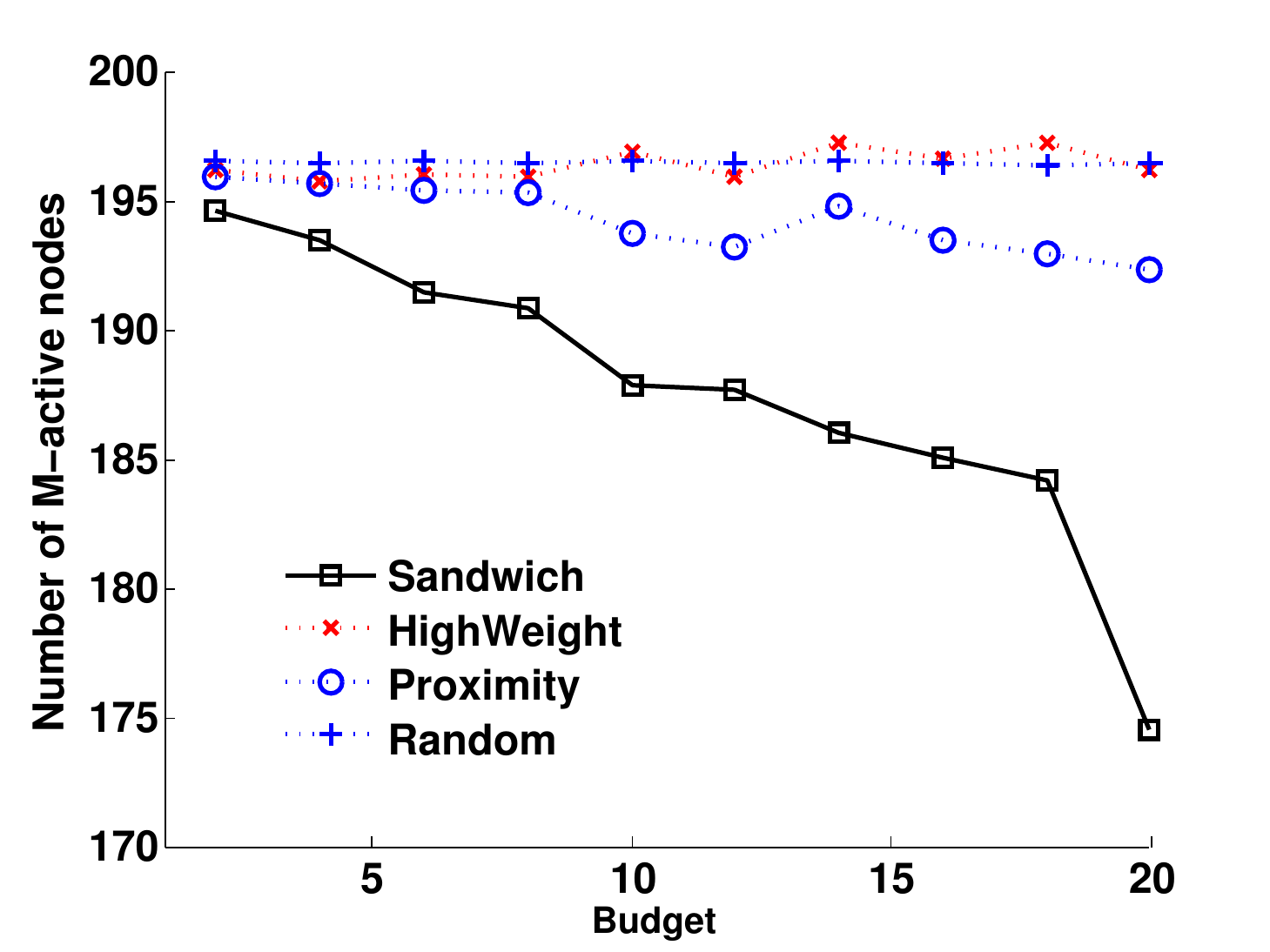}} \hspace{1mm}  
\subfloat[Ten cascades ]{\label{fig: H_100k_10}\includegraphics[width=0.24\textwidth]{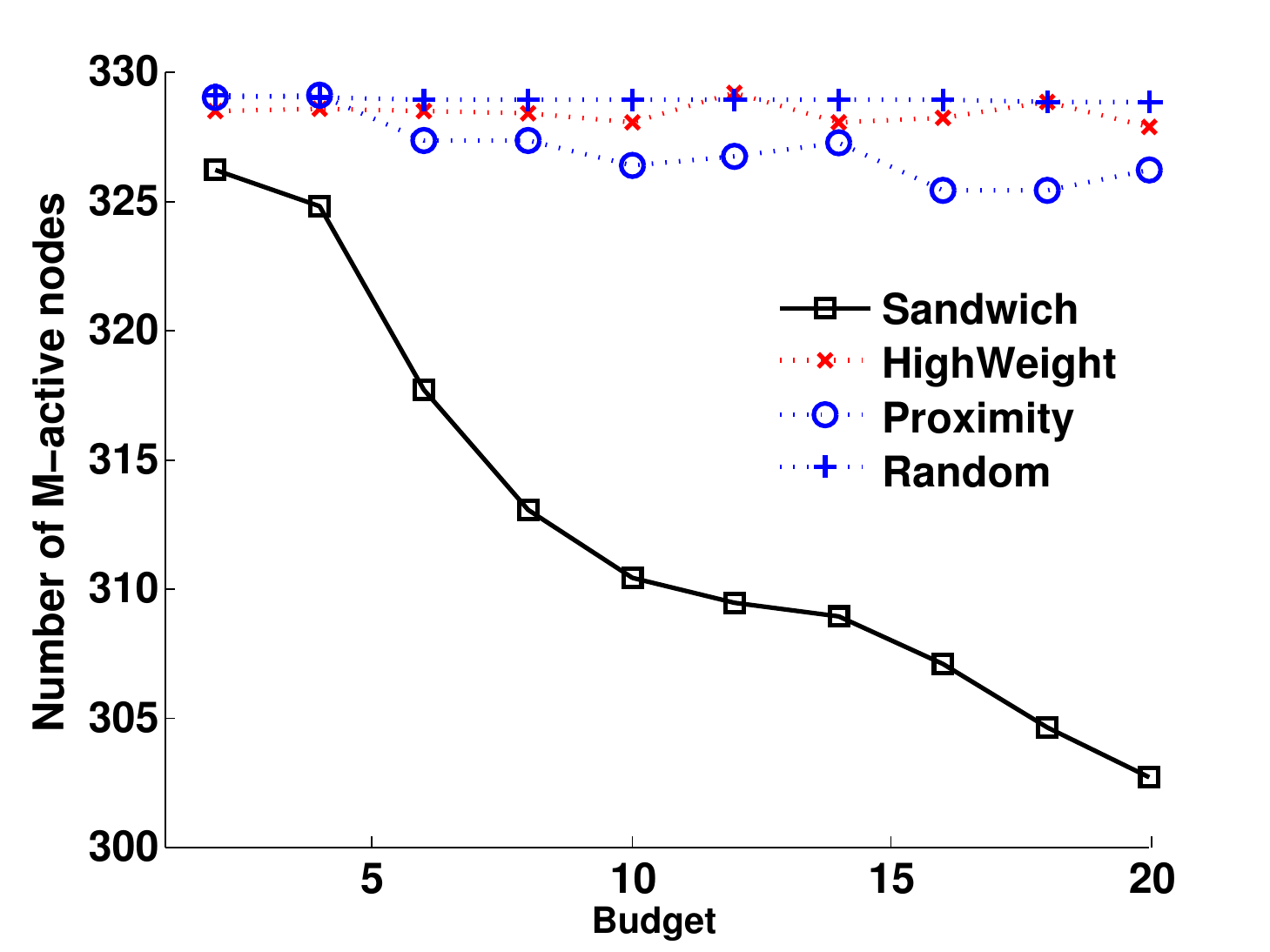}} \hspace{1mm}  
\subfloat[Performance bound ]{\label{fig: H_100k_raio}\includegraphics[width=0.24\textwidth]{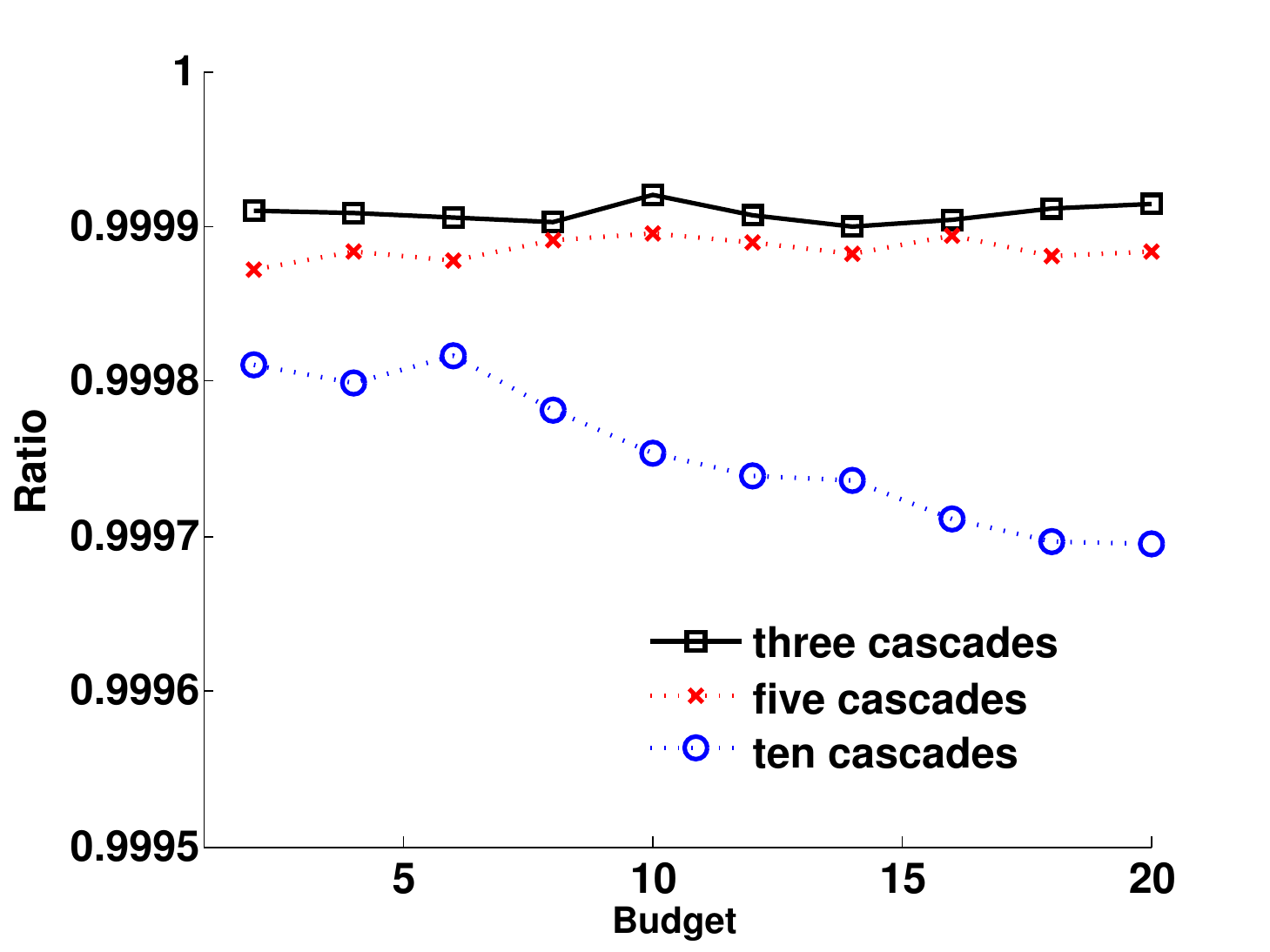}} \hspace{0mm}  
\vspace{-2mm}
\caption{Results on Higgs-100K. } 
\label{fig: higgs_100K}
\vspace{-10mm}
\end{figure}

\begin{figure}[tp!]
\centering
\subfloat[Three cascades]{\label{fig: hepph_3}\includegraphics[width=0.24\textwidth]{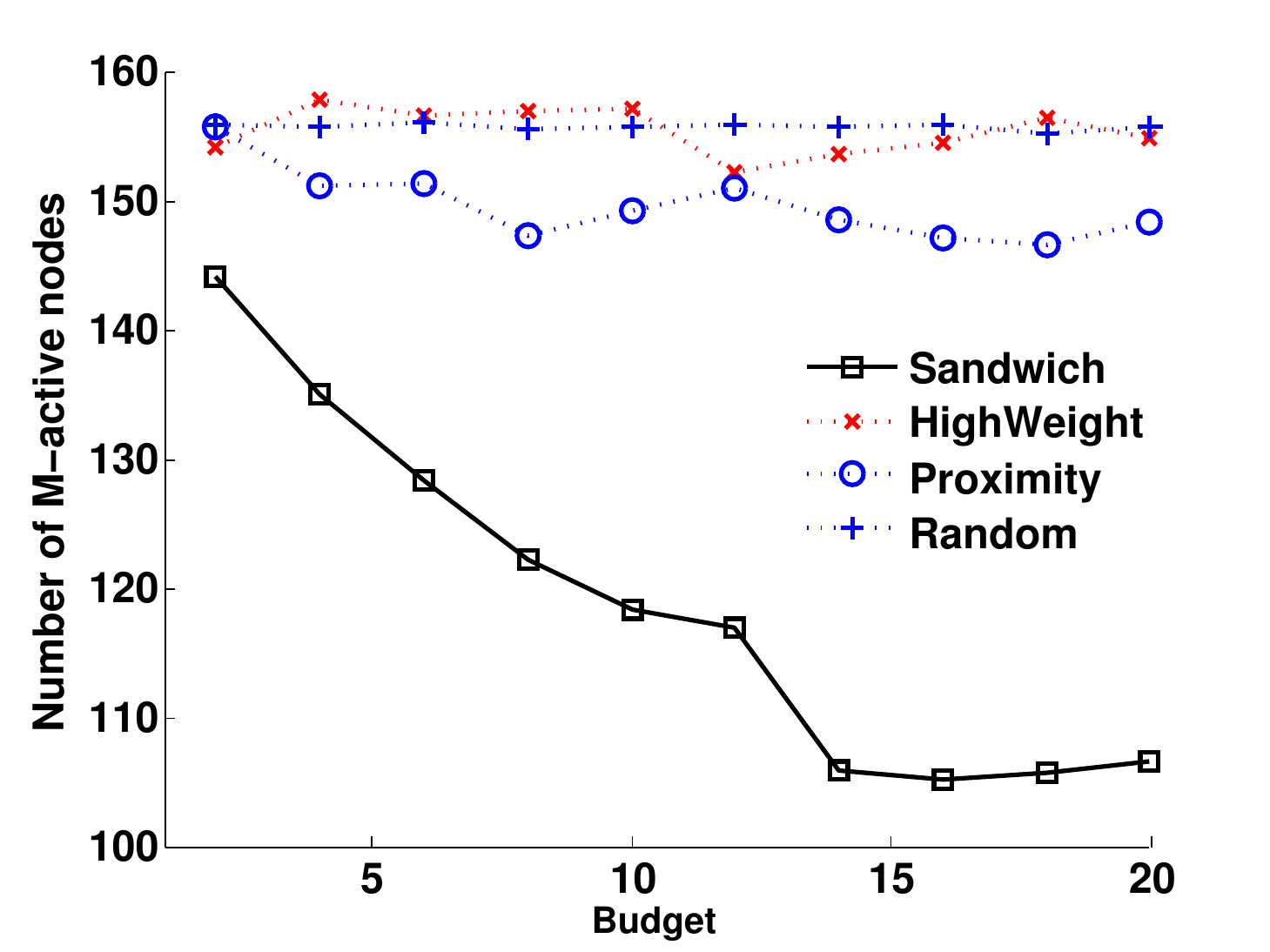}} \hspace{1mm}  
\subfloat[Five cascades ]{\label{fig: hepph_5}\includegraphics[width=0.24\textwidth]{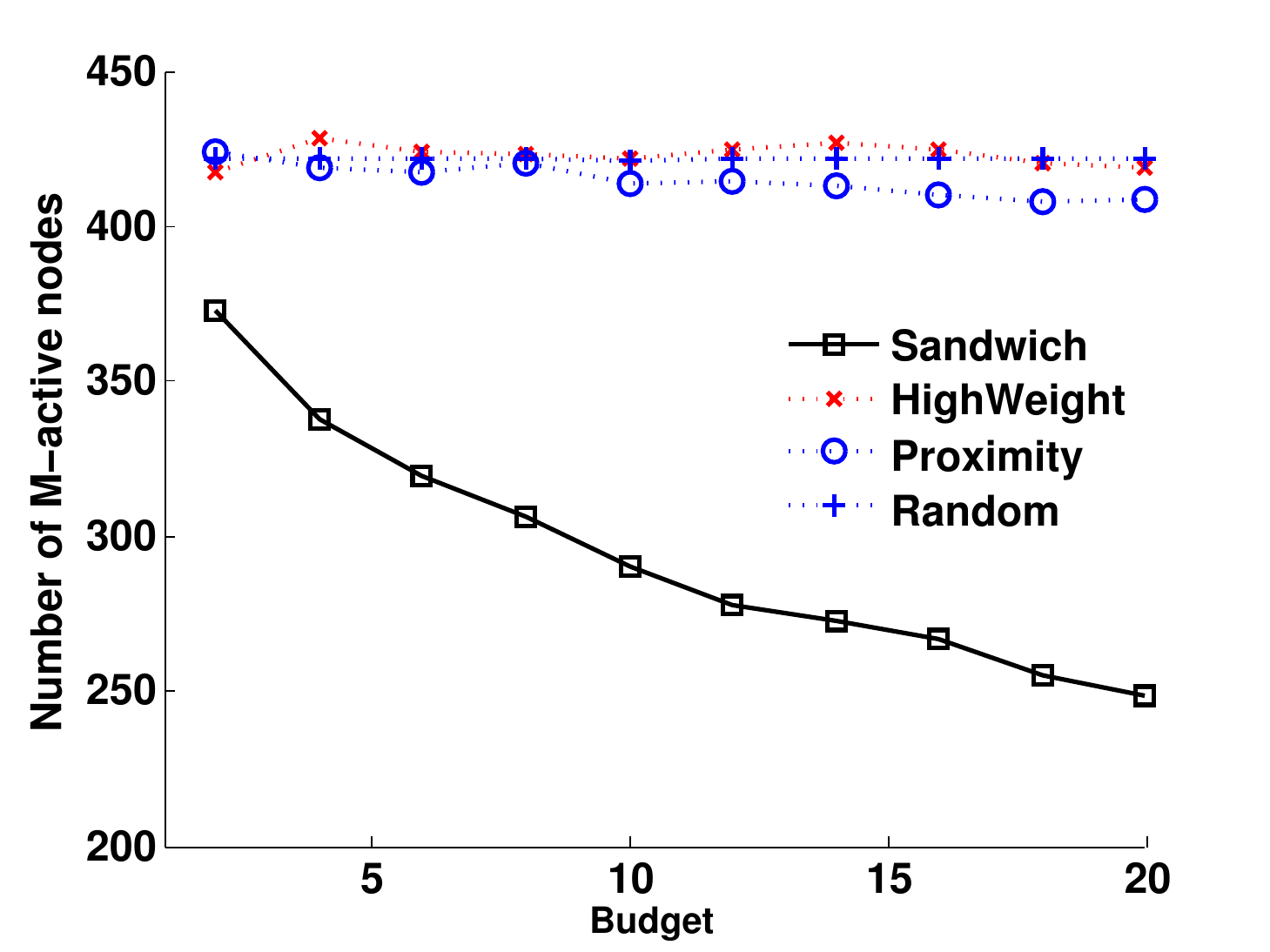}} \hspace{1mm}  
\subfloat[Ten cascades ]{\label{fig: hepph_10}\includegraphics[width=0.24\textwidth]{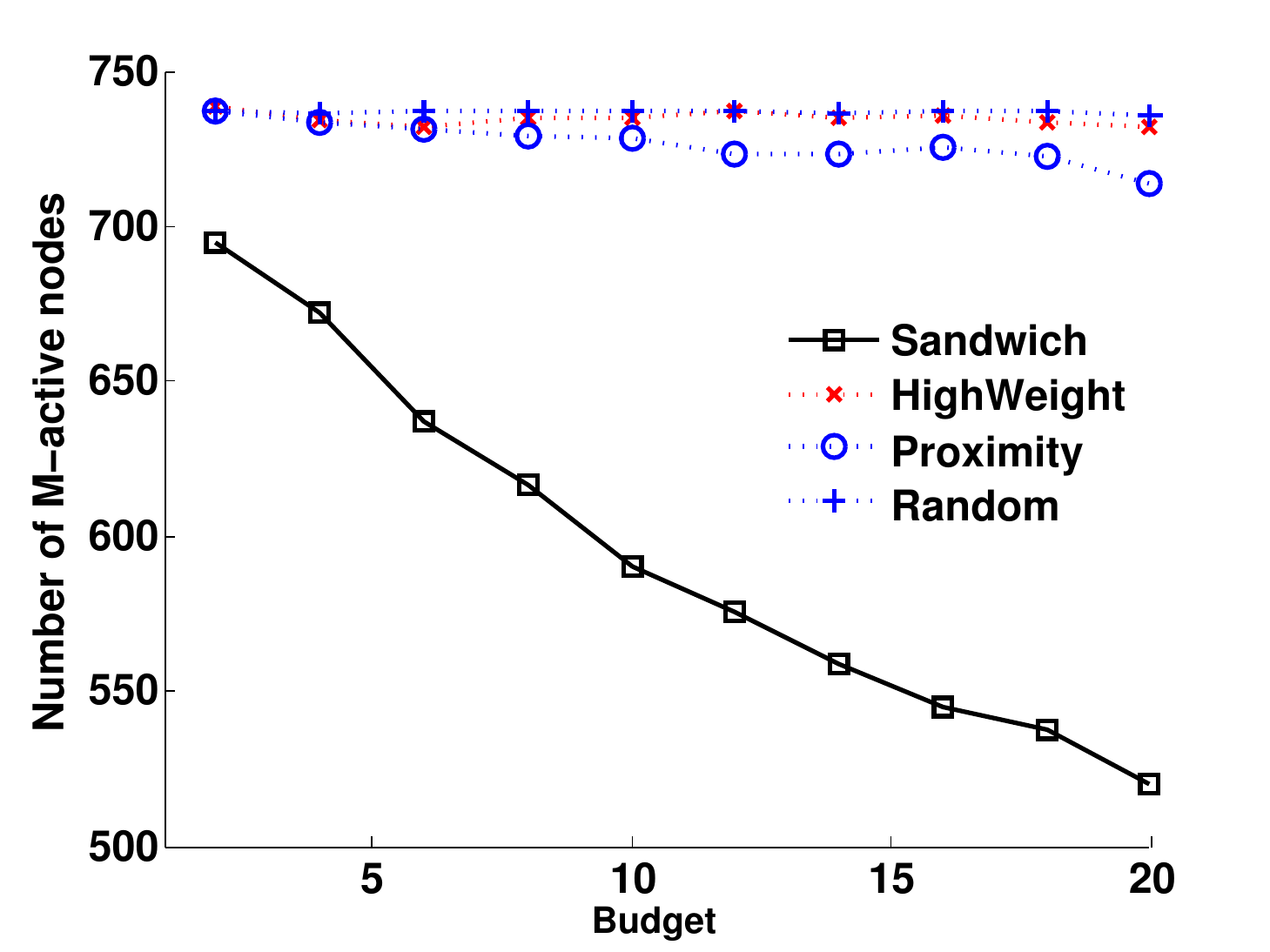}} \hspace{1mm}  
\subfloat[Performance bound ]{\label{fig: hepph_raio}\includegraphics[width=0.24\textwidth]{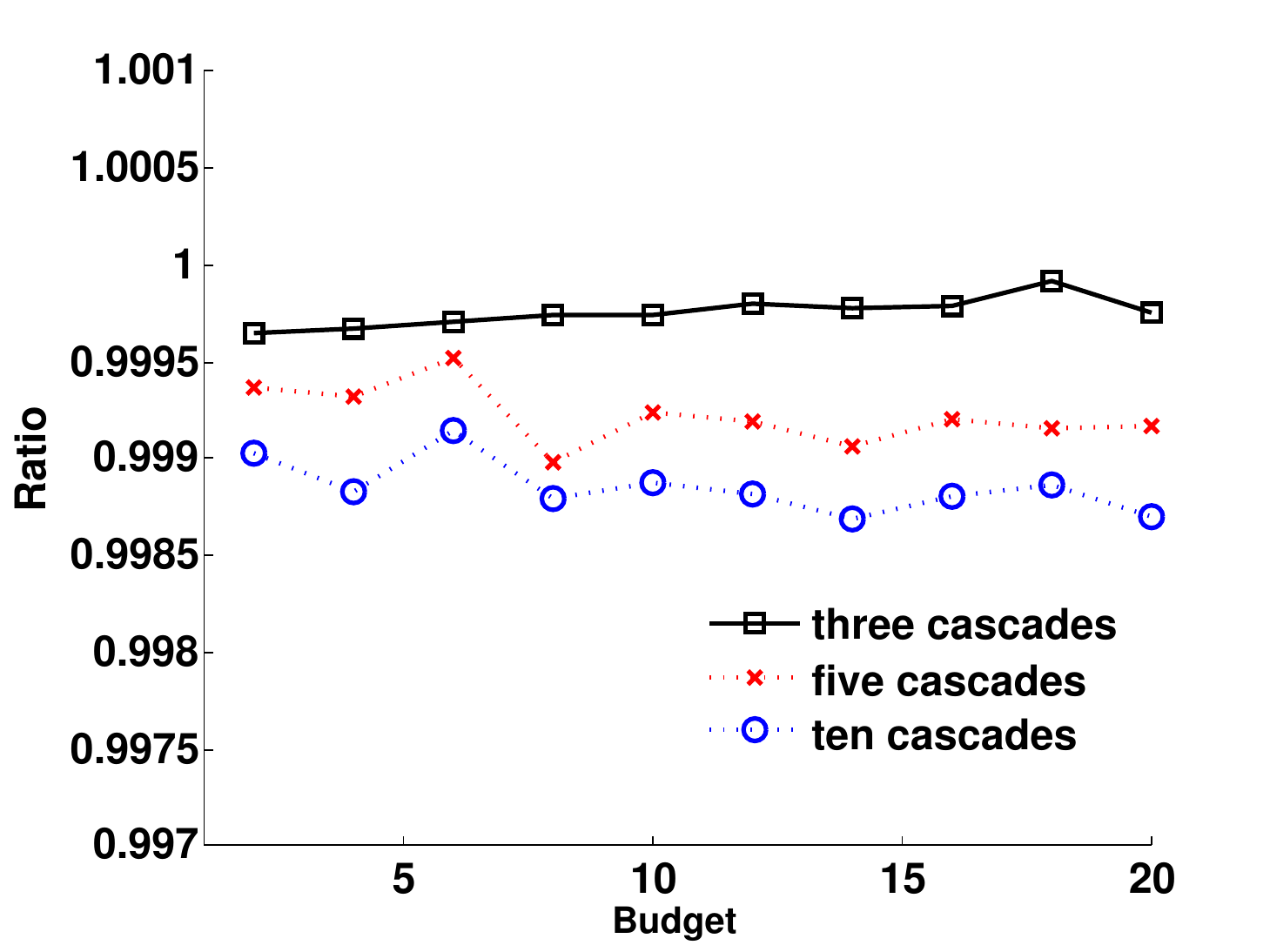}} \hspace{0mm}  %
\vspace{-2mm}
\caption{Results on HepPh.} 
\label{fig: hepph}
\vspace{-5mm}
\end{figure}

\textbf{Major observations.} First, as shown in the figures, ALG. \ref{alg: alg} consistently provides the best performance. Comparing it to other baseline methods, the superiority of ALG. \ref{alg: alg} can be very significant when the budget becomes large. As shown in Fig. \ref{fig: H_10k_3}, on Higgs-10K, when there are three cascades and the budget is equal to 20,  ALG. \ref{alg: alg} is able to reduce the number of $\M$-active nodes from 180 to 100, while other methods can hardly make it below 160. Another important observation is that the ratio ${f}(\overline{S}_*)/\overline{f}(\overline{S}_*)$ is very close to 1 in practice. For example, on HepPh, this ratio is always larger than 0.9985. This means the performance ratio of ALG. \ref{alg: alg} is guaranteed to be very close to $1-1/e$ on such datasets. From Example \ref{example: property} and the proofs of Theorems \ref{theorem: M-P-dominant} and  \ref{theorem: homo} we can see that the non-submodularity only occurs in the case when two or more cascades arrive at one node at the same time. Thus, if such a scenario does not happen frequently, the Max-$\M$ and Min-$\onm$ problems will be close to submodular optimization problems, and consequently, the greedy algorithm is effective. While ${f}(\overline{S}_*)/\overline{f}(\overline{S}_*)$ is data-dependent, we have observed that it is very close to 1 under all the considered datasets, which indicates that the approximation ratio is near-constant. 

\textbf{Minor observations.} We can also observe that Random offers no help in misinformation containment and HighWeight is also futile in many cases (e.g., Figs. \ref{fig: H_10k_3}, \ref{fig: H_100k_3} and \ref{fig: hepph_5} where it has the same performance as that of Random). In addition, Proximity performs slightly better than HighWeight does but it can still fail to reduce the number of $\M$-active users when budget increases, i.e., the curve is not monotone decreasing. We have also observed that ALG. \ref{alg: alg} strictly outperforms that solely running ALG. \ref{alg: greedy} on $f_{\onm}$, which means approximating the upper bound and lower bound can provide better solutions. The results of this part can be found in our supplementary material.  

\section{Conclusion}
In this paper, we study the MC problem under the general case where there is an arbitrary number of cascades. The considered scenario is more realistic and it applies to complicated real applications in online social networks. We provide a formal model and address the MC problem from the view of combinatorial optimization. We show the MC problem is not only NP-hard but also admits strong inapproximability property. We propose three types of cascade priority and show that the MC problem can be close to submodular optimization problems. An effective algorithm for solving the MC problem is designed and evaluated by experiments.

\newpage
\subsubsection*{Acknowledgments}
This work is supported in part by NSF under grant \#1747818 and a start-up grant from the University of Delaware.


\begin{thebibliography}{1}


\ifx \showCODEN    \undefined \def \showCODEN     #1{\unskip}     \fi
\ifx \showDOI      \undefined \def \showDOI       #1{#1}\fi
\ifx \showISBNx    \undefined \def \showISBNx     #1{\unskip}     \fi
\ifx \showISBNxiii \undefined \def \showISBNxiii  #1{\unskip}     \fi
\ifx \showISSN     \undefined \def \showISSN      #1{\unskip}     \fi
\ifx \showLCCN     \undefined \def \showLCCN      #1{\unskip}     \fi
\ifx \shownote     \undefined \def \shownote      #1{#1}          \fi
\ifx \showarticletitle \undefined \def \showarticletitle #1{#1}   \fi
\ifx \showURL      \undefined \def \showURL       {\relax}        \fi
\providecommand\bibfield[2]{#2}
\providecommand\bibinfo[2]{#2}
\providecommand\natexlab[1]{#1}
\providecommand\showeprint[2][]{arXiv:#2}

\bibitem[\protect\citeauthoryear{Budak, Agrawal, and El~Abbadi}{Budak
  et~al\mbox{.}}{2011}]%
        {budak2011limiting}
\bibfield{author}{\bibinfo{person}{Ceren Budak}, \bibinfo{person}{Divyakant
  Agrawal}, {and} \bibinfo{person}{Amr El~Abbadi}.}
  \bibinfo{year}{2011}\natexlab{}.
\newblock \showarticletitle{Limiting the spread of misinformation in social
  networks}. In \bibinfo{booktitle}{{\em Proceedings of the 20th international
  conference on World wide web}}. ACM, \bibinfo{pages}{665--674}.
\newblock


\end{thebibliography}

\section*{References}
[1] Kumar, KP Krishna, and G. Geethakumari. "Detecting misinformation in online social networks using cognitive psychology." Human-centric Computing and Information Sciences 4.1 (2014): 14.

[2] Budak, Ceren, Divyakant Agrawal, and Amr El Abbadi. "Limiting the spread of misinformation in social networks." In Proc. of WWW, 2011.

[3] He, Xinran, et al. "Influence blocking maximization in social networks under the competitive linear threshold model." In Proc. of SDM, 2012.

[4] Tong, Guangmo, et al. "An efficient randomized algorithm for rumor blocking in online social networks." In Proc. of INFOCOM, 2017.

[5] Allcott, Hunt, and Matthew Gentzkow. "Social media and fake news in the 2016 election." Journal of Economic Perspectives 31.2 (2017): 211-36.

[6] Kempe, David, Jon Kleinberg, and Éva Tardos. "Maximizing the spread of influence through a social network." In Proc. of SIGKDD, 2003.

[7] Chen, Wei, Chi Wang, and Yajun Wang. "Scalable influence maximization for prevalent viral marketing in large-scale social networks." In Proc. of SIGKDD, 2010.

[8] Borgs, Christian, et al. "Maximizing social influence in nearly optimal time." In Proc. of SODA, 2014.

[9] Tang, Youze, Yanchen Shi, and Xiaokui Xiao. "Influence maximization in near-linear time: A martingale approach." In Proc. of SIGMOD, 2015.

[10] Nguyen, Hung T., My T. Thai, and Thang N. Dinh. "Stop-and-stare: Optimal sampling algorithms for viral marketing in billion-scale networks." In Proc. of SIGMOD, 2016.

[11] Fan, Lidan, et al. "Least cost rumor blocking in social networks." In Proc. of ICDCS,  2013.

[12] H. Zhang, H. Zhang, X. Li, and M. T. Thai, “Limiting the spread of misinformation while effectively raising awareness in social networks,” In Proc. of CSoNet, 2015.

[13] N. P. Nguyen, G. Yan, M. T. Thai, and S. Eidenbenz, “Containment of misinformation spread in online social networks,” In Proc. of Websci, 2012.

[14] G. Tong, W. Wu and D. Z. Du, "Distributed Rumor Blocking With Multiple Positive Cascades," in IEEE Transactions on Computational Social Systems, 2018. 

[15] Miettinen, Pauli. "On the positive–negative partial set cover problem." Information Processing Letters 108.4 (2008): 219-221.

[16] Nemhauser, George L., Laurence A. Wolsey, and Marshall L. Fisher. "An analysis of approximations for maximizing submodular set functions—I." Mathematical Programming 14.1 (1978): 265-294.

[17] De Domenico, Manlio, et al. "The anatomy of a scientific rumor." Scientific reports 3 (2013): 2980.

[18] Li, Qiang, et al. "Influence Maximization with $\epsilon$-Almost Submodular Threshold Functions." In Proc. of NIPS, 2016.

[19] Lynn, Christopher, and Daniel D. Lee. "Maximizing influence in an ising network: A mean-field optimal solution." In Proc. of NIPS, 2016.

[20] He, Zaobo, et al. "Cost-efficient strategies for restraining rumor spreading in mobile social networks." IEEE Transactions on Vehicular Technology 66.3 (2017): 2789-2800.

[21] Wang, Biao, et al. "Drimux: Dynamic rumor influence minimization with user experience in social networks." IEEE Transactions on Knowledge and Data Engineering 29.10 (2017): 2168-2181.

[22] Lu, Wei, Wei Chen, and Laks VS Lakshmanan. "From competition to complementarity: comparative influence diffusion and maximization." In Proc. of the VLDB Endowment 9.2 (2015): 60-71.

[23] J. Leskovec and A. Krevl. (Jun. 2014). SNAP Datasets: 1071 Stanford Large Network Dataset Collection. [Online]. Available: 1072 http://snap.stanford.edu/data

[24] K. Smith. (April. 2018). Marketing: 115 Amazing Social Media Statistics and Facts [Online]. Available: https://www.brandwatch.com/blog/96-amazing-social-media-statistics-and-facts/

[25] Goyal, Amit, Francesco Bonchi, and Laks VS Lakshmanan. "Learning influence probabilities in social networks." In Proc. of WSDM, 2010.

[26] Leskovec, Jure, Daniel Huttenlocher, and Jon Kleinberg. "Predicting positive and negative links in online social networks." In Proc. of WWW, 2010.

[27] Zeng, Fue, Li Huang, and Wenyu Dou. "Social factors in user perceptions and responses to advertising in online social networking communities." Journal of interactive advertising 10.1 (2009): 1-13.

[28] Farajtabar, Mehrdad and Yang, Jiachen and Ye, Xiaojing and Xu, Huan and Trivedi, Rakshit and Khalil, Elias and Li, Shuang and Song, Le and Zha, Hongyuan. "Fake News Mitigation via Point Process Based Intervention." In Proc. of ICML, 2017.

[29] Kumar, Srijan, and Neil Shah. "False information on web and social media: A survey." arXiv preprint arXiv:1804.08559 (2018).

[30] Du, Nan, et al. "Scalable influence estimation in continuous-time diffusion networks." In Proc. of NIPS, 2013.

\section*{Supplementary material}
\section{Proofs}

\subsection{Proofs of Theorem 2 and 3}
We first provide some preliminaries. According to the model, with probability $p_{(u,v)}$ that $u$ can successfully activate $v$. We use $\g$ to denote a random subgraph sampled from $G$ where each edge $(u,v)$ appears in $\g$ with probability $p_{(u,v)}$. Each edge in $\g$ then has the propagation probability of 1. We use $\G$ to denote the set of all possible random graphs and use $\Pr[\g]$ to denote the probability that $\g$ can be sampled. Let $f^{\g}(\tau(P_*))$ be the number of $\onm$-active nodes in $\g$ under $\tau(P_*)$. Because the randomness of the diffusion process comes from that if each edge $(u, v)$ can be ``passed'', each graph in $\G$ is actually one possible outcome of the spreading process. Therefore, $f(\tau(P_*))$ can be represented as 
\begin{equation}
\label{eq: expansion}
f(\tau(P_*))=\sum_{\g \in \G} \Pr[\g]\cdot f^{\g}(\tau(P_*)).
\end{equation}
Because the properties of monotone nondecreasing and submodular are preserved under addition, to prove Theorem 2 or 3, it suffices to prove that $f^{\g}(\tau(P_*))$ is monotone nondecreasing and submodular. 

Let us first consider under which condition a node can be $\onm$-active or $\M$-active. For each $u, v \in V $ and $V^{'} \subseteq V$, we use $\dis_{\g}(u,v)$ to denote the length of the shortest path from $u$ to $v$ in $\g$, and define that $\dis_{\g}(V^{'},v)=\min_{u \in V^{'}}\dis_{\g}(u,v)$. Let $\tau(\M)=\cup_{M \in \M} \tau(M)$ be the union of the  seed sets of misinformation cascades, and define $\tau(\onm)=(\cup_{P \in \P} \tau(P)) \cup \tau(P_*)$ for the positive cascades. Note that $\tau(\M)$ is fixed while $\tau(\onm)$ depends on $\tau(P_*)$. Two important lemmas are given below.

\begin{lemma}
Under any cascade priority setting, a node $u \in V$ is $\onm$-active in $\g$ if $\dis_{\g}(\tau(\onm),u)<\dis_{\g}(\tau(\M),u)$.
\end{lemma}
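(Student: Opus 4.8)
The plan is to fix a realization $\g\in\G$ and argue entirely inside it, where every surviving edge transmits with probability $1$, so the diffusion becomes a deterministic synchronous breadth-first process. Write $\tau(\C)=\tau(\M)\cup\tau(\onm)$ for the union of all seed sets of the cascades in $\C$, and for any activated node $v$ let $t_{\g}(v)$ be the (unique) time step at which $v$ switches from $\emptyset$-active to some cascade state; a node never reached from $\tau(\C)$ in $\g$ stays $\emptyset$-active forever and is therefore $\onm$-active. Since the hypothesis $\dis_{\g}(\tau(\onm),u)<\dis_{\g}(\tau(\M),u)$ can only hold when $\dis_{\g}(\tau(\onm),u)$ is finite, $u$ is reachable from $\tau(\onm)$, hence activated.

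First I would prove a forward timing claim by induction on $\ell$: if there is a directed path of length $\ell$ in $\g$ from some node of $\tau(\C)$ to $v$, then $v$ is activated and $t_{\g}(v)\le \ell$. The base case $\ell=0$ is exactly the seeding rule at time step $0$. For the inductive step, the predecessor $w$ of $v$ on such a path is activated by time step $\ell-1$; if $v$ is not already active, then since the edge $(w,v)$ survives and transmits with probability $1$, $v$ is activated no later than time step $\ell$. Applying this to a positive seed realizing $\dis_{\g}(\tau(\onm),u)$ gives $t_{\g}(u)\le \dis_{\g}(\tau(\onm),u)$.

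Next I would prove a backward provenance claim by induction on $t_{\g}(v)$: if $v$ is $M$-active in $\g$ for some $M\in\M$, then $\dis_{\g}(\tau(\M),v)\le t_{\g}(v)$. If $t_{\g}(v)=0$, then $v$ is a seed of $M$, so $\dis_{\g}(\tau(\M),v)=0$. If $t_{\g}(v)=t\ge 1$, then by the diffusion rule $v$ adopts the state $\pi_{t-1}(u^*)$ of some in-neighbour $u^*$ that was activated at time step exactly $t-1$; since $v$ is $M$-active this forces $\pi_{t-1}(u^*)=M$, so $u^*$ is $M$-active with $t_{\g}(u^*)=t-1$, and the induction hypothesis yields $\dis_{\g}(\tau(\M),u^*)\le t-1$; appending the edge $(u^*,v)$ gives $\dis_{\g}(\tau(\M),v)\le t$. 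The one point that has to be handled with a little care is that only nodes activated at the immediately preceding step attempt activation, which is precisely what makes $t_{\g}(v)$ a single well-defined value and lets the induction close cleanly.

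Finally I would combine the two claims: if $u$ were $M$-active for some $M\in\M$, then $\dis_{\g}(\tau(\M),u)\le t_{\g}(u)\le \dis_{\g}(\tau(\onm),u)<\dis_{\g}(\tau(\M),u)$, a contradiction. Hence $u$ is not $M$-active for any $M\in\M$, i.e., $u$ is $\onm$-active in $\g$. I do not expect a genuine obstacle here; the work is the bookkeeping of activation times and getting the base cases of the two inductions right.
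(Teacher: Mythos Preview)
Your argument is correct. The two inductions are sound: the forward timing claim gives $t_{\g}(u)\le\dis_{\g}(\tau(\onm),u)$, the backward provenance claim gives $\dis_{\g}(\tau(\M),u)\le t_{\g}(u)$ whenever $u$ is $\M$-active, and the contradiction closes cleanly.

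The paper's proof is organized differently. It fixes a single shortest path $(v_0,\dots,v_l)$ from a node of $\tau(\onm)$ realizing the minimum distance to $u$, and runs one forward induction along that path showing simultaneously that $v_i$ is activated at time exactly $i$ \emph{and} that $v_i$ is $\onm$-active; the key step uses that every intermediate $v_{i+1}$ inherits the hypothesis $\dis_{\g}(\tau(\onm),v_{i+1})<\dis_{\g}(\tau(\M),v_{i+1})$, so no misinformation cascade can reach $v_{i+1}$ by time $i+1$. Your decomposition separates timing from provenance into two independent lemmas and finishes by contradiction rather than constructively. This buys you modularity: your backward provenance claim is exactly the symmetric statement needed for Lemma~2, so you get both lemmas from the same pair of sublemmas, whereas the paper re-runs the path-tracing induction with roles swapped. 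The paper's approach, on the other hand, is slightly more informative in that it identifies a concrete $\onm$-active in-neighbor delivering the cascade to $u$, which matters later in the homogeneous-priority argument where one needs to track \emph{which} cascade arrives.
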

\begin{proof}
Let $a \in \tau(\onm)$ be a node such that $\dis_{\g}(a,u)=\dis_{\g}(\tau(\onm),u)$,  and $(v_0,...,v_{l})$ be the shortest path from $a$ to $u$, where $v_0=a$ and $v_{l}=u$. Assuming $\dis_{\g}(\tau(\onm),u)<\dis_{\g}(\tau(\M),u)$, we prove that $v_0,...,v_{l}$ are all $\onm$-active and $v_i$ will be activated at time step $i$. We prove this by induction. Because $\dis_{\g}(a,u)=\dis_{\g}(\tau(\onm),u)<\dis_{\g}(\tau(\M),u)$, $v_0=a$ is not a seed node of any misinformation cascade, and therefore $v_0$ is $\onm$-active at time step 0. Suppose that, for some $i$ with $0 < i < {l}$, $v_0,...,v_i$ are all $\onm$-active and $v_i$ is activated at time step $i$. Now we prove that $v_{i+1}$ will be $\onm$-active at time step $i+1$. Because $i+1$ is the length of the shortest path from any seed node to $u$, $u$ cannot be activated before time step $i+1$. Furthermore, by the inductive hypothesis, $v_i$ will activate $v_{i+1}$ at time step $i+1$, so $v_{i+1}$ will be activated at time step $i+1$ by $v_i$ or other in-neighbors. Finally, because  $\dis_{\g}(\tau(\onm),u)<\dis_{\g}(\tau(\M),u)$ and $v_{i+1}$ is on the shortest path from $a$ to $u$, we have $\dis_{\g}(\tau(\onm),v_{i+1})<\dis_{\g}(\tau(\M),v_{i+1})$, which means any path from any misinformation seed node to $v_{i+1}$ must have a length larger than $i+1$. Therefore, $v_{i+1}$ cannot be $M$-active at time step $i+1$ for any $M \in \M$ and it must be $\onm$-active. By induction, $u=v_{l}$ will be $\onm$-active and it will be activated at time step $l$. 
\end{proof}

We can prove the following lemma in a similar way.
\begin{lemma}
Under any cascade priority setting, a node $u \in V$ is $\M$-active if $\dis_{\g}(\M,u)<\dis_{\g}(\tau(\onm),u)$.
\end{lemma}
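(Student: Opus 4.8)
The plan is to mirror the proof of the preceding lemma with the roles of the misinformation seeds and the positive seeds exchanged. Fix a realization $\g$ and assume $\dis_{\g}(\tau(\M),u)<\dis_{\g}(\tau(\onm),u)$. First I would pick a misinformation seed $a\in\tau(\M)$ attaining $\dis_{\g}(a,u)=\dis_{\g}(\tau(\M),u)=:l$, fix a shortest path $(v_0,\dots,v_l)$ in $\g$ with $v_0=a$ and $v_l=u$, and prove by induction on $i$ that each $v_i$ is $\M$-active and is activated no later than time step $i$.

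For the base case, the strict inequality forces $a\notin\tau(\onm)$ (otherwise $\dis_{\g}(\tau(\onm),u)\le\dis_{\g}(a,u)=\dis_{\g}(\tau(\M),u)$, a contradiction), so at step $0$ node $a$ becomes $C^*$-active with $C^*=\argmax_{C:\,a\in\tau(C)}\pri_a(C)$, and since every cascade seeding $a$ lies in $\M$, $v_0$ is $\M$-active. For the inductive step, assume $v_0,\dots,v_i$ are $\M$-active with $v_i$ activated at some step $t_i\le i$. Since $v_{i+1}$ lies on a shortest $a$-to-$u$ path, $\dis_{\g}(a,v_{i+1})=i+1$ and $\dis_{\g}(a,v_{i+1})+\dis_{\g}(v_{i+1},u)=l$; combining this with $\dis_{\g}(\tau(\onm),v_{i+1})+\dis_{\g}(v_{i+1},u)\ge\dis_{\g}(\tau(\onm),u)>l$ gives $\dis_{\g}(\tau(\onm),v_{i+1})>i+1$. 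Hence no positive seed can reach $v_{i+1}$ within $i+1$ steps, so whenever $v_{i+1}$ is activated at a step $\le i+1$ it is activated by a misinformation cascade; and since $v_i$ is $\M$-active at step $t_i\le i$, the node $v_{i+1}$ is indeed activated (by $v_i$ or another in-neighbor) by step $t_i+1\le i+1$. Therefore $v_{i+1}$ is $\M$-active and activated by step $i+1$. Taking the induction up to $i=l$ yields that $u=v_l$ is $\M$-active.

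The one delicate point — identical to the issue in the previous lemma — is that $v_{i+1}$ might get activated strictly before step $i+1$ by some other misinformation seed, so one cannot pin down the exact activation time; the distance bound $\dis_{\g}(\tau(\onm),v_{i+1})>i+1$ is precisely what ensures that even in that case the activating cascade still lies in $\M$, which is all that is needed. Everything else is a routine transcription of the earlier argument, after which the lemma for $f$ (rather than $f^{\g}$) follows by averaging over $\g$ via Eq.~\eqref{eq: expansion}, though only the per-realization statement is asserted here.
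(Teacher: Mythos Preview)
Your proposal is correct and follows exactly the route the paper intends: the paper's own proof of this lemma is the single sentence ``We can prove the following lemma in a similar way,'' deferring to the symmetric argument of Lemma~1, which is precisely what you carry out. One small remark: your worry that $v_{i+1}$ might be activated strictly before step $i+1$ is in fact unfounded, since $\dis_{\g}(\tau(\M),v_{i+1})\ge i+1$ as well (same triangle-inequality computation with $\dis_{\g}(\tau(\M),u)=l$), so the activation time is exactly $i+1$ as in the paper's Lemma~1 proof; but your weaker ``no later than $i+1$'' hypothesis together with the distance bound on $\tau(\onm)$ is also a perfectly valid way to close the induction.
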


Lemma 1 and 2 give the necessary conditions for a node to be $\M$-active or $\onm$-active.

\subsubsection{M-dominant cascade priority}

Now let us consider the M-dominant cascade priority. The following lemma shows a necessary and sufficient condition for a node to be $\onm$-active under the M-dominant cascade priority.

\begin{lemma}
\label{lemma: key_m_dominant}
Under the M-dominant cascade priority, given the seed sets,  a node $u$ is $\onm$-active in $\g$ if and only if $\dis_{\g}(\tau(\onm),u)<\dis_{\g}(\tau(\M),u)$.
\end{lemma}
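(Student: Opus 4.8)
The plan is to prove the two directions of the equivalence separately. The ``if'' direction, namely that $\dis_{\g}(\tau(\onm),u)<\dis_{\g}(\tau(\M),u)$ forces $u$ to be $\onm$-active, is exactly Lemma 1 (which holds under \emph{any} cascade priority), so nothing new is needed there. All the work is in the converse, which I will prove in contrapositive form: if $\dis_{\g}(\tau(\M),u)\le\dis_{\g}(\tau(\onm),u)$ and $\dis_{\g}(\tau(\M),u)<\infty$, then $u$ is $\M$-active, hence not $\onm$-active.

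Set $d=\dis_{\g}(\tau(\M),u)$ and fix a shortest path $v_0,v_1,\dots,v_d=u$ in $\g$ from some misinformation seed $v_0\in\tau(\M)$ to $u$. The claim I would establish by induction on $i$ is: \emph{$v_i$ is $\M$-active and is activated exactly at time step $i$}. Two observations drive the induction. First, the prefix $v_0,\dots,v_i$ is a shortest path from $\tau(\M)$ to $v_i$, so $\dis_{\g}(\tau(\M),v_i)=i$; moreover $\dis_{\g}(\tau(\onm),v_i)\ge i$, since a positive path to $v_i$ of length $<i$ concatenated with the sub-path $v_i\to u$ of length $d-i$ would yield $\dis_{\g}(\tau(\onm),u)<d$, contradicting the hypothesis. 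Hence $v_i$ cannot be activated before time step $i$ by anyone. Second, M-dominance means every misinformation cascade outranks every positive cascade at every node. The base case: $v_0\in\tau(\M)$, so even if $v_0$ is simultaneously a seed of some positive cascade, the time-$0$ tie is broken toward an $M\in\M$ by M-dominance, so $v_0$ is $\M$-active at time $0$. The inductive step: by the hypothesis $v_i$ is $\M$-active at time $i$, so at time $i+1$ it attempts and (as $p_e=1$ in $\g$) succeeds in activating $v_{i+1}$; by the first observation $v_{i+1}$ is still $\emptyset$-active just before time $i+1$, hence it is activated at time $i+1$ by the set $A$ of in-neighbors that fire then. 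Every member of $A$ was activated at time $i$ and is either $\M$-active or positive-active, and $v_i\in A$ is $\M$-active; since M-dominance makes a misinformation cascade the highest-priority choice available, $v_{i+1}$ becomes $\M$-active. This closes the induction, so $u=v_d$ is $\M$-active.

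The main obstacle is precisely the second half of the first observation: ruling out a positive cascade ``beating'' the misinformation cascade to an interior vertex of the chosen shortest path. Once rephrased in terms of path lengths in the fixed realization $\g$, this reduces to the short path-concatenation argument above, after which M-dominance handles the remaining equal-arrival-time ties automatically. The one loose end to address explicitly is the degenerate case $\dis_{\g}(\tau(\M),u)=\infty$: then also $\dis_{\g}(\tau(\onm),u)=\infty$ and $u$ stays $\emptyset$-active, i.e.\ $\onm$-active, so the statement is understood to apply to nodes reachable from $\tau(\M)$ (equivalently, one adopts the convention that the strict inequality holds whenever $\dis_{\g}(\tau(\M),u)=\infty$); such nodes contribute only a fixed additive constant to $f$ and do not affect the subsequent monotonicity and submodularity arguments built on this lemma.
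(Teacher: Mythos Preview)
Your proposal is correct and follows essentially the same approach as the paper: invoke Lemma~1 for the ``if'' direction, and for the ``only if'' direction argue the contrapositive by induction along a shortest path from a misinformation seed, using M-dominance to resolve ties. Your write-up is in fact more careful than the paper's, as you explicitly justify via the path-concatenation argument why no positive seed can reach an interior vertex $v_i$ before time $i$, and you flag the $\dis_{\g}(\tau(\M),u)=\infty$ boundary case, both of which the paper glosses over.
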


\begin{proof}
$\implies:$ Let $a \in \tau(\M)$ be a node such that $\dis_{\g}(a,u)=\dis_{\g}(\tau(\M),u)$,  and $(v_0,...,v_{l})$ be the shortest path from $a$ to $u$, where $v_0=a$ and $v_{l}=u$. We prove the contrapositive. That is, assuming $\dis_{\g}(\tau(\onm),u)\geq \dis_{\g}(\tau(\M),u)$, we prove that $v_0,...,v_{l}$ are all $\M$-active and $v_i$ will be activated at time step $i$. Again, we prove this by induction. Because $v_0=a$ is a seed node of some misinformation cascade and the cascade setting is M-dominant, $v_0$ will be $\M$-active at time step $0$. Suppose that, for some $i$ with $0 < i < l$, $v_0,...,v_i$ are all $\M$-active and $v_i$ is activated at time step $i$. Now we prove that $v_{i+1}$ will be $\M$-active at time step $i+1$. Because $i+1$ is the length of the shortest path from any seed node to $v_i$, $v_i$ cannot be activated before time step $i+1$. Furthermore, by the inductive hypothesis, $v_i$ will activate $v_{i+1}$ at time step $i+1$ so $v_{i+1}$ will be activated at time step $i+1$ by $v_i$ or other in-neighbors. Finally, because the cascade priority is M-dominant and $v_i$ is $\M$-active, $v_i$ must be $\M$-active. By induction, $u=v_{l}$ will be $\M$-active and it will be activated at time step $l$. 

$\impliedby:$ This part is exactly the Lemma 1.

\end{proof}

Now we are ready to prove that $f^{\g}(\tau(P_*))$ is monotone nondecreasing and submodular. According to Lemma \ref{lemma: key_m_dominant}, $f^{\g}(\tau(P_*))$ can be expressed as \[f^{\g}(\tau(P_*))=\sum_{u \in V} f^{\g}(\tau(P_*),u),\] where $f^{\g}(\tau(P_*),u)$ is defined as
\begin{equation}
\label{eq: M-condition}
 f^{\g}(\tau(P_*),u) =
  \begin{cases}
  1 &  \hspace{0mm} \hspace{-0.5mm} \text{if $\dis_{\g}(\tau(\onm),u)<\dis_{\g}(\tau(\M),u)$} \\
  0 & \hspace{0mm} \hspace{-0.5mm} \text{else} 
  \end{cases},
\end{equation}
where $\tau(\onm)$ depends on $\tau(P_*)$. Now it suffices to prove that $f^{\g}(\tau(P_*),u)$ is monotone nondecreasing and submodular with respect to $\tau(P_*)$.
\begin{lemma}
$f^{\g}(\tau(P_*),u)$ is monotone nondecreasing and submodular for each $\g \in \G$ and $u \in V$.
\end{lemma}

\begin{proof}
It is clear monotone nondecreasing because adding one node to $\tau(P_*)$ will not increase $\dis_{\g}(\tau(\onm),u)$. To prove the submodularity, it suffices to prove that for each $S_1\subseteq S_2 \subseteq V^*$ and $x \notin S_2$, \[f^{\g}(S_1\cup\{x\},u) -f^{\g}(S_1,u) \geq f^{\g}(S_2\cup\{x\},u) -f^{\g}(S_2,u).\] Because $f^{\g}(\tau(P_*),u)$ is monotone nondecreasing and it can be only 0 or 1, it suffices to show that $f^{\g}(S_1\cup\{x\},u) -f^{\g}(S_1,u)=1$ whenever $f^{\g}(S_2\cup\{x\},u) -f^{\g}(S_2,u)=1$. If  $f^{\g}(S_2\cup\{x\},u) -f^{\g}(S_2,u)=1$, then $f^{\g}(S_2\cup\{x\},u)=1$ and $f^{\g}(S_2,u)=0$. Because $f^{\g}(S_2,u)=0$ and $\dis_{\g}(S_1,u) \geq \dis_{\g}(S_2,u)$, we have $f^{\g}(S_1,u)=0$ . Because $f^{\g}(S_2\cup\{x\},u)=1$ and $f^{\g}(S_2,u)=0$, by Eq. (\ref{eq: M-condition}), we have $\dis_{\g}(x,u) < \dis_{\g}(\tau(\M),u)$. Therefore, $\dis_{\g}(S_1 \cup \{x\},u) < \dis_{\g}(\tau(\M),u)$ and $f^{\g}(S_1\cup\{x\},u)=1$. So $f^{\g}(S_1\cup\{x\},u) -f^{\g}(S_1,u)$ is also equal to 1.

\end{proof}

\subsubsection{P-dominant cascade priority}
Now we prove the P-dominant case. The proof is similar to that of the M-dominant case. We use the following lemma analogous to Lemma \ref{lemma: key_m_dominant}.

\begin{lemma}
\label{lemma: key_p_dominant}
Under the P-dominant cascade priority, a node $u$ is $\onm$-active in $\g$ if and only if $\dis_{\g}(\tau(\onm),u) \leq \dis_{\g}(\tau(\M),u)$.
\end{lemma}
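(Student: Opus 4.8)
The plan is to mirror the structure of the M-dominant proof, adapting each step to account for the strict-versus-nonstrict inequality that distinguishes the P-dominant case. First I would prove the forward direction ($\Longrightarrow$) by its contrapositive: assuming $\dis_{\g}(\tau(\onm),u) > \dis_{\g}(\tau(\M),u)$, I would take $a \in \tau(\M)$ realizing $\dis_{\g}(a,u) = \dis_{\g}(\tau(\M),u)$ together with a shortest path $(v_0,\dots,v_l)$ from $a$ to $u$, and show by induction on $i$ that every $v_i$ is $\M$-active and is activated at time step $i$. The base case holds because $v_0 = a$ is a misinformation seed. For the inductive step, since $i+1$ is strictly less than the distance from any positive seed to $v_{i+1}$ (this is where the strict gap $\dis_{\g}(\tau(\onm),v_{i+1}) > \dis_{\g}(\tau(\M),v_{i+1})$ is used), no positive cascade can have reached $v_{i+1}$ by time $i+1$; so $v_{i+1}$ is activated at time $i+1$ by $\M$-active in-neighbors only, hence it is $\M$-active. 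This gives that $u = v_l$ is $\M$-active, which is the contrapositive of the forward implication.

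Next I would handle the reverse direction ($\Longleftarrow$): assuming $\dis_{\g}(\tau(\onm),u) \leq \dis_{\g}(\tau(\M),u)$, show $u$ is $\onm$-active. This is the place where the P-dominant hypothesis is essential and where the argument genuinely differs from Lemma~1 (which only covered the strict case). I would take $a \in \tau(\onm)$ realizing the distance and a shortest path $(v_0,\dots,v_l)$, and again induct: each $v_i$ is $\onm$-active and activated at time step $i$. The base case uses that $v_0 = a$ is a positive seed, so it is $\onm$-active at time $0$. In the inductive step, $v_{i+1}$ is activated at time $i+1$ (no seed reaches it earlier), possibly by a mix of in-neighbors; among the in-neighbors activating it, some are $\onm$-active (namely $v_i$), and crucially, since $\dis_{\g}(\tau(\onm),v_{i+1}) \leq \dis_{\g}(\tau(\M),v_{i+1})$, any $\M$-active in-neighbor competing at time $i+1$ lies at distance exactly $i$ from a misinformation seed, so ties are possible — but under P-dominant priority every positive cascade outranks every misinformation cascade, so $v_{i+1}$ adopts a positive cascade and is $\onm$-active.

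The main obstacle is the reverse direction's tie-breaking analysis: unlike the M-dominant case, where the forward half reduces to the already-proved Lemma~1 and the strict inequality makes everything clean, here the non-strict inequality $\leq$ forces me to carefully argue that when $\onm$-active and $\M$-active in-neighbors arrive simultaneously at $v_{i+1}$, the P-dominant rule resolves the conflict in favor of $\onm$. I would then conclude exactly as in the M-dominant subsection: write $f^{\g}(\tau(P_*)) = \sum_{u \in V} f^{\g}(\tau(P_*),u)$ with the indicator now keyed to the condition $\dis_{\g}(\tau(\onm),u) \leq \dis_{\g}(\tau(\M),u)$, and repeat the short monotonicity-and-submodularity argument for each $0/1$ coordinate function — adding a node to $\tau(P_*)$ can only decrease $\dis_{\g}(\tau(\onm),u)$, and if including $x$ flips the indicator from $0$ to $1$ for a superset $S_2$ it must do so via $\dis_{\g}(x,u) \leq \dis_{\g}(\tau(\M),u)$, which then flips it for every subset $S_1$ as well. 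Summing over $u$ and then over $\g \in \G$ via Eq.~(\ref{eq: expansion}) yields Theorem~\ref{theorem: M-P-dominant} in the P-dominant case.
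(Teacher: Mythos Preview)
Your proposal is correct and follows essentially the same approach as the paper. The only cosmetic difference is that the paper dispatches the forward direction in one line by invoking Lemma~2 (whose contrapositive immediately gives $\dis_{\g}(\tau(\onm),u) \leq \dis_{\g}(\tau(\M),u)$ whenever $u$ is $\onm$-active), whereas you re-prove that direction by an explicit shortest-path induction; your reverse-direction induction and the subsequent indicator/submodularity argument match the paper's Lemma~\ref{lemma: key_p_dominant} and the ensuing Lemma~6 essentially verbatim.
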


\begin{proof}
$\impliedby:$ Let $a \in \tau(\onm)$ be the node such that $\dis_{\g}(a,u)=\dis_{\g}(\tau(\onm),u)$,  and $(v_0,...,v_{l})$ be the shortest path from $a$ to $u$ where $v_0=a$ and $v_{l}=u$. Assuming $\dis_{\g}(\tau(\onm),u) \leq \dis_{\g}(\tau(\M),u)$, we can prove that $v_0,...,v_{l}$ are all $\onm$-active and $v_i$ will be activated at time step $i$. This is similar to the $``\implies"$ part in the proof of Lemma \ref{lemma: key_m_dominant}. 

$\implies:$ This part follows from Lemma 2. 
\end{proof}

Therefore, for the P-dominant case, $f^{\g}(\tau(P_*))$ can be represented as $f^{\g}(\tau(P_*))=\sum_{u \in V}\cdot f^{\g}(\tau(P_*),u),$ where $f^{\g}(\tau(P_*),u)$ is defined as
\begin{equation}
\label{eq: P_condition}
 f^{\g}(\tau(P_*),u) =
  \begin{cases}
  1 &  \hspace{0mm} \hspace{-0.5mm} \text{if $\dis_{\g}(\tau(\M),u)\leq \dis_{\g}(\tau(\onm),u)$} \\
  0 & \hspace{0mm} \hspace{-0.5mm} \text{else} 
  \end{cases}.
\end{equation}
Now it suffices to prove that $f^{\g}(\tau(P_*),u)$ is monotone nondecreasing and submodular.
\begin{lemma}
$f^{\g}(\tau(P_*),u)$ is monotone nondecreasing and submodular for each $\g \in \G$ and $u \in V$.
\end{lemma}

\begin{proof}
It is clear monotone nondecreasing as adding one node to $\tau(P_*)$ will not increase $\dis_{\g}(\tau(\M),u)$. To prove submodularity, it suffices to prove that for each $S_1\subseteq S_2 \subseteq V^*$ and $x \notin S_2$, 
\[f^{\g}(S_1\cup\{x\},u) -f^{\g}(S_1,u) \geq f^{\g}(S_2\cup\{x\},u) -f^{\g}(S_2,u).\] 
It suffices to show that $f^{\g}(S_1\cup\{x\},u) -f^{\g}(S_1,u)=1$ whenever $f^{\g}(S_2\cup\{x\},u) -f^{\g}(S_2,u)=1$. If $f^{\g}(S_2\cup\{x\},u) -f^{\g}(S_2,u)=1$, then $f^{\g}(S_2\cup\{x\},u)=1$ and $f^{\g}(S_2,u)=0$. Because $f^{\g}(S_2,u)=0$ and $\dis_{\g}(S_1,u) \geq \dis_{\g}(S_2,u)$, we have $f^{\g}(S_1,u)=0$. Because $f^{\g}(S_2\cup\{x\},u)=1$ and $f^{\g}(S_2,u)=0$, by Eq. (\ref{eq: P_condition}), $\dis_{\g}(x,u) \leq \dis_{\g}(\tau(\M),u)$. Therefore, $\dis_{\g}(S_1 \cup \{x\},u) \leq \dis_{\g}(\tau(\M),u)$ and $f^{\g}(S_1\cup\{x\},u)=1$. So $f^{\g}(S_1\cup\{x\},u) -f^{\g}(S_1,u)$ is also equal to 1. 
\end{proof}

\subsubsection{Homogeneous cascade priority}
Since we are considering the homogeneous cascade priority, we denote $\pri_v()$ as $\pri()$ without mentioning any node. For each $u \in V$, $\g \in \G$ and $\tau(P_*) \subseteq V^*$, let 
\[A_{\g}(\tau(P_*),u)=\{v \in \tau(\M)\cup \tau(\onm)| \dis_{\g}(v,u)=\dis_{\g}(\tau(\M)\cup \tau(\onm),u)\}\] 
be the set of the node $v$ such that $\dis_{\g}(v,u)=\dis_{\g}(\tau(\M)\cup \tau(\onm),u)$. Let \[\C_{\g}(\tau(P_*), u)=\{C|C \in \M \cup \P \cup \{P_*\},  \tau(C)\cap A_{\g}(\tau(P_*), u)\neq \emptyset\}\] be set of the cascade(s) with a seed node in $A_{\g}(\tau(P_*), u)$. Let $C_{\g}(\tau(P_*), u) \in \C_{\g}(\tau(P_*),u)$ be the cascade such that 
\[C_{\g}(\tau(P_*), u)= \argmax_{C \in \C_{\g}(\tau(P_*),u)} \pri(C).\]

\begin{lemma}
\label{lemma: key_homo}
Under the homogeneous cascade priority, each node $u \in V$ will be $C_{\g}(\tau(P_*), u)$-active in $\g$ under $\tau(P_*)$.
\end{lemma}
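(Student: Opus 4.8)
The plan is to reduce to a fixed realization $\g$ exactly as in Eq.~(\ref{eq: expansion}) and argue deterministically, then prove the claim by induction on $d(u):=\dis_{\g}(\tau(\M)\cup\tau(\onm),u)$, the distance from $u$ to the nearest seed of any cascade. (If $d(u)=\infty$ then $u$ stays $\emptyset$-active; I would restrict the statement to reachable nodes, or equivalently treat $\emptyset$ as a cascade of lowest priority, which does not change the value of $f^{\g}$.) Write $c(u):=C_{\g}(\tau(P_*),u)$ for brevity.

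First I would isolate an elementary \emph{timing fact}: because every edge of $\g$ has propagation probability $1$, each $u$ with $d(u)<\infty$ is activated at exactly time step $d(u)$ --- no later, by forwarding activation along a shortest path from the nearest seed, and no earlier, since that would require a seed-to-$u$ chain of length $<d(u)$. Two consequences follow for a node $v$ with $d(v)=d\ge 1$: every in-neighbor $w$ of $v$ satisfies $d(w)\ge d-1$ (otherwise $d(v)\le d-1$), and the set of in-neighbors that activate $v$ at time $d$ is exactly $A:=\{w:(w,v)\in\g,\ d(w)=d-1\}$, which is nonempty because a shortest seed-to-$v$ path of length $d$ has a penultimate vertex in $A$.

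For the induction, the base case $d(u)=0$ is immediate: $u$ is a seed of precisely the cascades in $\C_{\g}(\tau(P_*),u)$, and the time-$0$ rule, under homogeneity, makes $u$ active with the highest-priority one, namely $c(u)$. For $d(v)=d\ge 1$: by the timing fact $v$ is activated at time $d$, and by the time-$d$ rule together with homogeneity it becomes $C$-active where $C$ is the highest-priority cascade in $\{\pi_{d-1}(w):w\in A\}$; by the inductive hypothesis $\pi_{d-1}(w)=c(w)$ for each $w\in A$. So everything reduces to showing $\max_{w\in A}\pri(c(w))=\pri(c(v))$, after which injectivity of $\pri$ forces the maximizing $w$ to satisfy $c(w)=c(v)$, i.e.\ $v$ becomes $c(v)$-active. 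The inequality ``$\le$'' holds because a seed node of $c(w)$ lying in $A_{\g}(\tau(P_*),w)$ is at distance $d-1$ from $w$, hence at distance exactly $d$ from $v$ (triangle inequality with $d(v)=d$), so it lies in $A_{\g}(\tau(P_*),v)$; thus $c(w)\in\C_{\g}(\tau(P_*),v)$ and $\pri(c(w))\le\pri(c(v))$. For ``$\ge$'', I would take a seed $s^{\ast}$ of $c(v)$ with $\dis_{\g}(s^{\ast},v)=d$, a shortest $s^{\ast}$--$v$ path, and its penultimate vertex $w^{\ast}$; then $w^{\ast}\in A$ and $\dis_{\g}(s^{\ast},w^{\ast})=d-1=d(w^{\ast})$, so $c(v)\in\C_{\g}(\tau(P_*),w^{\ast})$ and hence $\pri(c(w^{\ast}))\ge\pri(c(v))$.

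I expect the inductive step to be the only real obstacle, and within it the ``$\ge$'' direction: one must exhibit an in-neighbor whose \emph{own} winning cascade is exactly $c(v)$, not merely a cascade that competes at $v$. The resolution is to anchor a shortest path at a seed of $c(v)$ and then re-apply the ``$\le$'' bound at the penultimate vertex $w^{\ast}$, upgrading $\pri(c(w^{\ast}))\ge\pri(c(v))$ to $c(w^{\ast})=c(v)$. The reduction to $\g$, the timing fact, and the base case are all routine.
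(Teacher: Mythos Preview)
Your proof is correct and takes a genuinely different inductive route from the paper's. The paper fixes the target node $u$, picks a seed $a\in\tau(c(u))\cap A_{\g}(\tau(P_*),u)$ together with a shortest $a$--$u$ path $(v_0,\dots,v_l)$, and inducts \emph{along that single path} to show each $v_i$ is $c(u)$-active at time $i$. You instead run a \emph{global} induction on $d(\cdot)$ over all nodes simultaneously. The difference matters precisely at the competition step: to conclude that $v_{i+1}$ becomes $c(u)$-active, one must rule out any off-path in-neighbor $w$ carrying a cascade of higher priority. The paper dispatches this in one sentence (``only the cascades with seed nodes in $A_{\g}(\tau(P_*),u)$ are able to activate $v_{i+1}$''), which is true via a back-tracing argument but is not spelled out; your scheme handles it transparently because the inductive hypothesis already gives $\pi_{d-1}(w)=c(w)$ for every activating in-neighbor $w$, and your ``$\le$'' step then pins $c(w)\in\C_{\g}(\tau(P_*),v)$. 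What the paper's path induction buys is brevity and structural parallelism with the proofs of Lemmas~1 and~3; what your distance induction buys is that the in-neighbor competition is made fully explicit, and you also correctly flag the unreachable case $d(u)=\infty$, which the paper's statement silently ignores.
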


\begin{proof}
Let $a$ be an arbitrary node in $\tau(C_{\g}(\tau(P_*), u)) \cap A_{\g}(\tau(P_*), u)$,  and $(v_0=a,...,v_{l}=u)$ be the shortest path from $a$ to $u$ where $v_0=a$ and $v_{l}=u$.\footnote{According to the definition, $\tau(C_{\g}(\tau(P_*), u)) \cap A_{\g}(\tau(P_*), u)$ cannot be empty.} We prove that $v_0,...,v_{l}$ are all $C_{\g}(\tau(P_*), u)$-active and $v_i$ will be activated at time step $i$. We prove this by induction.

\paragraph{Basic step: } Because $a$ belongs to $A_{\g}(\tau(P_*), u)$, any cascade selecting $a$ as a seed node must in $\C_{\g}(\tau(P_*),u)$. Since $C_{\g}(\tau(P_*), u)$ has the highest priority among $\C_{\g}(\tau(P_*),u)$, $a=v_0$ will be $C_{\g}(\tau(P_*), u)$-active at time step 0. 

\paragraph{Inductive step:} Suppose that, for some $0 < i < {l}$, $v_0,...,v_i$ are all $C_{\g}(\tau(P_*), u)$-active and $v_i$ is activated at time step $i$. Now we prove that $v_{i+1}$ will be $C_{\g}(\tau(P_*), u)$-active at time step $i+1$. Because $i+1$ is the length of the shortest path from any seed node to $v_{i+1}$, $v_{i+1}$ cannot be activated before time step $i+1$. Furthermore, $v_i$ will activate $v_{i+1}$ at time step $i+1$, so $v_{i+1}$ will be activated at time step $i+1$ by $v_i$ or other in-neighbors. Note that only the cascades with seed nodes in $A_{\g}(\tau(P_*), u)$ are able to activate $v_{i+1}$ at time step $i+1$. Because $v_i$ is $C_{\g}(\tau(P_*), u)$-active and $C_{\g}(\tau(P_*), u)$ has the highest priority among $A_{\g}(\tau(P_*), u)$, $v_{i+1}$ will be activated by $v_i$ at time step $i+1$ and it will be $C_{\g}(\tau(P_*), u)$-active.
\end{proof}

According to Lemma \ref{lemma: key_homo}, a node $u$ is $\onm$-active in $\g$ under $\tau(P_*)$ if and only if $C_{\g}(\tau(P_*),u) \in \P \cup  \{P_*\}$. Therefore, under the homogeneous cascade setting, $f^{\g}(\tau(P_*))$ can be represented as $f^{\g}(\tau(P_*))=\sum_{u \in V} f^{\g}(\tau(P_*),u),$ where $f^{\g}(\tau(P_*),u)$ is defined as
\[
 f^{\g}(\tau(P_*),u) =
  \begin{cases}
  1 &  \hspace{0mm} \hspace{-0.5mm} \text{if $C_{\g}(\tau(P_*), u) \in \P \cup  \{P_*\}$} \\
  0 & \hspace{0mm} \hspace{-0.5mm} \text{else, $C_{\g}(\tau(P_*), u) \in \M$} 
  \end{cases}.
\]

Now it suffices to prove that $f^{\g}(S,u)$ is monotone nondecreasing and submodular.
\begin{lemma}
\label{lemma: homo_mono}
$f^{\g}(\tau(P_*), u)$ is monotone nondecreasing for each $\g \in \G$ and $u \in V$.
\end{lemma}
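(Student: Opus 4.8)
The plan is to read everything off Lemma~\ref{lemma: key_homo}, which tells us that $u$ ends up $C_{\g}(\tau(P_*),u)$-active, so $f^{\g}(\tau(P_*),u)\in\{0,1\}$ and it equals $1$ precisely when the highest-priority cascade among those owning a seed at minimal distance to $u$ is positive. Since this function is $0/1$-valued, monotonicity in $\tau(P_*)$ amounts to a single implication: for $S_1\subseteq S_2\subseteq V^{*}$, if $f^{\g}(S_1,u)=1$ then $f^{\g}(S_2,u)=1$. First I would fix $\g$ and $u$ and write $N_i=\tau(\M)\cup\bigl(\bigcup_{P\in\P}\tau(P)\bigr)\cup S_i$ for $i=1,2$, so that $N_1\subseteq N_2$; set $d_i=\dis_{\g}(N_i,u)$, and note $d_2\le d_1$ since enlarging the positive seed set can only shorten the closest-seed distance to $u$.

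The single structural fact I need is: every node in $A_{\g}(S_2,u)\setminus A_{\g}(S_1,u)$ is a seed of $P_{*}$ and of no other cascade. Indeed, such a node $v$ satisfies $\dis_{\g}(v,u)=d_2$; if $v\in N_1$, this would force $d_1\le d_2$, hence $d_1=d_2$ and $v\in A_{\g}(S_1,u)$, a contradiction. So $v\in N_2\setminus N_1\subseteq S_2\setminus S_1$, i.e.\ $v$ is a seed of $P_{*}$; and $v\notin N_1$ rules out $v$ being a seed of any $M\in\M$ or any $P\in\P$, since all their seeds lie in $N_1$.

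Then I would split on $d_2$ versus $d_1$. If $d_2<d_1$, no node of $N_1$ is at distance $d_2$ from $u$, so $A_{\g}(S_2,u)$ consists entirely of these ``new'' nodes; hence $\C_{\g}(S_2,u)=\{P_{*}\}$, $C_{\g}(S_2,u)=P_{*}\in\P\cup\{P_{*}\}$, and $f^{\g}(S_2,u)=1$. If $d_2=d_1$, then $A_{\g}(S_1,u)\subseteq A_{\g}(S_2,u)$, and the structural fact yields $\C_{\g}(S_1,u)\subseteq\C_{\g}(S_2,u)\subseteq\C_{\g}(S_1,u)\cup\{P_{*}\}$, so in particular $\C_{\g}(S_1,u)\cap\M=\C_{\g}(S_2,u)\cap\M$. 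Now $C^{*}:=C_{\g}(S_1,u)\in\P\cup\{P_{*}\}$ lies in $\C_{\g}(S_1,u)\subseteq\C_{\g}(S_2,u)$, and $\pri(C^{*})>\pri(M)$ for every misinformation cascade $M$ in the common $\M$-part of the two sets; hence the priority-maximizer $C_{\g}(S_2,u)$ has priority $\ge\pri(C^{*})$, so it cannot be a misinformation cascade, and $f^{\g}(S_2,u)=1$.

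I do not anticipate a genuine obstacle here: the content is entirely the observation that newly-closest seeds must belong to $P_{*}$, together with bookkeeping of which cascades enter or leave $\C_{\g}$, after which the priority comparison is immediate. The only point to treat separately is the degenerate case $N_1=\emptyset$ (no seed reaches $u$), where $u$ is $\emptyset$-active, hence $\onm$-active, so $f^{\g}(\cdot,u)\equiv1$ and monotonicity holds vacuously.
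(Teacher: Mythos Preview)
Your proposal is correct and follows essentially the same route as the paper: both arguments observe that enlarging $\tau(P_*)$ can only insert $P_*$ into $\C_{\g}(\cdot,u)$ (possibly making it the sole element) while leaving the misinformation part of $\C_{\g}$ unchanged, so the priority-maximizer cannot switch from a positive cascade to a misinformation cascade. The paper phrases this as three cases arising from adding a single node, whereas you split on $d_2<d_1$ versus $d_2=d_1$ for general $S_1\subseteq S_2$, but the content is identical; one small wording slip is that your degenerate case should be ``no seed reaches $u$ in $\g$'' (i.e., $d_1=\infty$) rather than $N_1=\emptyset$.
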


\begin{proof}
When a new node $x$ is added to $\tau(P_*)$, $\dis_{\g}(\tau(C),u)$ remains unchanged for $C \neq P_*$, and $\dis_{\g}(\tau(P_*),u)$ either decreases or remains unchanged. Therefore, after adding a new node to $\tau(P_*)$, there are three possible cases. First, $P_*$ becomes a new cascade in $\C_{\g}(\tau(P_*),u)$. Second, $P_*$ is the only cascade in $\C_{\g}(\tau(P_*),u)$. Third, $\C_{\g}(\tau(P_*),u)$ remains unchanged. In either of the three cases, $C_{\g}(\tau(P_*),u)$ cannot change to a misinformation cascade from a positive cascade. Therefore, $f^{\g}(\tau(P_*),u)$ is monotone nondecreasing. 
\end{proof}

\begin{lemma}
$f^{\g}(\tau(P_*),u)$ is submodular for each $\g \in \G$ and $u \in V$.
\end{lemma}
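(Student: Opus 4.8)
The plan is to follow the same template used above for the M- and P-dominant cases. Since $f^{\g}(\tau(P_*),u)$ is $\{0,1\}$-valued and, by Lemma \ref{lemma: homo_mono}, monotone nondecreasing, submodularity is equivalent to the diminishing-returns inequality: for all $S_1 \subseteq S_2 \subseteq V^*$ and $x \notin S_2$,
\[
f^{\g}(S_1\cup\{x\},u) - f^{\g}(S_1,u) \ \geq\ f^{\g}(S_2\cup\{x\},u) - f^{\g}(S_2,u),
\]
and, since both marginals lie in $\{0,1\}$, it is enough to show that whenever the right-hand marginal equals $1$ so does the left-hand one. Monotonicity already forces $f^{\g}(S_1,u) \leq f^{\g}(S_2,u) = 0$, so what remains is to prove $f^{\g}(S_1\cup\{x\},u)=1$ from the hypotheses $f^{\g}(S_2\cup\{x\},u)=1$ and $f^{\g}(S_2,u)=0$.

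Next I would record how adding a single node $x$ to a positive seed set $S$ affects the quantities from Lemma \ref{lemma: key_homo}. Writing $\tau(\onm)_S = (\cup_{P\in\P}\tau(P))\cup S$ and $d^*(S) = \dis_{\g}(\tau(\M)\cup\tau(\onm)_S, u)$, there are exactly three regimes according to how $\dis_{\g}(x,u)$ compares with $d^*(S)$: if $\dis_{\g}(x,u) > d^*(S)$, then $x$ is irrelevant and $A_{\g}$, $\C_{\g}$, $C_{\g}$ are unchanged; if $\dis_{\g}(x,u) = d^*(S)$, then $A_{\g}(S\cup\{x\},u) = A_{\g}(S,u)\cup\{x\}$ and hence $\C_{\g}(S\cup\{x\},u) = \C_{\g}(S,u)\cup\{P_*\}$; if $\dis_{\g}(x,u) < d^*(S)$, then no seed of any cascade lies strictly nearer $u$ than $d^*(S)$, so the new nearest-seed set collapses to exactly $\{x\}$, giving $\C_{\g}(S\cup\{x\},u)=\{P_*\}$, $C_{\g}(S\cup\{x\},u)=P_*$, and $f^{\g}(S\cup\{x\},u)=1$ outright. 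Because $f^{\g}(S_2\cup\{x\},u)\ne f^{\g}(S_2,u)$, the first regime is impossible for $S_2$, so $\dis_{\g}(x,u)\le d^*(S_2)$; and since $S_1\subseteq S_2$ gives $d^*(S_1)\ge d^*(S_2)$, we also have $\dis_{\g}(x,u)\le d^*(S_1)$.

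The remaining argument splits on the regime at $S_2$. If $\dis_{\g}(x,u)<d^*(S_2)$, then $\dis_{\g}(x,u)<d^*(S_1)$ as well, so the third regime applies to $S_1$ and $f^{\g}(S_1\cup\{x\},u)=1$. If instead $\dis_{\g}(x,u)=d^*(S_2)$: from $f^{\g}(S_2,u)=0$ the old priority-maximum $C_{\g}(S_2,u)$ lies in $\M$, and from $f^{\g}(S_2\cup\{x\},u)=1$ together with $\C_{\g}(S_2\cup\{x\},u)=\C_{\g}(S_2,u)\cup\{P_*\}$ the new maximum must be $P_*$ itself, i.e. $\pri(P_*)>\pri(C)$ for every $C\in\C_{\g}(S_2,u)$. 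Now if $\dis_{\g}(x,u)<d^*(S_1)$ we are back in the third regime for $S_1$ and done; otherwise $\dis_{\g}(x,u)=d^*(S_1)$, which forces $d^*(S_1)=d^*(S_2)$. Since $A_{\g}(S_1,u)$ and $A_{\g}(S_2,u)$ are then cut out by the same distance and $\tau(\onm)_{S_1}\subseteq\tau(\onm)_{S_2}$, we get $A_{\g}(S_1,u)\subseteq A_{\g}(S_2,u)$ and hence $\C_{\g}(S_1,u)\subseteq\C_{\g}(S_2,u)$; consequently $\pri(P_*)>\pri(C)$ for every $C\in\C_{\g}(S_1,u)$, and since the second regime applies to $S_1$ we conclude $C_{\g}(S_1\cup\{x\},u)=P_*$, so $f^{\g}(S_1\cup\{x\},u)=1$.

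The main obstacle — the one point where the homogeneous setting genuinely departs from the dominant ones — is the tie regime $\dis_{\g}(x,u)=d^*(S_2)$ combined with $d^*(S_1)=d^*(S_2)$: there comparing distances is not enough, and one must compare $\pri(P_*)$ against a whole set of competing cascades. The crux is the inclusion $\C_{\g}(S_1,u)\subseteq\C_{\g}(S_2,u)$, which lets the priority dominance established at $S_2$ descend to $S_1$. I would also be careful, in the regime $\dis_{\g}(x,u)<d^*(S)$, to justify cleanly that the new nearest-seed set is exactly $\{x\}$ — this is what rules out any higher-priority misinformation cascade from surviving, and is what makes that regime yield $f^{\g}=1$ unconditionally.
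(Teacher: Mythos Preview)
Your proposal is correct and follows essentially the same route as the paper: reduce submodularity to the diminishing-returns inequality for a monotone $\{0,1\}$-valued function, use the three regimes from the proof of Lemma~\ref{lemma: homo_mono} to conclude that $C_{\g}(S_2\cup\{x\},u)=P_*$, and then transfer this to $S_1$. In fact you spell out more than the paper does: the paper jumps directly from ``$C_{\g}(S_2\cup\{x\},u)=P_*$'' to ``therefore $C_{\g}(S_1\cup\{x\},u)=P_*$'', whereas you explicitly handle the tie sub-case via the inclusion $\C_{\g}(S_1,u)\subseteq\C_{\g}(S_2,u)$, which is exactly the missing justification.
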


\begin{proof}
Again, it suffices to prove that for each $\g$, $S_1\subseteq S_2 \subseteq V^*$ and $x \notin S_2$, $$f^{\g}(S_1\cup\{x\},u) -f^{\g}((S_1,u)) \geq f^{\g}(S_2\cup\{x\},u) -f^{\g}(S_2,u).$$ Furthermore, it suffices to show that $f^{\g}(S_1\cup\{x\},u) -f^{\g}(S_1,u)=1$ whenever $f^{\g}(S_2\cup\{x\},u) -f^{\g}(S_2,u)=1$. If  $f^{\g}(S_2\cup\{x\},u) -f^{\g}(S_2,u)=1$, then $f^{\g}(S_2\cup\{x\},u)=1$ and $f^{\g}(S_2,u)=0$. Because $f^{\g}(S_2,u)=0$, we have $C_{\g}(S_2, u) \in \M$. Since $S_1\subseteq S_2$, by the discussion in the proof of Lemma \ref{lemma: homo_mono}, $C_{\g}(S_1, u)$ also belongs $\M$ and therefore $f^{\g}(S_1,u)=0$. Because $f^{\g}(S_2\cup\{x\},u)=1$ and $f^{\g}(S_2,u)=0$, we have $C_{\g}(S_2, u) \in \M$ and $C_{\g}(S_2\cup\{x\}, u) \in \P$. According to the three cases in the proof of Lemma \ref{lemma: homo_mono}, $C_{\g}(S_2\cup\{x\}, u)$ must be $P_*$. Therefore, $C_{\g}(S_1\cup\{x\}, u)$ is also $P_*$ and $f^{\g}(S_1\cup\{x\},u)=1$. So $f^{\g}(S_1\cup\{x\},u) -f^{\g}(S_1,u)$ is also equal to 1. 
\end{proof}

\subsection{Proof of Theorem 4}
The proof of Theorem 4 requires the preliminaries given in the last subsection. For each $\g \in \G$ and $u \in V$, let us define
\[
 {f}^{\g}(\tau(P_*),u) =
  \begin{cases}
  1 &  \hspace{0mm} \hspace{-0.5mm} \text{if $u$ is $\onm$-active under  $\tau(P_*)$ in $\g$ with cascade priority ${\pri}_v$ for each $v$} \\
  0 & \hspace{0mm} \hspace{-0.5mm} \text{else} 
  \end{cases},
\]
\[
 \overline{f}^{\g}(\tau(P_*),u) =
  \begin{cases}
  1 &  \hspace{0mm} \hspace{-0.5mm} \text{if $u$ is $\onm$-active under  $\tau(P_*)$ in $\g$ with cascade priority $\overline{\pri}_v$ for each $v$} \\
  0 & \hspace{0mm} \hspace{-0.5mm} \text{else} 
  \end{cases},
\]
and
\[
 \underline{f}^{\g}(\tau(P_*),u) =
  \begin{cases}
  1 &  \hspace{0mm} \hspace{-0.5mm} \text{if $u$ is $\onm$-active under  $\tau(P_*)$ in $\g$ with cascade priority $\underline{\pri}_v$ for each $v$} \\
  0 & \hspace{0mm} \hspace{-0.5mm} \text{else} 
  \end{cases}.
\]
According to Eq. (\ref{eq: expansion}), it suffices to show that  $\overline{f}^{\g}(\tau(P_*),u) \geq {f}^{\g}(\tau(P_*),u) \geq \underline{f}^{\g}(\tau(P_*),u)$. 

When $\underline{f}^{\g}(\tau(P_*),u)=1$, by Lemma 3, we have $\dis_{\g}(\tau(\onm),u)<\dis_{\g}(\tau(\M),u)$. According to Lemma 1, $u$ will be $\onm$-active under $F_v$ and thus ${f}^{\g}(\tau(P_*),u)$ must be 1. Therefore, ${f}^{\g}(\tau(P_*),u) \geq \underline{f}^{\g}(\tau(P_*),u)$.

When $\overline{f}^{\g}(\tau(P_*),u)=0$, by Lemma 5, we have $\dis_{\g}(\tau(\onm),u)>\dis_{\g}(\tau(\M),u)$. According to Lemma 2, $u$ will be $\M$-active under $F_v$ and thus ${f}^{\g}(\tau(P_*),u)$ must be 0. Therefore, $\overline{f}^{\g}(\tau(P_*),u) \geq {f}^{\g}(\tau(P_*),u)$.

\section{Experiments}

The statistics of the datasets are listed in Table \ref{table: dataset}

\begin{table}[h]
  \caption{Datasets}
  \label{table: dataset}
  \centering
  \begin{tabular}{lll}
    \toprule
    Dataset     & Node     & Edge \\
    \midrule
    Higgs-10K & 10,000  & 22,482    \\
    Higgs-100K     & 100,000 & 193,484      \\
    HepPh     & 34,000       & 421,578  \\
    \bottomrule
  \end{tabular}
\end{table}

\subsection{Experimental setting}
For the case of five cascades, we deploy two existing misinformation cascades and two existing positive cascades. For the case of ten cascades, we deploy four existing misinformation cascades and five existing positive cascades. For each existing cascade, the size of the seed set is set as 20 and the seed nodes are selected from the node with the highest single-node influence. The seed sets of different cascades do not overlap with each other. The budget of $P_*$ is enumerated from $\{1,2,...,20\}$. The cascade priority at each node is assigned randomly, by generating random permutations. The lists of the used random permutations over $\{1, 2, 3\}$,  $\{1,...,5\}$ and $\{1,...,10\}$ are provided in the supplementary material. For each dataset, the list of the nodes ordered by single-node influence is provided in the supplementary material.

\subsection{More experimental results}
The performance of ALG. 1 on $\overline{f}$, $f$ and $\underline{f}$ on each dataset is shown in Figs. \ref{fig: 3_higgs_10K}, \ref{fig: 3_higgs_100K} and \ref{fig: 3_hepph}. As we can see in the figures, maximizing the upper or lower bound may provide a better solution that just maximizing the objective function. Therefore, the sandwich algorithm is effective than the naive greedy algorithm on $f$.

\begin{figure}[H]
\centering
\subfloat[Three cascades]{\label{fig: 3_h_10_3}\includegraphics[width=0.32\textwidth]{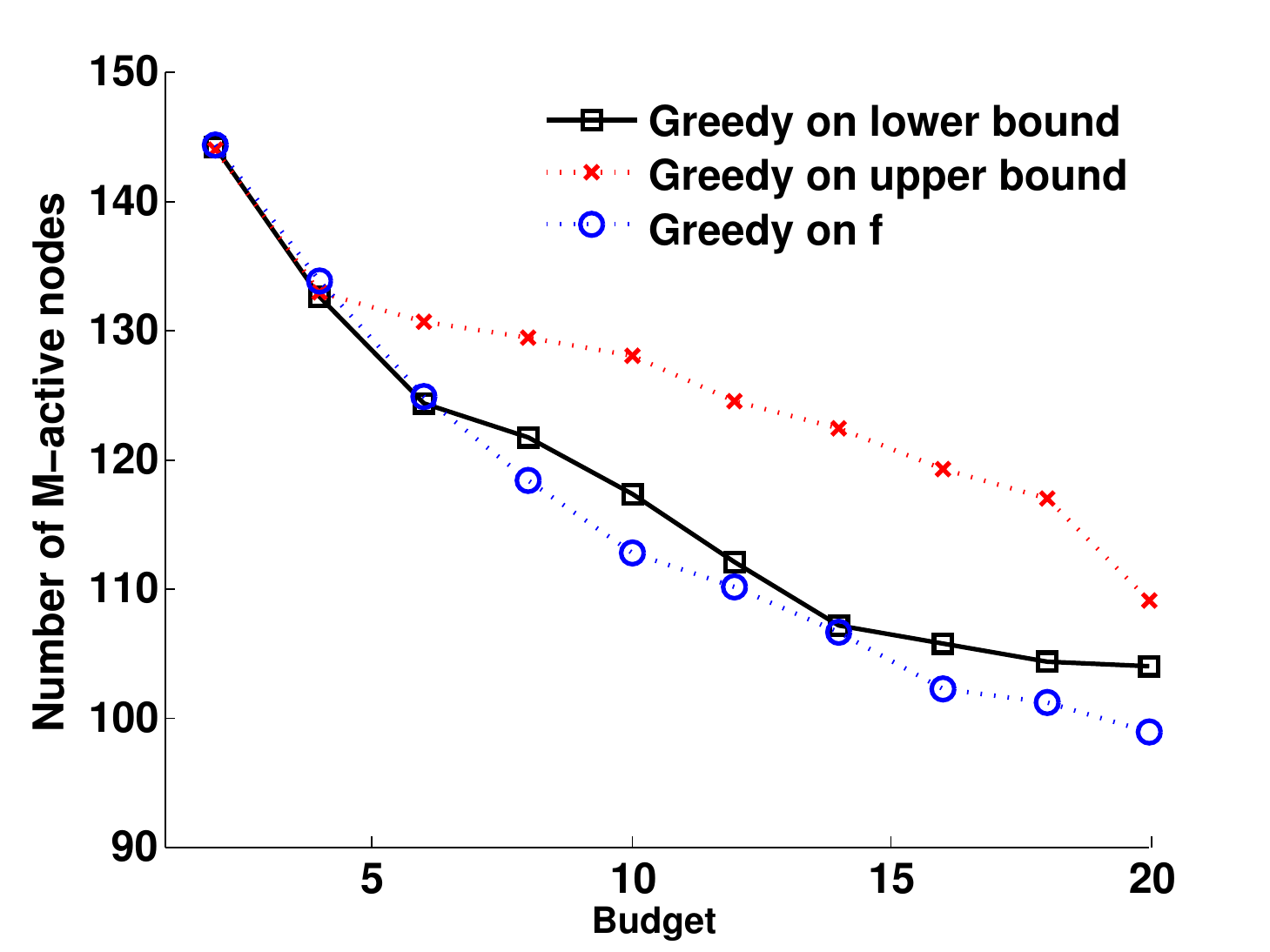}} \hspace{1mm}  
\subfloat[Five cascades ]{\label{fig: 3_h_10_5}\includegraphics[width=0.32\textwidth]{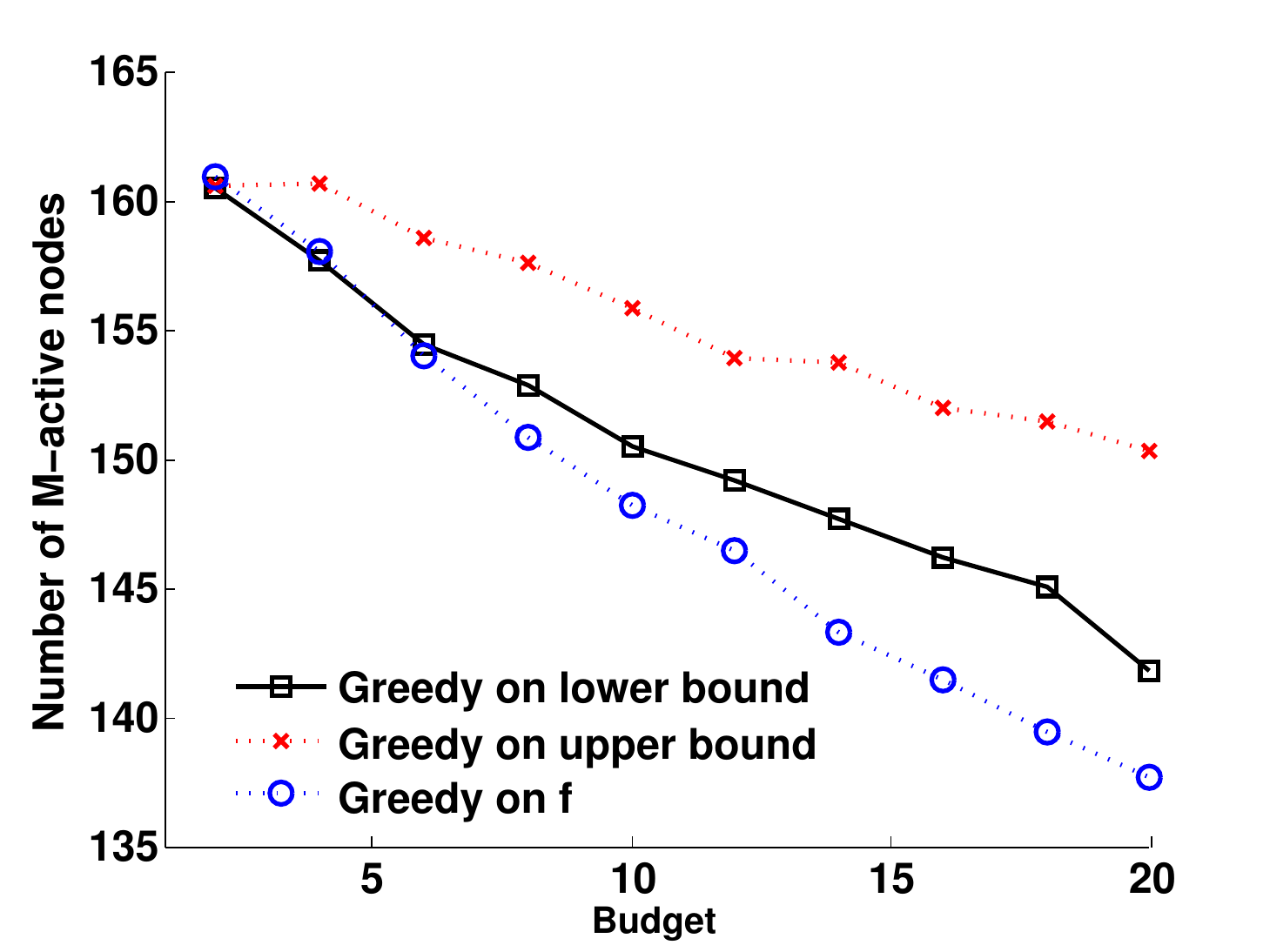}} \hspace{1mm}  
\subfloat[Ten cascades ]{\label{fig: 3_h_10_10}\includegraphics[width=0.32\textwidth]{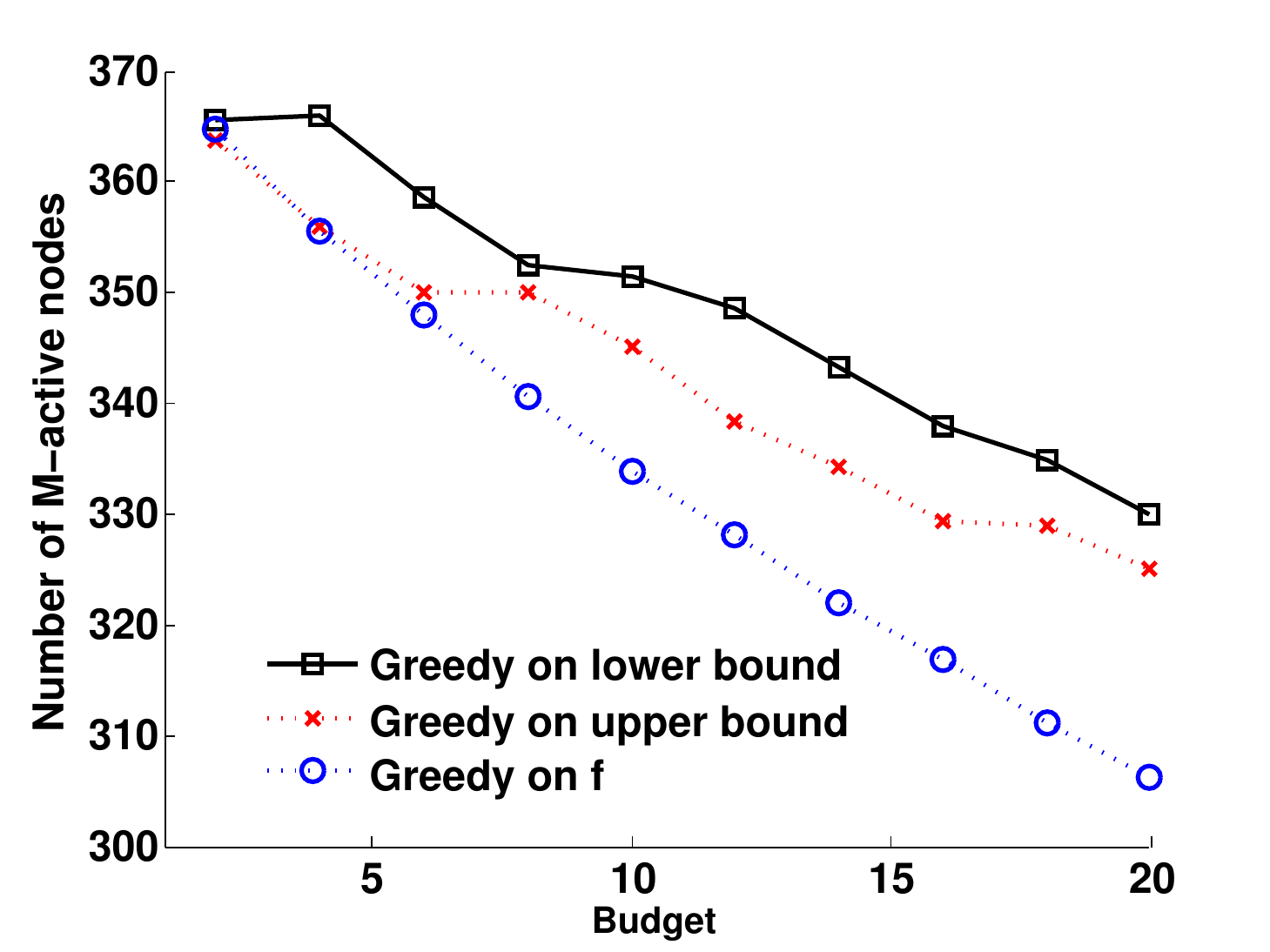}} \hspace{1mm}  
 \hspace{0mm}  
\vspace{-2mm}
\caption{Results on Higgs-10K.} 
\label{fig: 3_higgs_10K}
\vspace{-8mm}
\end{figure}

\begin{figure}[H]
\centering
\subfloat[Three cascades]{\label{fig: 3_h_100_3}\includegraphics[width=0.32\textwidth]{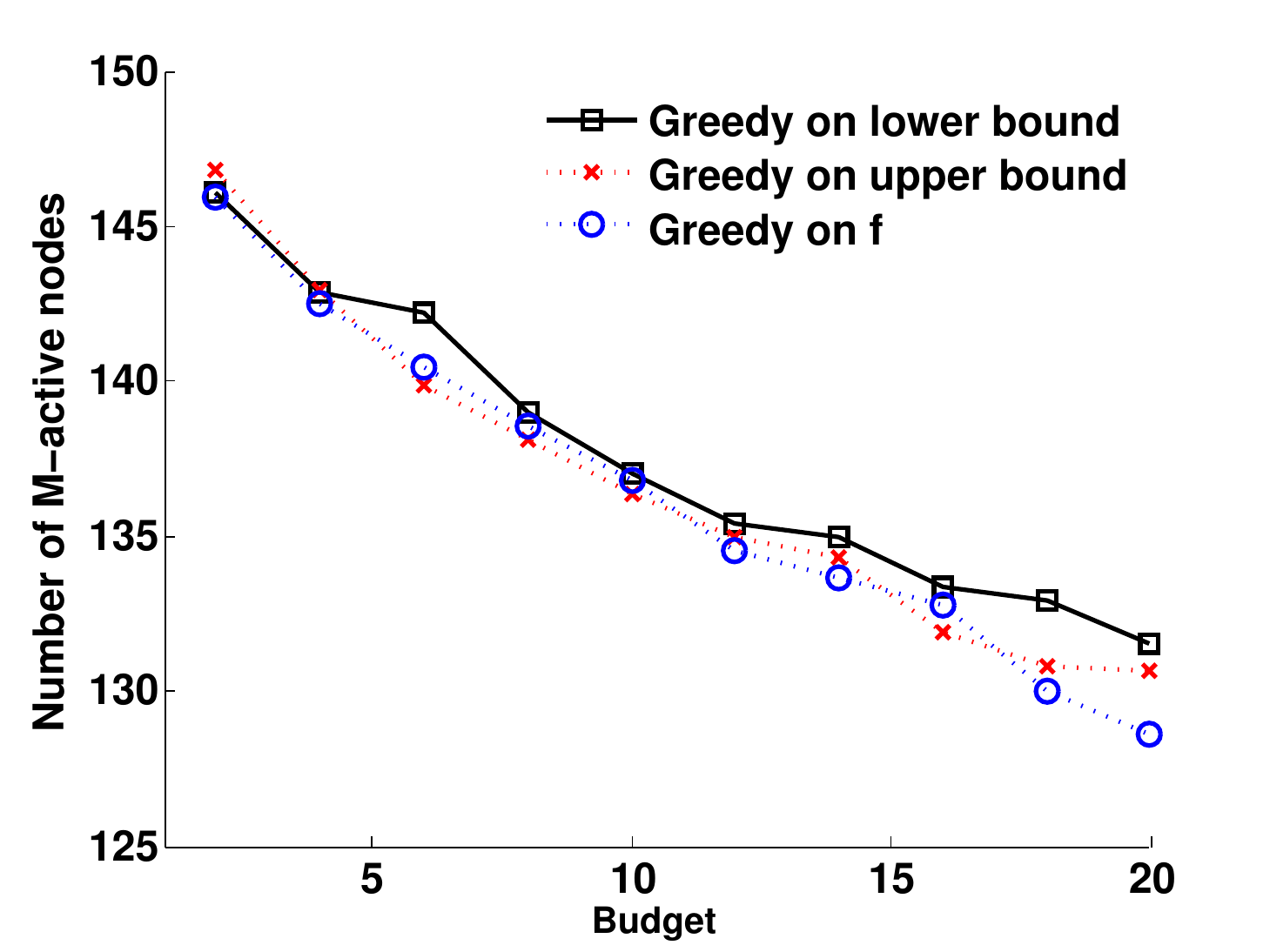}} \hspace{1mm}  
\subfloat[Five cascades ]{\label{fig: 3_h_100_5}\includegraphics[width=0.32\textwidth]{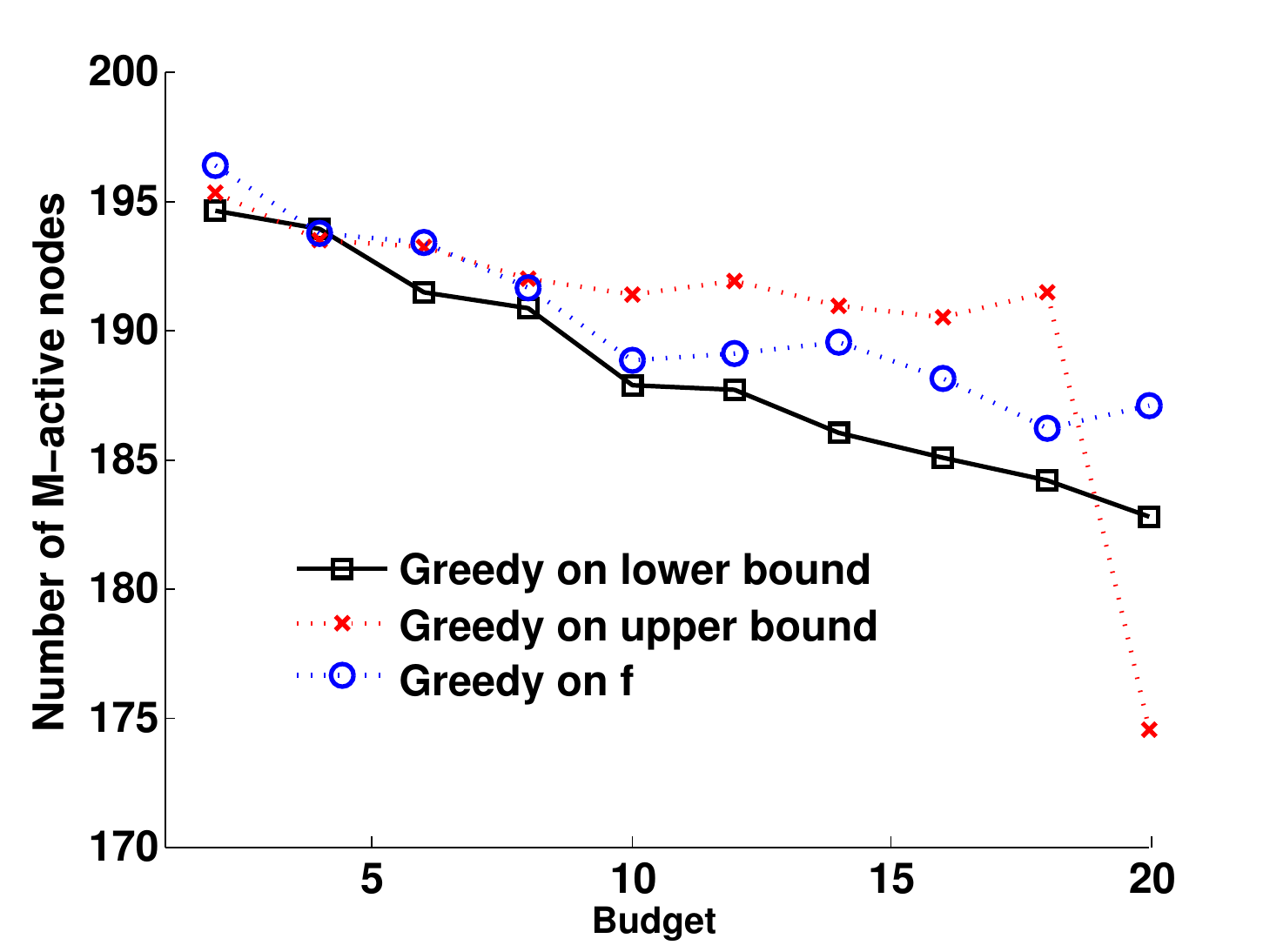}} \hspace{1mm}  
\subfloat[Ten cascades ]{\label{fig: 3_h_100_10}\includegraphics[width=0.32\textwidth]{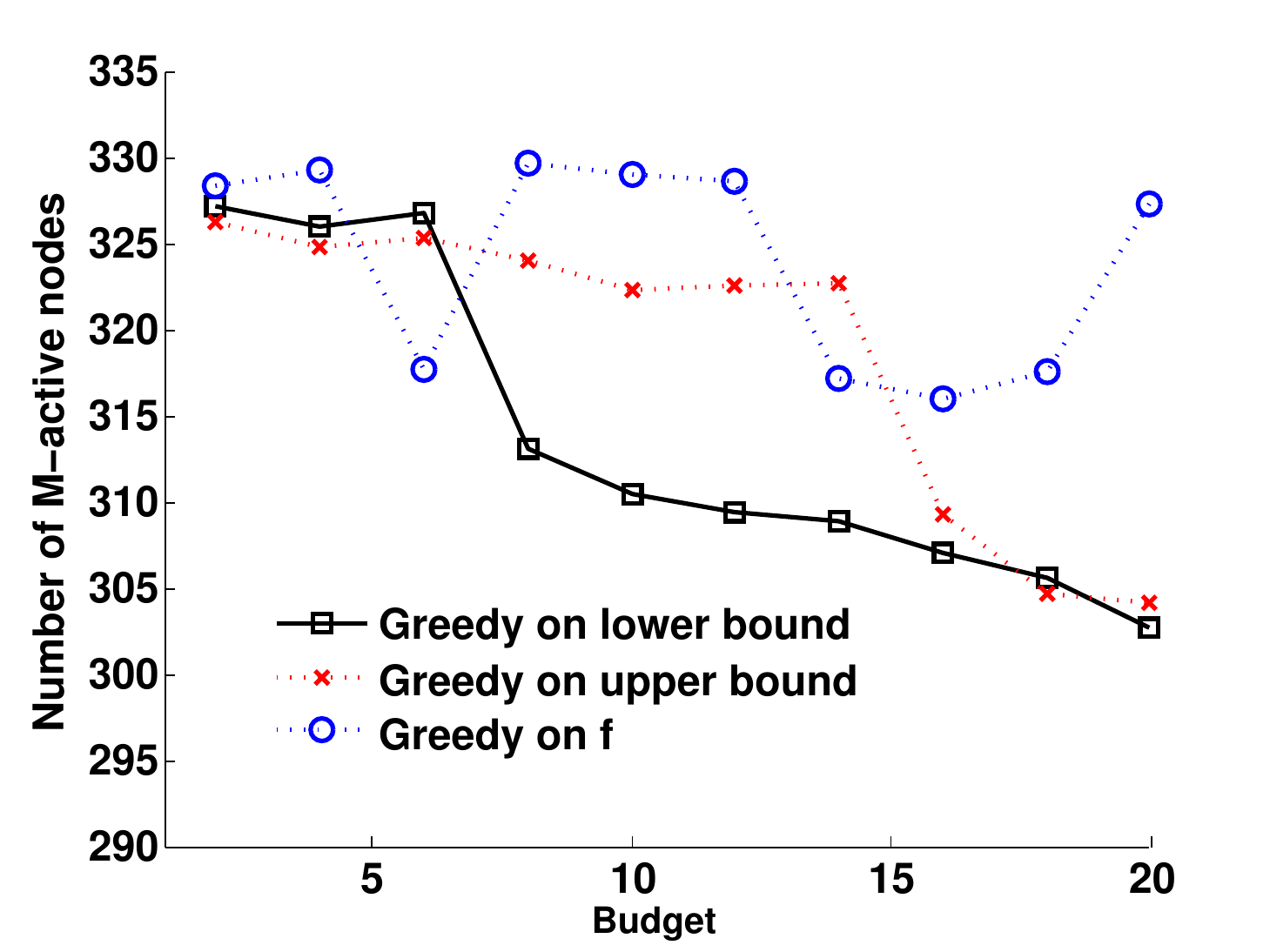}} \hspace{1mm}  
\vspace{-2mm}
\caption{Results on Higgs-100K.} 
\label{fig: 3_higgs_100K}
\vspace{-8mm}
\end{figure}
\begin{figure}[H]
\centering
\subfloat[Three cascades]{\label{fig: 3_hep_3}\includegraphics[width=0.32\textwidth]{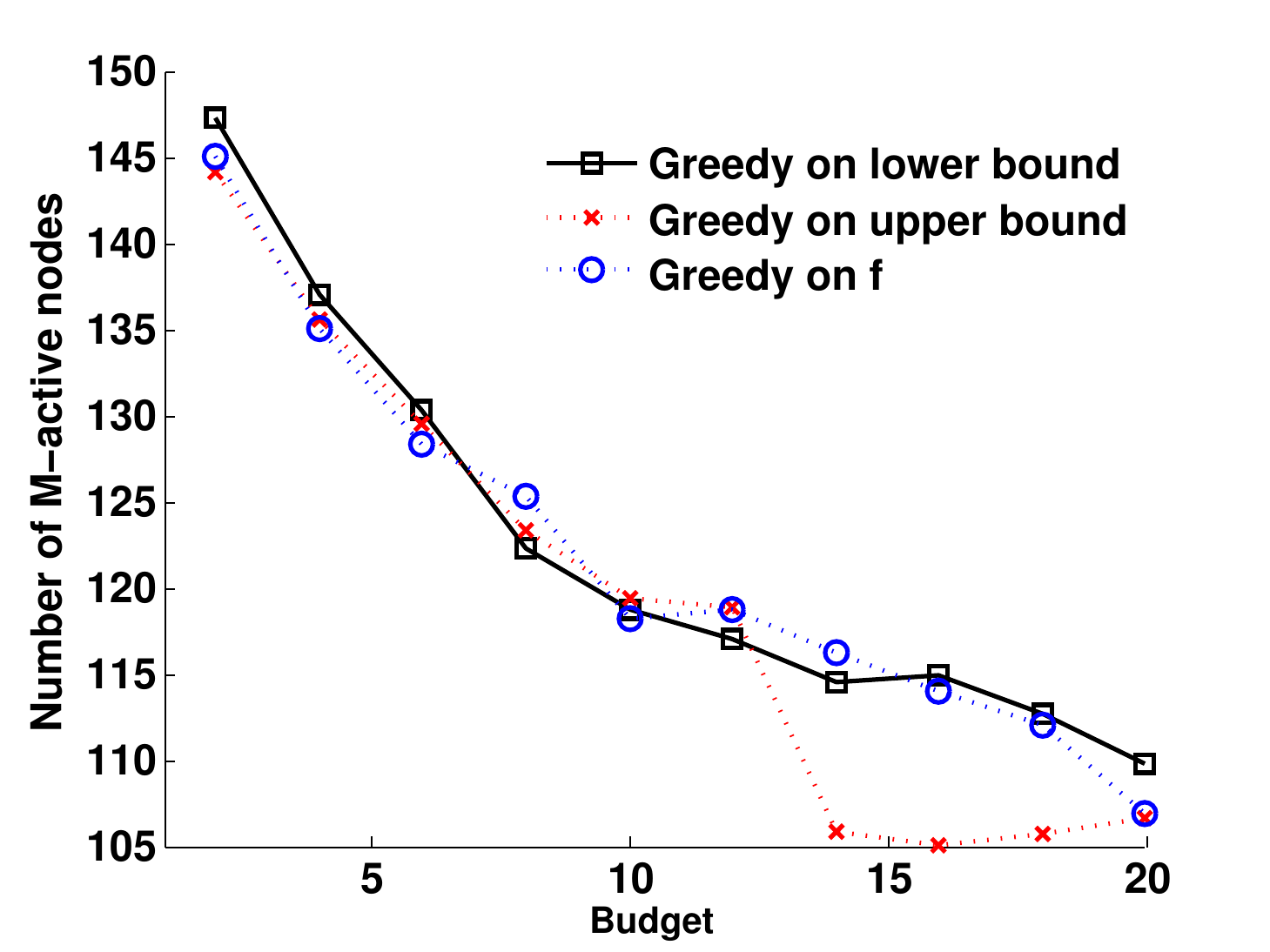}} \hspace{1mm}  
\subfloat[Five cascades ]{\label{fig: 3_hep_5}\includegraphics[width=0.32\textwidth]{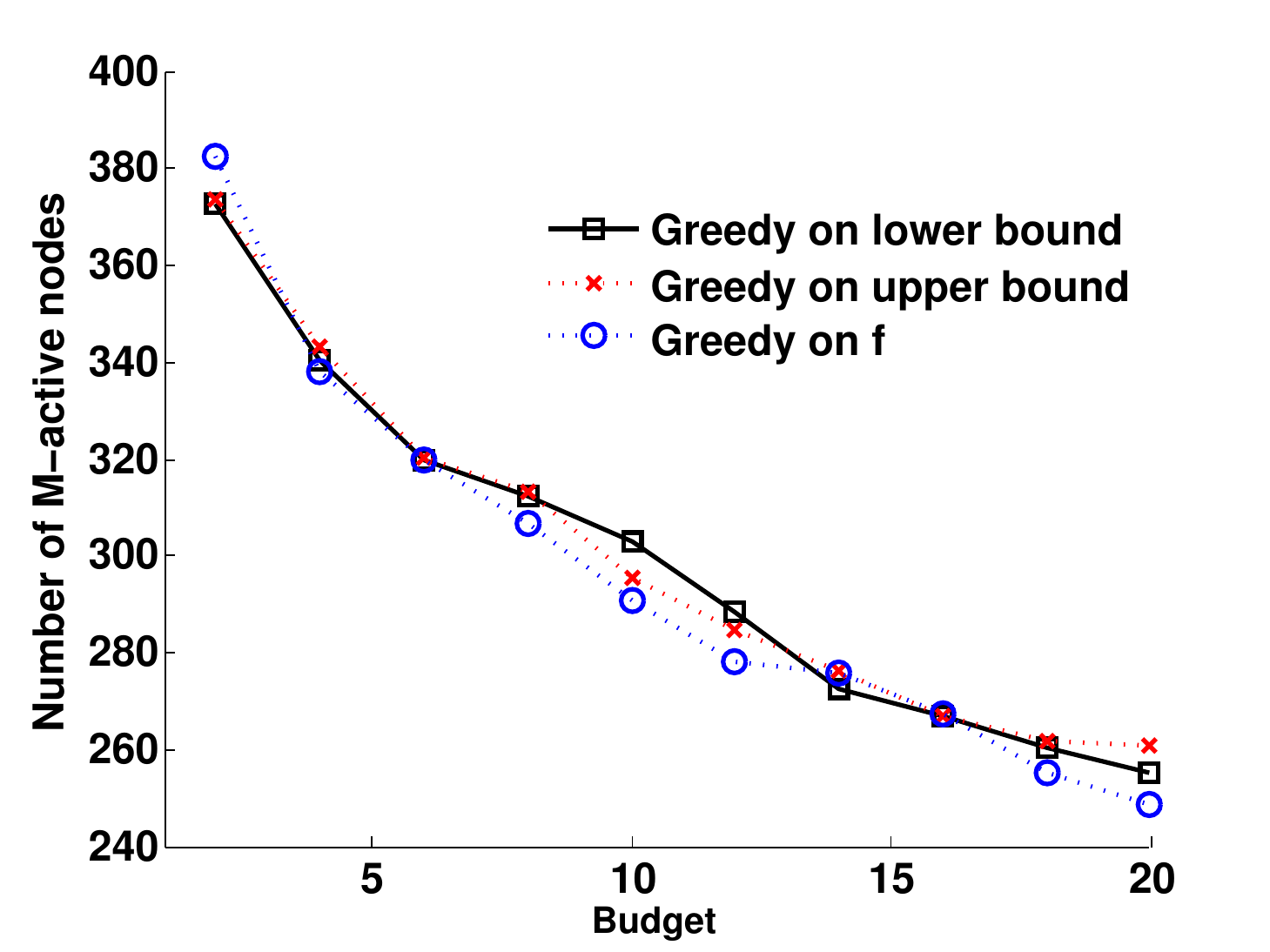}} \hspace{1mm}  
\subfloat[Ten cascades ]{\label{fig: 3_hep_10}\includegraphics[width=0.32\textwidth]{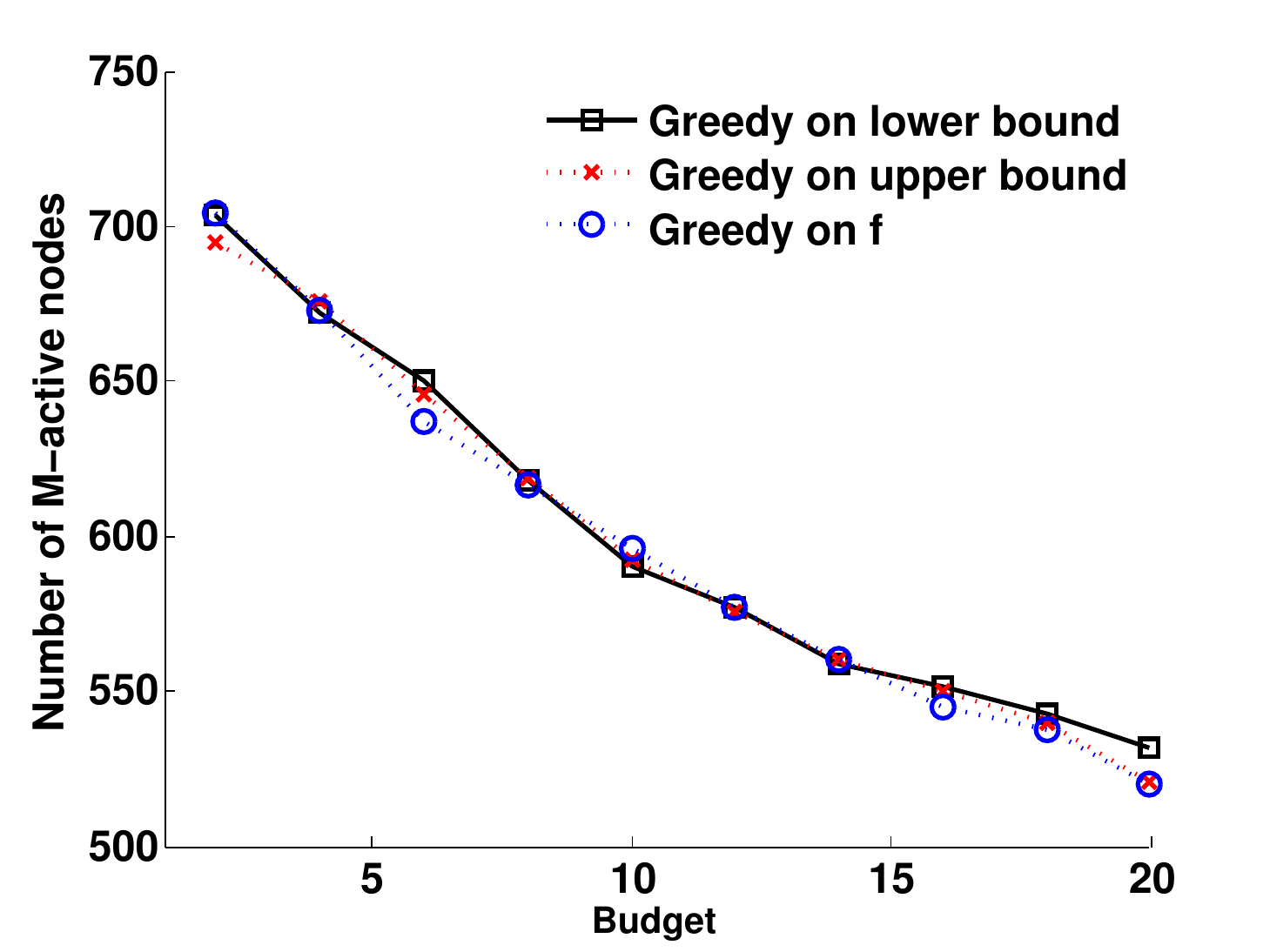}} \hspace{1mm}  
 \hspace{0mm}  
\vspace{-2mm}
\caption{Results on HepPh.} 
\label{fig: 3_hepph}
\vspace{-8mm}
\end{figure}

\end{document}